\newtheorem{theorem}{Theorem}[section]
\newtheorem{proposition}[theorem]{Proposition}
\newtheorem{definition}[theorem]{Definition}
\newtheorem{lemma}[theorem]{Lemma}
\newtheorem{corollary}[theorem]{Corollary}
\newtheorem{remark}[theorem]{Remark}
\newcommand{\qedsymb}{\hfill{\rule{2mm}{2mm}}}
\renewenvironment{proof}[1][]{\begin{trivlist}
\item[\hspace{\labelsep}{\bf\noindent Proof#1:\/}] }{\qedsymb\end{trivlist}}
\def\R{\mathbb{R}}
\def\N{\mathbb{N}}
\newcommand\Expec[2]{{{\bf E}_{#1}\left[ {#2} \right]}}
\newcommand{\true}{\mathsf{True}}
\newcommand{\false}{\mathsf{False}}
\newcommand{\NP}{\mathsf{NP}}
\newcommand{\coNPpoly}{\mathsf{coNP/poly}}
\newcommand{\YES}{\mathsf{YES}}
\newcommand{\eps}{\epsilon}
\renewcommand{\epsilon}{\varepsilon}
\newcommand{\linspan}{\mathop{\mathrm{span}}}
\newcommand{\Fset}{\mathbb{F}}
\newcommand{\Kset}{\mathbb{K}}
\newcommand{\qSAT}{q\textsc{-SAT}}
\newcommand{\qNAE}{q\textsc{-NAE-SAT}}
\newcommand{\Col}{\textnormal{\textsc{Coloring}}}
\newcommand{\qCol}{\ensuremath{q}\textnormal{\textsc{-Coloring}}}
\newcommand{\HCol}{\ensuremath{H}\textnormal{\textsc{-Coloring}}}
\newcommand{\ListHCol}{\textnormal{\textsc{List-}}\ensuremath{H}\textnormal{\textsc{-Coloring}}}
\newcommand{\dODP}{\textnormal{\textsc{-Ortho-Dim}}}
\begin{document}

\title{{\bf Kernelization for $H$-Coloring}}

\author{
Yael Berkman\thanks{The Academic College of Tel Aviv-Yaffo, Tel Aviv 61083, Israel. Research supported by the Israel Science Foundation (grant No.~1218/20).}
\and
Ishay Haviv\footnotemark[1]
}

\date{}

\maketitle

\begin{abstract}
For a fixed graph $H$, the $\HCol$ problem asks whether a given graph admits an edge-preserving function from its vertex set to that of $H$. A seminal theorem of Hell and Ne\v{s}et\v{r}il asserts that the $\HCol$ problem is $\NP$-hard whenever $H$ is loopless and non-bipartite. A result of Jansen and Pieterse implies that for every graph $H$, the $\HCol$ problem parameterized by the vertex cover number $k$ admits a kernel with $O(k^{\Delta(H)})$ vertices and bit-size bounded by $O(k^{\Delta(H)} \cdot \log k)$, where $\Delta(H)$ denotes the maximum degree in $H$. For the case where $H$ is a complete graph on at least three vertices, this kernel size nearly matches conditional lower bounds established by Jansen and Kratsch and by Jansen and Pieterse.

This paper presents new upper and lower bounds on the kernel size of $\HCol$ problems parameterized by the vertex cover number. The upper bounds arise from two kernelization algorithms. The first is purely combinatorial, and its size is governed by a structural quantity of the graph $H$, called the non-adjacency witness number. As applications, we obtain kernels whose size is bounded by a fixed polynomial for natural classes of graphs $H$ with unbounded maximum degree, such as planar graphs and, more broadly, graphs with bounded degeneracy. More strikingly, we show that for almost every graph $H$, the degree of the polynomial that bounds the size of our combinatorial kernel grows only logarithmically in $\Delta(H)$. Our second kernel leverages linear-algebraic tools and involves the notion of faithful independent representations of graphs. It strengthens the general bound from prior work and, among other applications, yields near-optimal kernels for problems concerning the dimension of orthogonal graph representations over finite fields. We complement our kernelization results with conditional lower bounds, thereby nearly settling the kernel complexity of the problem for various target graphs $H$.
\end{abstract}

\section{Introduction}

A central concept in graph theory is that of {\em graph homomorphisms}, namely, functions from the vertex set of one graph $G$ to the vertex set of another graph $H$ that map adjacent vertices in $G$ to adjacent vertices in $H$. If such a function exists, the graph $G$ is said to be {\em $H$-colorable}. For a fixed graph $H$, the computational $\HCol$ problem asks whether a given input graph is $H$-colorable. One may always assume that the target graph $H$ is a core, i.e., a graph admitting no homomorphism to a proper subgraph, as replacing $H$ with a minimal core subgraph does not change the problem. This class of problems occupies a fundamental place in computational graph theory and has been studied extensively for decades (see, e.g.,~\cite{HellNBook}). A milestone in the area is the dichotomy theorem of Hell and Ne\v{s}et\v{r}il~\cite{HellN90}, which asserts that the problem is solvable in polynomial time whenever $H$ has a loop or is bipartite, and is $\NP$-hard otherwise. In recent years, $\HCol$ problems have received attention from multiple perspectives, such as algorithmic design~\cite{FominHK07,Wahlstrom11,Rzazewski14}, computational lower bounds~\cite{CyganFGKMPS17}, and fine-grained complexity~\cite{OkrasaR21,PiecykR21,GanianHKOS24}.

This paper investigates the family of $\HCol$ problems from the perspective of parameterized complexity, more specifically, from the viewpoint of kernelization. Parameterized complexity is a framework for analyzing decision problems whose instances are equipped with a quantitative parameter, with the goal of determining the effect of the parameter's value on the problem's computational complexity. A central theme in this field, known as kernelization or data reduction, seeks to design efficient preprocessing algorithms that substantially reduce the input size. These algorithms, called {\em kernels}, take an input instance $(x,k)$, where $x$ is the main input and $k$ is the parameter, and transform it in polynomial time into an equivalent instance $(x',k')$ whose bit-size is bounded by $f(k)$ for a computable function $f:\N \rightarrow \N$ depending only on $k$.
The slowest achievable growth rate of such a function $f$ reflects the problem's compressibility limits with respect to the parameter $k$. In the context of graph problems, a widely-studied and ubiquitous parameter is the size of a vertex cover, i.e., a set of vertices that includes at least one endpoint of every edge in the graph. When $\HCol$ is parameterized by the vertex cover number $k$, the input consists of a graph $G$ along with a vertex cover of $G$ of size $k$.

An archetypal problem within the $\HCol$ framework is the $\qCol$ problem for a fixed integer $q$, which has attracted persistent interest in algorithmic and complexity-theoretic research. This problem asks whether the vertices of a given graph can be colored with $q$ colors, so that no two adjacent vertices share a color. It precisely coincides with the $\HCol$ problem where $H$ is the complete graph on $q$ vertices, and is well known to be $\NP$-hard for $q \geq 3$. The kernelization complexity of the $\qCol$ problem parameterized by the vertex cover number $k$ was first comprehensively studied by Jansen and Kratsch in~\cite{JansenK13} (see also~\cite{JansenThesis}). They showed that for every integer $q \geq 3$, the problem admits a kernel producing graphs with $O(k^q)$ vertices which can be encoded in $O(k^q)$ bits. This result was subsequently refined by Jansen and Pieterse~\cite{JansenP19color}, who obtained a kernel with $O(k^{q-1})$ vertices and bit-size $O(k^{q-1} \cdot \log k)$. Remarkably, this bound is nearly optimal, as it was proved in~\cite{JansenK13,JansenP19color} that for every integer $q \geq 3$ and any real $\eps>0$, the problem does not admit a kernel of size $O(k^{q-1-\eps})$ unless $\NP \subseteq \coNPpoly$, a containment that would imply the collapse of the polynomial-time hierarchy to its third level~\cite{Yap83}.

Another captivating case of the $\HCol$ problem arises from a geometric lens on graph theory, namely, through the concept of {\em orthogonal representations} introduced by Lov{\'{a}}sz~\cite{Lovasz79} in 1979. For a positive integer $d$ and a field $\Fset$, a $d$-dimensional orthogonal representation of a graph $G=(V,E)$ over $\Fset$ assigns to each vertex $v \in V$ a non-self-orthogonal vector $x_v \in \Fset^d$, such that for every edge $\{u,v\} \in E$, the vectors $x_u$ and $x_v$ are orthogonal. Here, two vectors $x,y \in \Fset^d$ are called orthogonal if their standard inner product $\langle x,y \rangle = \sum_{i=1}^{d}{x_i y_i}$ equals zero, and a vector $x \in \Fset^d$ is self-orthogonal if $\langle x,x \rangle =0$. The orthogonal representation is termed {\em faithful} if two vertices $u$ and $v$ are adjacent in $G$ if and only if their vectors $x_u$ and $x_v$ are orthogonal. The minimum dimension $d$ for which a graph admits a (faithful) orthogonal representation over $\Fset$ captures an intriguing structural property, and has found applications in combinatorics, theoretical computer science, and information theory (see~\cite[Chapter~10]{LovaszBook} and, e.g.,~\cite{LovaszSS89,CodenottiPR00,CameronMNSW07,GolovnevRW17,Haviv18free,GolovnevH20}). This motivates the study of the $d\dODP_\Fset$ problem, which, for fixed $d$ and $\Fset$, asks whether a given graph admits a $d$-dimensional orthogonal representation over $\Fset$. This problem naturally fits within the $\HCol$ framework by letting $H = H(\Fset, d)$ be the (possibly infinite) graph whose vertices are all non-self-orthogonal vectors in $\Fset^d$, with adjacency defined by orthogonality over $\Fset$. Note that the problem is known to be $\NP$-hard for every integer $d \geq 3$ and every field $\Fset$~\cite{Peeters96}. It was recently shown~\cite{HavivR24} that for every integer $d \geq 3$ and every field $\Fset$, the $d\dODP_\Fset$ problem parameterized by the vertex cover number $k$ admits a kernel with $O(k^d)$ vertices and bit-size $O(k^d)$, whereas for any $\eps >0$, it admits no kernel of size $O(k^{d-1-\eps})$ unless $\NP \subseteq \coNPpoly$. A sharper upper bound was obtained in~\cite{HavivR24} for the real field $\R$, yielding a near-optimal kernel with $O(k^{d-1})$ vertices and bit-size $O(k^{d-1} \cdot \log k)$. For finite fields, however, the precise kernel complexity remained unresolved, with a multiplicative gap of roughly $k$ separating the upper and lower bounds on the kernel size.

The near-optimal kernels for the $\qCol$ problems were extended by Jansen and Pieterse in~\cite{JansenP19color} to the $\HCol$ problem for arbitrary graphs $H$. They showed that for every graph $H$, the $\HCol$ problem parameterized by the vertex cover number $k$ admits a kernel with $O(k^{\Delta(H)})$ vertices and bit-size $O(k^{\Delta(H)} \cdot \log k)$, where $\Delta(H)$ denotes the maximum degree of a vertex in $H$. This result was actually established in a stronger form, with respect to the less restrictive parameter known as the twin cover number. While the bound on the kernel size from~\cite{JansenP19color} is essentially optimal when $H$ is a complete graph, it is natural to ask to what extent the dependence on the maximum degree of $H$ captures the kernel complexity of the problem for general graphs $H$. This question serves as the driving force behind our work, which explores the polynomial behavior of the kernel complexity of $\HCol$ problems parameterized by the vertex cover number.

\subsection{Our Contribution}

The present paper provides upper and lower bounds on the kernel size of $\HCol$ problems parameterized by the vertex cover number. Our upper bounds include two kernelization algorithms: the first is purely combinatorial, and the second leverages linear-algebraic tools. Below, we elaborate on each of them.

\subsubsection*{The Combinatorial Kernel}
Our first kernel is simple and combinatorial, inspired by the strategy developed for coloring problems by Jansen and Kratsch in~\cite{JansenK13}. Consider an instance of the $\HCol$ problem parameterized by the vertex cover number $k$, namely, a graph $G=(V,E)$ and a vertex cover $X \subseteq V$ of $G$ of size $k$. For a fixed integer $q$, the kernel starts with the subgraph of $G$ induced by $X$, and then, for every set $S \subseteq X$ of size at most $q$ whose vertices share a neighbor from $V \setminus X$ in $G$, introduces a new vertex whose neighborhood is exactly $S$. While the number of vertices in the constructed graph is clearly bounded by $O(k^q)$, we prove that the correctness of the kernelization for $\HCol$ is governed by a structural quantity of the graph $H$, which we denote by $q(H)$ and refer to as the {\em non-adjacency witness number} of $H$.

Given a set of vertices in $H$ that have no common neighbor, one may ask whether this can be certified by a succinct witness, namely, a small subset that also has no common neighbor. The non-adjacency witness number $q(H)$ is defined as the smallest size of such a subset, whose existence is guaranteed for every set of vertices in $H$ with no common neighbor (see Definition~\ref{def:q(G)}). This notion has previously been studied in the literature, with particular attention given to graphs $H$ with $q(H) = 2$ (see, e.g.,~\cite{DouradoPS06,GroshausS07,GroshausLS17}). We note that $q(H)$ may also be formulated as the smallest integer $q$ for which the collection of open neighborhoods in $H$ satisfies what is known as the Helly property of order $q$ (see Section~\ref{sec:NAWN} for details). Our analysis reveals that the size of the aforementioned kernel for the $\HCol$ problem parameterized by the vertex cover number is bounded by a polynomial, whose degree is given by the non-adjacency witness number of $H$.

\begin{theorem}\label{thm:Intro_kernel_q}
For every graph $H$, the $\HCol$ problem parameterized by the vertex cover number $k$ admits a kernel with $O(k^{q(H)})$ vertices and bit-size $O(k^{q(H)})$.
\end{theorem}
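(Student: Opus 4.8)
The plan is to prove Theorem~\ref{thm:Intro_kernel_q} by analyzing the kernelization algorithm outlined above. Given an instance of $\HCol$ consisting of a graph $G=(V,E)$ and a vertex cover $X$ of $G$ with $|X|=k$, put $q := q(H)$ and build a graph $G'$ whose vertex set is $X$ together with one new vertex $v_S$ for every $S \subseteq X$ with $|S| \le q$ whose elements have a common neighbor in $V \setminus X$; declare $G'[X] := G[X]$ and let the neighborhood of each $v_S$ in $G'$ be exactly $S$. The output is the pair $(G',X)$. A few facts are immediate: since $X$ is a vertex cover of $G$, every $u \in V \setminus X$ satisfies $N_G(u) \subseteq X$, so the relevant sets $S$ are found in time $O(|V|\cdot k^{q})$ by enumerating, for each such $u$, the subsets of $N_G(u)$ of size at most $q$; the set $X$ is again a vertex cover of $G'$, so the parameter does not increase; and $G'$ has $k + \sum_{i=0}^{q}\binom{k}{i} = O(k^{q})$ vertices. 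For the bit-size one encodes $k$, the adjacencies inside $X$ (in $\binom{k}{2}=O(k^2)$ bits), and a single bit per candidate set $S$ recording whether $v_S$ is present in $G'$ (its neighborhood then being $S$ itself), for a total of $O(k^{q})$ bits as soon as $q \ge 2$; the degenerate cases $q(H)\le 1$, which force $H$ to be edgeless, admit trivial constant-size kernels.

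It remains to show that $G$ is $H$-colorable if and only if $G'$ is. The forward implication is easy and uses nothing about $q(H)$: a homomorphism $\varphi \colon G \to H$ is lifted to $\psi \colon G' \to H$ by setting $\psi|_X := \varphi|_X$ and $\psi(v_S) := \varphi(u_S)$, where $u_S \in V \setminus X$ is any common neighbor of $S$ in $G$; edges inside $X$ are preserved because $G'[X]=G[X]$, and each edge $\{v_S,x\}$ with $x \in S$ is preserved because $\{x,u_S\} \in E(G)$. For the reverse implication, given a homomorphism $\psi \colon G' \to H$ I would extend $\psi|_X$ to a homomorphism $\varphi \colon G \to H$ by choosing, for each $u \in V \setminus X$, the value $\varphi(u)$ to be a common neighbor in $H$ of the set $T_u := \{\psi(x) : x \in N_G(u)\}$ (and arbitrarily when $N_G(u)=\emptyset$). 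Since every edge of $G$ has an endpoint in $X$, any such choice already yields a homomorphism, so the entire task reduces to proving that $T_u$ has a common neighbor in $H$ — and this is exactly where the non-adjacency witness number enters, the step I expect to be the main obstacle to get precisely right.

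To carry out that step, fix a subset $T' \subseteq T_u$ with $|T'| \le q$ and, for each $t \in T'$, select a preimage $x_t \in N_G(u)$ with $\psi(x_t)=t$; the set $S := \{x_t : t \in T'\} \subseteq N_G(u)$ then has $|S| \le q$ and $\psi(S)=T'$. Since $u$ is a common neighbor in $V \setminus X$ of the elements of $S$, the vertex $v_S$ belongs to $G'$ and is adjacent there to every element of $S$, whence $\psi(v_S)$ is adjacent in $H$ to every vertex of $\psi(S)=T'$; thus $T'$ has a common neighbor in $H$. Since this holds for every subset of $T_u$ of size at most $q = q(H)$, the defining property of the non-adjacency witness number (Definition~\ref{def:q(G)}) guarantees that $T_u$ itself has a common neighbor in $H$, completing the proof. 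The only subtlety beyond invoking the definition is that $\psi$ need not be injective on $N_G(u)$, which is precisely why one passes through preimages to translate between subsets of $N_G(u)$ and subsets of $T_u$; everything else is routine.
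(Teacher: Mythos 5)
Your proposal is correct and follows essentially the same route as the paper: the same construction of $G'$ via the vertices $v_S$ for common-neighborhood subsets $S\subseteq X$ of size at most $q(H)$, the same counting and encoding bounds, and the same correctness argument, with the key step (that each $T_u$ has a common neighbor in $H$) phrased as the contrapositive of Definition~\ref{def:q(G)} rather than the paper's proof by contradiction. The preimage-selection detail you flag is exactly how the paper handles the non-injectivity of the homomorphism on $N_G(u)$.
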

\noindent
We remark that when applying Theorem~\ref{thm:Intro_kernel_q}, it is sometimes useful to replace $H$ with a core subgraph of $H$. As noted earlier, this substitution does not change the problem, however, it may reduce the non-adjacency witness number and thereby lead to a smaller kernel (see Lemma~\ref{lemma:q(core)}).

In an attempt to exploit Theorem~\ref{thm:Intro_kernel_q} and to compare it with the kernel of $O(k^{\Delta(H)})$ vertices obtained in~\cite{JansenP19color}, we undertake a systematic study of the non-adjacency witness number of graphs. Our analysis illuminates connections to fundamental graph parameters, such as maximum degree, clique number, and degeneracy, as well as structural properties related to the presence of specific subgraphs. By combining these insights with Theorem~\ref{thm:Intro_kernel_q}, we obtain economical kernels for a broad range of graphs $H$. In particular, as demonstrated below, we achieve kernels whose size is bounded by a fixed polynomial for natural classes of graphs $H$ with unbounded maximum degree.

For a fixed integer $d$, consider the class of $d$-degenerate graphs, namely, those graphs in which every subgraph has a vertex of degree at most $d$. We show that for every such graph $H$, its non-adjacency witness number satisfies $q(H) \leq d+1$, so Theorem~\ref{thm:Intro_kernel_q} implies that the corresponding $\HCol$ problem parameterized by the vertex cover number $k$ admits a kernel whose size is bounded by $O(k^{d+1})$. For planar graphs, which are known to be $5$-degenerate, this yields a bound of $O(k^6)$ on the kernel size, and via a refined analysis, we further reduce it to $O(k^4)$. Perhaps most strikingly, we apply Theorem~\ref{thm:Intro_kernel_q} to show that for {\em almost every graph $H$}, there is a kernel for the $\HCol$ problem parameterized by the vertex cover number, where the degree of the polynomial bounding its size grows only logarithmically with the maximum degree of $H$. To this end, we provide a tight estimate for the typical non-adjacency witness number of random graphs (see Theorems~\ref{thm:G(n,1/2)} and~\ref{thm:kernel_random} for precise statements).

\subsubsection*{The Algebraic Kernel}
Our second kernel for the $\HCol$ problem parameterized by the vertex cover number is more algebraic in nature. For certain graphs $H$, it allows us to sharpen the kernel size from Theorem~\ref{thm:Intro_kernel_q} by a multiplicative linear factor. Somewhat surprisingly, the condition that makes this improvement applicable is related to the concept of faithful orthogonal representations of graphs (which, as described earlier, assign a non-self-orthogonal vector to each vertex, such that two vertices are adjacent if and only if their vectors are orthogonal). Concretely, we show that if a graph $H$ admits a faithful $d$-dimensional orthogonal representation over some computationally efficient field, then the $\HCol$ problem parameterized by the vertex cover number $k$ admits a kernel with $O(k^{d-1})$ vertices and bit-size $O(k^{d-1} \cdot \log k)$.

In fact, we obtain this kernel size under a weaker assumption on the graph $H$, using the concept of {\em independent representations} of graphs. This notion refers to an assignment of a $d$-dimensional vector to each vertex of $H$, so that the vector associated with each vertex does not lie in the vector space spanned by the vectors of its neighborhood. Note that every orthogonal representation is, in particular, an independent representation. The minimum dimension of an independent representation of a graph over a given field characterizes a well-studied graph quantity, called {\em minrank}, which plays a pivotal role in the study of fundamental problems in information theory, such as Shannon capacity~\cite{Haemers79}, index coding~\cite{BBJK06,SDLlocal13,Haviv18}, storage capacity~\cite{Mazumdar15}, and hat guessing games~\cite{Riis07}. For our purposes, we introduce a {\em faithful} analogue of this notion, demanding that two vertices $u$ and $v$ in the graph are adjacent if and only if the vector associated with $u$ lies in the linear subspace spanned by the vectors of the neighborhood of $v$ (see Definition~\ref{def:IR}). With this faithful variant in hand, we prove the following theorem (see also Theorem~\ref{thm:kernel_IR}).

\begin{theorem}[Simplified]\label{thm:Intro_kernel_d}
If a graph $H$ has a faithful $d$-dimensional independent representation over either a finite field or the real field $\R$, then the $\HCol$ problem parameterized by the vertex cover number $k$ admits a kernel with $O(k^{d-1})$ vertices and bit-size $O(k^{d-1} \cdot \log k)$.
\end{theorem}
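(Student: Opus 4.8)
The plan is to generalize the linear-algebraic sparsification that underlies the coloring kernel of Jansen and Pieterse, using a faithful $d$-dimensional independent representation of $H$ to obtain a polynomial encoding of degree only $d-1$. The first step is the standard reformulation. Denote by $X$ the given vertex cover of the input graph $G$, of size $k$, and let $I := V(G) \setminus X$, an independent set. Then $G$ is $H$-colorable if and only if there is a homomorphism $\phi \colon X \to V(H)$ of $G[X]$ that is \emph{locally extendable}, meaning that for every $v \in I$ the image $\phi(N(v))$ has a common neighbor in $H$. Hence it suffices to select a subset $I' \subseteq I$ of size $O(k^{d-1})$ such that every homomorphism of $G[X]$ that is locally extendable at all vertices of $I'$ is locally extendable at all of $I$; the kernel is then assembled from $G[X]$ together with (gadgets realizing) the constraints of the vertices in $I'$.

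Next, I translate local extendability through a faithful $d$-dimensional independent representation $f$ of $H$. For $h \in V(H)$ put $L_h := \linspan\{f(w) : w \in N_H(h)\}$, and for $v \in I$ put $U_v(\phi) := \linspan\{f(\phi(s)) : s \in N(v)\}$. By faithfulness, a vertex $h$ is a common neighbor of $\phi(N(v))$ exactly when $U_v(\phi) \subseteq L_h$; thus $\phi$ is locally extendable at $v$ if and only if $U_v(\phi) \subseteq L_h$ for some $h \in V(H)$. The crucial structural observation is that $\dim L_h \le d-1$ for every $h$, since $L_h = \Fset^d$ would force $f(h) \in L_h$, contradicting the independence property of $f$. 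Consequently local extendability at $v$ depends only on the subspace $U_v(\phi)$, and in particular it entails $\dim U_v(\phi) \le d-1$.

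The technical heart is to express local extendability at $v$ as the simultaneous vanishing of a family of polynomials of degree at most $d-1$ in $dk$ scalar variables — one block of $d$ variables per vertex $s \in X$, standing for the coordinates of $f(\phi(s))$. The point is that the naive encoding of ``$\dim U_v(\phi) \le d-1$'' as the vanishing of all $d \times d$ minors has degree $d$, which would only reproduce the $O(k^d)$ bound already given by the combinatorial kernel; saving one in the degree is exactly where the bound $\dim L_h \le d-1$ is used. After normalizing a putative witness (a common neighbor of $\phi(N(v))$, equivalently a hyperplane of the permitted form containing $U_v(\phi)$), its existence reduces to the solvability of a linear system in at most $d-1$ unknowns, whose feasibility is governed by minors of order at most $d-1$; this mirrors the argument of Haviv and Regev for orthogonal representations over $\R$. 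Given such an encoding, I invoke the Jansen--Pieterse sparsification lemma: the space of polynomials of degree at most $r$ in $n$ variables has dimension $O(n^r)$, so from any family of such polynomials one can compute in polynomial time a subfamily of that size with the same set of common zeros. Applying it with $r = d-1$ and $n = O(dk) = O(k)$ retains $O(k^{d-1})$ polynomials, and hence $O(k^{d-1})$ vertices of $I$.

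It then remains to assemble the kernel and account for its size. Equivalence of the reduced instance follows from the common-zero-set property: a homomorphism of $G[X]$ that is locally extendable at the retained vertices is locally extendable at all of $I$. For the size bound, the kernel has $O(k^{d-1})$ vertices, and since each retained constraint is captured by a polynomial in only $O(1)$ of the variable blocks, it can be realized using $O(1)$ vertices of $X$, so the adjacencies of each retained vertex cost $O(\log k)$ bits, giving total bit-size $O(k^{d-1} \log k)$. The hypothesis on the field plays two roles: it must be computationally efficient so that the linear algebra (basis computations, minor evaluations, Gaussian elimination over the polynomial space) runs in polynomial time, and the passage from ``a witness subspace of dimension at most $d-1$ exists'' back to ``a genuine common neighbor in $H$ exists'' must be valid — transparent over $\R$, and requiring a little extra care over finite fields, where both conditions nonetheless hold. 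The main obstacle I anticipate is precisely the degree reduction to $d-1$ in the third step together with keeping the surviving constraints realizable on few vertices of $X$; a straightforward determinantal treatment yields degree $d$ and no improvement over the combinatorial kernel, so genuinely exploiting the $(d-1)$-dimensionality of the subspaces $L_h$ is the delicate point.
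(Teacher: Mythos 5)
Your outline correctly identifies the overall strategy (Jansen--Pieterse sparsification driven by a faithful independent representation, degree-$(d-1)$ polynomials, a dimension count of $O(k^{d-1})$), but the step you yourself flag as the technical heart is genuinely missing, and the way you frame it is not how the argument can be closed. You ask for a family of degree-$(d-1)$ polynomials whose simultaneous vanishing is \emph{equivalent} to local extendability at $v$, i.e., to the disjunction ``$U_v(\phi)\subseteq L_h$ for some $h\in V(H)$.'' No such exact encoding is produced, and your proposed mechanism does not yield one: a disjunction over $V(H)$ does not reduce to solvability of a single linear system, and Rouch\'e--Capelli-type feasibility is an equality of ranks, not the vanishing of all small minors. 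The paper's proof encodes only the \emph{necessary} condition that the $d$ neighbors' vectors are linearly dependent --- one $d\times d$ determinant per constraint, which drops to degree $d-1$ because the representation is first normalized so that every vector has first entry $1$ (over a finite field this requires passing to an extension field of order exceeding $|V(H)|$, a separate lemma rather than ``a little extra care''). Sufficiency is then recovered combinatorially, not algebraically: one first runs the combinatorial kernel with $q=d$ (valid since a faithful $d$-dimensional independent representation forces $q(H)\le d$), replacing each vertex outside $X$ by vertices $v_S$ with $|S|\le d$; \emph{all} $v_S$ with $|S|\le d-1$ are kept unconditionally and only the $|S|=d$ ones are sparsified. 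When the determinant vanishes, some $S'\subsetneq S$ spans the same subspace, the retained vertex $v_{S'}$ supplies a concrete image in $H$, and \emph{faithfulness} of the representation is what upgrades ``adjacent to the images of $S'$'' to ``adjacent to the images of all of $S$.'' Your proposal has no analogue of this witness-transfer step, so the implication ``retained polynomials vanish $\Rightarrow$ constraint satisfied everywhere'' is unsupported.

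A secondary gap: you assert that each retained constraint involves $O(1)$ variable blocks and hence costs $O(\log k)$ bits, but a vertex of $I$ may have up to $k$ neighbors in $X$, so its extendability constraint a priori involves all $k$ blocks. This is repaired only by the same preprocessing step your outline omits, namely splitting each neighborhood into subsets of size at most $d$ before any polynomial is written down.
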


A few remarks are in order here.
First, it is not difficult to show that every graph $H$ has a faithful independent representation of dimension $\Delta(H)+1$ over any sufficiently large field (see Lemma~\ref{lemma:faith_Delta+1}). By Theorem~\ref{thm:Intro_kernel_d}, this implies that for every graph $H$, the $\HCol$ problem parameterized by the vertex cover number $k$ admits a kernel with $O(k^{\Delta(H)})$ vertices and bit-size $O(k^{\Delta(H)} \cdot \log k)$. Therefore, Theorem~\ref{thm:Intro_kernel_d} strengthens the kernel bound implied by~\cite{JansenP19color}. It is worth noting that a result of Maehara and R{\"{o}}dl~\cite{MaeharaR90} asserts that every graph $H$ has a faithful orthogonal representation over $\R$ of dimension $2 \cdot \Delta(H)$, and it is an open question whether this dimension can be reduced to $\Delta(H)+1$ (see, e.g.,~\cite[Chapter~10.4]{LovaszBook}). Thus, for Theorem~\ref{thm:Intro_kernel_d} to fully subsume the kernel bound of~\cite{JansenP19color}, it is somewhat crucial to adopt the more flexible notion of faithful independent representations (rather than orthogonal).

Second, we observe that for every graph $H$ that possesses a faithful $d$-dimensional independent representation over some field, the non-adjacency witness number satisfies $q(H) \leq d$ (see Lemma~\ref{lemma:q_faith}). This shows that Theorem~\ref{thm:Intro_kernel_d} improves the kernel size obtained via Theorem~\ref{thm:Intro_kernel_q} only when the minimum dimension of a faithful independent representation of $H$ matches the non-adjacency witness number of $H$ (in fact, of a minimal core subgraph of $H$). This situation arises, for example, for the aforementioned graphs $H = H(\Fset,d)$, which correspond to the existence of a $d$-dimensional orthogonal representation over a field $\Fset$. As an application of Theorem~\ref{thm:Intro_kernel_d}, we unleash a near-optimal kernel for the $d\dODP_\Fset$ problems over finite fields $\Fset$, thereby settling a question posed in~\cite{HavivR24}.

\begin{theorem}\label{thm:IntroOD}
For every integer $d \geq 3$ and every finite field $\Fset$, the $d\dODP_\Fset$ problem parameterized by the vertex cover number $k$ admits a kernel with $O(k^{d-1})$ vertices and bit-size $O(k^{d-1} \cdot \log k)$.
\end{theorem}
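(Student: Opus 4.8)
The plan is to obtain Theorem~\ref{thm:IntroOD} as a corollary of the algebraic kernel of Theorem~\ref{thm:Intro_kernel_d}, by producing a faithful $d$-dimensional independent representation of the target graph over $\Fset$. Since $\Fset$ is finite, the graph $H = H(\Fset,d)$ is finite: its vertices are the non-self-orthogonal vectors of $\Fset^d$, and $u$ and $v$ are adjacent exactly when $\langle x_u, x_v\rangle = 0$, where $x_v$ denotes the vector represented by the vertex $v$. Under this identification the $d\dODP_\Fset$ problem parameterized by the vertex cover number is precisely the $\HCol$ problem for this $H$ under the same parameter, so it suffices to verify that the \emph{identity} assignment $v \mapsto x_v$ is a faithful $d$-dimensional independent representation of $H$ over $\Fset$, i.e., that $u \sim v$ if and only if $x_u \in \linspan\{x_w : w \in N_H(v)\}$ for all vertices $u,v$.

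The crux is to identify $\linspan N_H(v)$. The neighborhood $N_H(v)$ consists precisely of the non-self-orthogonal vectors lying in the hyperplane $W := x_v^\perp$, which has dimension $d-1$. I would argue that $\linspan N_H(v) = W$ whenever $W$ is not totally isotropic: if $w_0 \in W$ is non-self-orthogonal, then every $w \in W$ is a combination of non-self-orthogonal vectors of $W$ --- trivially when $w$ itself is non-self-orthogonal, and otherwise because a suitable vector among $w_0 + \lambda w$, $\lambda \in \Fset$, is non-self-orthogonal (a short case analysis, with the characteristic-two case being the easiest), so that $w$ lies in the span of at most two neighbors of $v$. In the remaining case, where $W$ is totally isotropic, we have $N_H(v) = \emptyset$ and $v$ is an isolated vertex; here $\linspan N_H(v) = \{0\}$, and since every vertex of $H$ is a nonzero vector, no $x_u$ equals $0$, so the representation condition still holds, with both sides false. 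The classical bound that a totally isotropic subspace of $\Fset^d$ with respect to the standard bilinear form has dimension at most $\lfloor d/2\rfloor$, which for $d \geq 3$ is strictly smaller than $d-1$, ensures that these two cases are exhaustive --- in particular, the degenerate case occurs only for hyperplanes that are genuinely totally isotropic.

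With $\linspan N_H(v)$ pinned down, faithfulness is immediate: if $x_v^\perp$ is not totally isotropic then $x_u \in \linspan N_H(v)$ iff $x_u \in x_v^\perp$ iff $\langle x_u, x_v\rangle = 0$ iff $u \sim v$, where the case $u = v$ is settled by $\langle x_v, x_v\rangle \neq 0$ (so $v$ carries no loop, matching $x_v \notin x_v^\perp$); and if $x_v^\perp$ is totally isotropic the condition holds by the observation above. Hence $H(\Fset,d)$ has a faithful $d$-dimensional independent representation over the finite field $\Fset$, and Theorem~\ref{thm:Intro_kernel_d} (in the form of Theorem~\ref{thm:kernel_IR}) supplies a kernel for $d\dODP_\Fset$ parameterized by the vertex cover number $k$ with $O(k^{d-1})$ vertices and bit-size $O(k^{d-1}\cdot \log k)$, as required.

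I expect the only genuine obstacle to be the spanning statement for $N_H(v)$ together with the clean treatment of isolated vertices; everything else is an application of the already-established algebraic kernel plus standard linear algebra over finite fields. Two remarks seem worth including. First, the improvement over the $O(k^d)$ kernel of~\cite{HavivR24} genuinely comes from the algebraic Theorem~\ref{thm:Intro_kernel_d}: by Lemma~\ref{lemma:q_faith} we have $q(H) \leq d$, but the combinatorial Theorem~\ref{thm:Intro_kernel_q} is limited by the non-adjacency witness number, which for these graphs equals $d$ (cf.\ the discussion following Theorem~\ref{thm:Intro_kernel_d}), so it only recovers the weaker $O(k^d)$ bound. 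Second, the resulting $O(k^{d-1})$ bound matches, up to the $k^{o(1)}$ slack in the bit-size, the conditional lower bound of $O(k^{d-1-\eps})$ from~\cite{HavivR24}, so Theorem~\ref{thm:IntroOD} nearly settles the kernel complexity of $d\dODP_\Fset$ over finite fields, answering the question left open there.
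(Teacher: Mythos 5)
Your proposal is correct and follows the paper's route: identify $d\dODP_\Fset$ with $\HCol$ for $H=H(\Fset,d)$, observe that the identity assignment $v\mapsto x_v$ is a faithful $d$-dimensional independent representation over $\Fset$, and invoke Theorem~\ref{thm:kernel_IR}. The one place you diverge is in verifying faithfulness, where you do substantially more work than is needed: you pin down $\linspan(\{x_w \mid w\in N_H(v)\})$ exactly (showing it equals the hyperplane $x_v^\perp$ unless that hyperplane consists entirely of self-orthogonal vectors), whereas only the trivial containment $\linspan(\{x_w \mid w\in N_H(v)\})\subseteq x_v^\perp$ is required. Indeed, if $u\not\sim v$ then $\langle x_u,x_v\rangle\neq 0$, while every vector in the span of the neighbors' vectors is orthogonal to $x_v$, so $x_u$ cannot lie in that span; this is exactly the paper's Remark~\ref{remark:OD_ID}, which states that every faithful orthogonal representation is automatically a faithful independent representation, and it is the observation the paper's proof leans on. One aside in your argument is actually false, though harmless: over a field of characteristic two the set of self-orthogonal vectors is the kernel of the linear functional $x\mapsto\sum_i x_i$, so a hyperplane $x_v^\perp$ can consist entirely of self-orthogonal vectors even for $d\geq 3$ (e.g., $\Fset=\Fset_2$, $d=3$, $x_v=(1,1,1)$), contradicting your claim that the $\lfloor d/2\rfloor$ bound rules this out; your two cases are exhaustive by definition, so nothing breaks, but the isolated-vertex case does genuinely occur and your vacuous treatment of it is the part that matters. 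Your closing remarks about why the combinatorial kernel only yields $O(k^d)$ here, and about matching the lower bound of~\cite{HavivR24}, are consistent with the paper.
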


We further demonstrate the applicability of Theorem~\ref{thm:Intro_kernel_d} for the celebrated family of {\em Kneser graphs}. For positive integers $m$ and $r$ with $m \geq 2r$, the Kneser graph $K(m,r)$ has all $r$-subsets of an $m$-element set as vertices, where two are adjacent if they are disjoint. It turns out that the non-adjacency witness number of $K(m,r)$ is $m-2r+2$, coinciding with its chromatic number that was determined in an influential paper of Lov{\'{a}}sz~\cite{LovaszKneser}. By Theorem~\ref{thm:Intro_kernel_q}, this shows that the $K(m,r)$-$\Col$ problem parameterized by the vertex cover number $k$ admits a kernel with $O(k^{m-2r+2})$ vertices and bit-size. To further reduce this kernel size, we prove that $K(m,r)$ admits a faithful $(m-2r+2)$-dimensional independent representation over any sufficiently large field, which yields, via Theorem~\ref{thm:Intro_kernel_d}, a kernel with $O(k^{m-2r+1})$ vertices and bit-size $O(k^{m-2r+1}\cdot \log k)$ (see Theorems~\ref{thm:kernel_K(m,r)} and~\ref{thm:IR_kneser}). In particular, for the $3$-chromatic Kneser graphs $K(2r+1,r)$, which include the iconic Petersen graph $K(5,2)$, the resulting kernel has near-quadratic size (see Figure~\ref{fig:P}).

The proof of Theorem~\ref{thm:Intro_kernel_d} builds on a powerful sparsification technique that was introduced by Jansen and Pieterse in~\cite{JansenP19sparse} and applied in their work~\cite{JansenP19color}. To illustrate this technique, consider an instance of the $\HCol$ problem parameterized by the vertex cover number, consisting of a graph $G=(V,E)$ and a vertex cover $X \subseteq V$ of $G$. The approach begins by introducing variables over some field $\Fset$ that represent an assignment of vertices from $H$ to the vertices in $X$. Then, for every vertex $v \in V \setminus X$, one constructs one or more low-degree polynomials in the variables corresponding to its neighbors in $X$, which evaluate to zero if and only if $v$ can be assigned a compatible vertex from $H$. The crux is that it suffices to retain in the graph $G$ only a subset of the vertices in $V \setminus X$, whose corresponding polynomials linearly span all the others. The degree of the used polynomials determines the dimension of the entire space of polynomials, and consequently, the resulting kernel size. This method was applied in~\cite{JansenP19color}, where the assignment of each vertex of $X$ is encoded by a set of variables representing an indicator vector over the binary field $\Fset_2$.

The way we apply the machinery of~\cite{JansenP19color,JansenP19sparse} to prove Theorem~\ref{thm:Intro_kernel_d} is substantially different. We rely on the existence of a faithful $d$-dimensional independent representation of the graph $H$ over some field $\Fset$, interpreting its vectors as encodings of the vertices of $H$. Accordingly, we associate each vertex in the vertex cover $X$ with a $d$-dimensional variable vector over $\Fset$ to represent its assignment. The kernelization algorithm first applies our combinatorial kernel to ensure that every vertex outside $X$ has degree at most $d$. Then, it aims to reduce the number of vertices outside $X$ that have degree exactly $d$. By the definition of an independent representation, a vertex $v$ outside $X$ can be assigned a compatible vertex from $H$ only if every set of $d$ vectors of its neighbors in $X$ is linearly dependent over $\Fset$. This condition can be naturally captured by requiring the determinant polynomial over the variables corresponding to such $d$ neighbors of $v$ to be zero. Once the vectors of the neighbors of $v$ span a subspace of dimension smaller than $d$, the assignment of a compatible vertex to $v$ is essentially realized through some other vertex outside $X$ of degree at most $d-1$.

However, the degree of the determinant polynomial of a $d \times d$ matrix is $d$. In order to obtain polynomials of degree $d-1$, which in turn yield a kernel with $O(k^{d-1})$ vertices, we use an idea from~\cite{HavivR24,HavivR25} and require all vectors in the faithful independent representation of $H$ to share a fixed value, say $1$, in their first entry. When the field $\Fset$ is sufficiently large relative to the number of vertices in $H$, this can be achieved by applying an invertible linear transformation to the vectors in the representation to ensure that all first entries are nonzero, and then scaling them appropriately. Over a smaller field, though, such a representation may not exist. We overcome this obstacle by showing that if a graph has a faithful $d$-dimensional independent representation over some field $\Fset$, then it also has one in which all vectors have $1$ as their first entry, over any sufficiently large {\em extension} field of $\Fset$ (see Lemma~\ref{lemma:IR_with1}). This allows us to apply the approach described above over such an extension field and obtain a kernel of the desired size.

\subsubsection*{Lower Bounds}

Given our combinatorial and algebraic kernels for the $\HCol$ problem parameterized by the vertex cover number $k$, it is natural to ask how close the obtained kernel sizes come to being optimal. As already discussed, and shown in~\cite{JansenK13,JansenP19color}, when $H$ is the complete graph on $q$ vertices with $q \geq 3$, the problem admits no kernel of size $O(k^{q-1-\eps})$ for any $\eps >0$, unless $\NP \subseteq \coNPpoly$. This firmly settles the kernel complexity for all cases where $H$, or its minimal core subgraph, is a complete graph. Turning to other fundamental graphs, the first non-trivial case that arises is that of odd cycles of length at least five. For every integer $m \geq 2$, it is easy to verify that the cycle $C_{2m+1}$ on $2m+1$ vertices has non-adjacency witness number $q(C_{2m+1}) = 2$ (see Lemma~\ref{lemma:q(cycle)}). Consequently, Theorem~\ref{thm:Intro_kernel_q} yields a kernel for the $C_{2m+1}$-$\Col$ problem parameterized by the vertex cover number $k$ with $O(k^2)$ vertices and bit-size. Furthermore, neither Theorem~\ref{thm:Intro_kernel_d} nor the kernel from~\cite{JansenP19color} improves upon this bound. Another notable case is the family of $3$-chromatic Kneser graphs $K(2r+1,r)$, for which Theorem~\ref{thm:Intro_kernel_d} yields a kernel with $O(k^2)$ vertices and bit-size $O(k^2 \cdot \log k)$. The following theorem indicates that, in both cases, the obtained kernel sizes are nearly optimal.

\begin{theorem}\label{thm:IntroLower}
For each of the following cases, for any real number $\eps >0$, the $\HCol$ problem parameterized by the vertex cover number $k$ does not admit a kernel of size $O(k^{2-\eps})$ unless $\NP \subseteq \coNPpoly$.
\begin{enumerate}
  \item $H = C_{2m+1}$ for an integer $m \geq 1$.
  \item\label{itm:IntroLowerKneser} $H = K(2r+1,r)$ for an integer $r \geq 1$.
\end{enumerate}
\end{theorem}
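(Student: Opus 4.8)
The plan is to prove both items by \emph{cross-composition}, the standard route to kernel lower bounds of this type, following the argument of Jansen and Kratsch~\cite{JansenK13} for the \qCol{} problem. To rule out a kernel of size $O(k^{2-\eps})$ for the $\HCol$ problem parameterized by the vertex cover number $k$, it suffices, for each target graph $H$ in the statement, to construct an OR-cross-composition of degree~$2$ from an $\NP$-hard problem into it: given $t$ graphs $G_1,\dots,G_t$ that all have the same number of vertices $n$ --- which we may assume by grouping the input graphs according to this quantity --- we must produce in time $\poly(t,n)$ a graph $G$ together with a vertex cover $X$ of $G$ such that $G$ is $H$-colorable if and only if at least one $G_i$ is $H$-colorable, and such that $|X| \le \sqrt{t}\cdot\poly(n)$; the existence of such a construction implies the stated bound unless $\NP\subseteq\coNPpoly$. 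We cross-compose from the $\HCol$ problem itself, which in both cases is $\NP$-hard by the theorem of Hell and Ne\v{s}et\v{r}il, since $C_{2m+1}$ and $K(2r+1,r)$ are loopless and non-bipartite; for $m=1$ and $r=1$ one has $C_3 = K(3,1) = K_3$, and the bound already follows from~\cite{JansenK13,JansenP19color}, so the cases of interest are $m \ge 2$ and $r \ge 2$.

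For the construction I would adopt the two-coordinate ``grid selection'' scheme behind the \qCol{} lower bound. Write $t = s^2$ (after padding with trivial no-instances) and index the inputs as $G_{a,b}$ with $a,b \in \{1,\dots,s\}$, all on a common vertex set $W$ of size $n$. The composed graph $G$ consists of a \emph{row gadget} $R_a$ for each $a$ and a \emph{column gadget} $C_b$ for each $b$, each of size $\poly(n)$ and designed so that an $H$-coloring of it encodes a candidate $H$-coloring of $W$; a \emph{row selector} and a \emph{column selector}, each of size $O(s\cdot\poly(n))$, which in every $H$-coloring of $G$ force all but one row gadget and all but one column gadget into an ``inactive'' state while still permitting an arbitrary choice of the surviving row and column; and, for every cell $(a,b)$, a family of vertices that test, whenever $R_a$ and $C_b$ are both active, that the candidate colorings encoded by $R_a$ and $C_b$ agree and that this common coloring is a proper $H$-coloring of $G_{a,b}$. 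The vertices of the row and column gadgets and of the two selectors are placed into the vertex cover $X$, while the per-cell test vertices --- each non-adjacent to the others and to all but $O(1)$ of the gadget vertices --- are kept outside $X$, so that a routine accounting as in~\cite{JansenK13} gives $|X| = O(s\cdot\poly(n)) = O(\sqrt{t}\cdot\poly(n))$. Correctness is then a direct check in both directions: if some $G_{a^\ast,b^\ast}$ is $H$-colorable, one activates exactly row $a^\ast$ and column $b^\ast$, encodes a fixed such coloring of $W$ into $R_{a^\ast}$ and $C_{b^\ast}$, and verifies that all remaining vertices can be colored; conversely, any $H$-coloring of $G$ activates some row $a^\ast$ and column $b^\ast$, and the cell-$(a^\ast,b^\ast)$ test vertices then certify that $G_{a^\ast,b^\ast}$ is $H$-colorable.

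The hard part is to implement the selectors and the ``activate only the chosen cell'' behavior of the test vertices inside the restricted structure of the target graph. For $H = K_3$ this is the construction of~\cite{JansenK13}, built from triangle gadgets, but for $C_{2m+1}$ with $m \ge 2$ and for $K(2r+1,r)$ with $r \ge 2$ the graph $H$ is triangle-free and those gadgets are unavailable. The resolution I envisage is to build the required primitives --- a gadget that pins a vertex to a prescribed color of $H$, a gadget that confines a vertex to a prescribed non-edge $\{\alpha,\beta\}$ of $H$ so that it behaves as a Boolean variable, and an implication gadget that propagates the value between two such variables --- out of a copy of a shortest odd cycle of $H$: in the first case this cycle is $C_{2m+1}$ itself, and in the second it is a copy of $C_{2r+1}$ inside $K(2r+1,r)$, realized for instance by the cyclically consecutive $r$-intervals of a $(2r+1)$-element set. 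Since in both families the odd girth of $H$ equals the length of this cycle, every $H$-coloring of such a copy is forced to map it isomorphically onto an odd cycle of $H$, which is precisely what makes the gadgets rigid; this is in the spirit of the gadgetry behind the Hell--Ne\v{s}et\v{r}il $\NP$-hardness proof. With these primitives available one expresses the slack/active logic of the selectors and the equality and adjacency tests performed by the per-cell vertices, and it remains to verify that every gadget can be wired up using only a bounded number of vertices outside $X$ --- so that the bound $|X| = O(\sqrt{t}\cdot\poly(n))$ is preserved --- and that $H$-colorability of $G$ coincides exactly with the disjunction over the $t$ inputs. The second item will demand a somewhat more delicate gadget analysis than the first, since $K(2r+1,r)$ is a strictly larger and less symmetric target than $C_{2r+1}$, but the overall template is the same.
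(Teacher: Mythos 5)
Your route is genuinely different from the paper's. The paper does not build a fresh cross-composition at all: it observes that $C_{2m+1}$ and $K(2r+1,r)$ are non-bipartite projective cores, imports the quadratic compression lower bound of Chen et al.~\cite{ChenJOPR23} for $\ListHCol$ parameterized by the number of vertices, and gives a linear-parameter transformation from $\ListHCol$ to $\HCol$ using the edge gadget guaranteed by projectivity (Proposition~\ref{prop:projective}) together with a distinguished copy of $H$ in the constructed instance; since the whole vertex set is a vertex cover, the bound transfers to the vertex cover parameterization. Your plan instead re-derives the lower bound from scratch via a degree-$2$ OR-cross-composition in the style of~\cite{JansenK13}. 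The framework you invoke is the right one and the accounting $|X| = O(\sqrt{t}\cdot\poly(n))$ is the right target, but as written the argument has a genuine gap rather than merely omitted routine checks.

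The gap is in the gadget primitives. Both $C_{2m+1}$ and $K(2r+1,r)$ are vertex-transitive, so no gadget can ``pin a vertex to a prescribed color of $H$'' in the absolute sense: composing any $H$-coloring with an automorphism of $H$ yields another $H$-coloring, so every color of $H$ is achievable at every vertex of your construction. Your observation that a copy of a shortest odd cycle must map isomorphically onto \emph{some} odd cycle of $H$ is correct (by the odd-girth argument), but ``some'' is fatal here: in $K(2r+1,r)$ there are many $(2r+1)$-cycles and many automorphisms rotating and reflecting each of them, so this gives no rigidity whatsoever for the selector logic, the Boolean-variable gadget, or the implication gadget. The standard repair --- which is exactly what the paper's proofs of Theorems~\ref{thm:IntroLowerGen} and~\ref{thm:IntroLowerGenQ(H)} do --- is to include a distinguished copy of $H$ in the construction, use the core property to extract an automorphism $\pi$ from any homomorphism, and argue about $\pi^{-1}\circ f$; all constraints must then be expressed \emph{relative} to that copy, via gadgets such as the Okrasa--Rz{\k a}\.{z}ewski edge gadget, whose existence for $K(2r+1,r)$ rests on projectivity and is a non-trivial fact you neither invoke nor reprove. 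Until the selector, variable, and per-cell test gadgets are actually constructed under this relative regime --- including the verification that inactive cells never obstruct a coloring --- the proposal is a plausible blueprint but not a proof.
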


We prove Theorem~\ref{thm:IntroLower} in a stronger form in several respects. First, we show that the near-quadratic lower bound persists even when the $H$-$\Col$ problem is parameterized by the total number of vertices in the input graph. Since the entire vertex set of a graph trivially forms a vertex cover, this directly implies the statement above. Second, the proof encompasses not only the graphs $H$ stated in the theorem but also extends to various other graphs $H$. Specifically, the result is proved for all non-bipartite core graphs that are {\em projective}, a class that is known to cover nearly all graphs, including prominent and well-studied graph families (see Section~\ref{sec:lowerConcrete} and, e.g.,~\cite{Larose02}). Third, as is standard in such results, the lower bound applies to the compressibility of the $\HCol$ problem into any target problem, not necessarily itself. Our generalized result is stated as follows.

\begin{theorem}\label{thm:IntroLowerGen}
For every non-bipartite projective core graph $H$ and for any real number $\eps >0$, the $\HCol$ problem parameterized by the number of vertices $n$ does not admit a compression of size $O(n^{2-\eps})$ unless $\NP \subseteq \coNPpoly$.
\end{theorem}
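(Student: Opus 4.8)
The plan is to invoke the cross-composition framework of Bodlaender, Jansen, and Kratsch (built on the distillation lower bounds of Fortnow--Santhanam, Dell--van Melkebeek, and Dell--Marx): it suffices to exhibit a degree-$2$ OR-cross-composition from an $\NP$-hard language into the parameterized problem $\HCol$ with parameter $n$, the number of vertices of the input graph, since the existence of such a composition rules out a compression of size $O(n^{2-\eps})$ unless $\NP\subseteq\coNPpoly$. Because $H$ is a non-bipartite core it is loopless, so by the Hell--Ne\v{s}et\v{r}il dichotomy $\HCol$ is itself $\NP$-hard, and we cross-compose from it. Thus we need a polynomial-time algorithm that, given $t$ graphs $G_1,\dots,G_t$ sharing a common vertex set $[N]$ (the polynomial equivalence relation places two graphs in the same class when they have the same number of vertices, and malformed inputs are sent to a fixed no-instance), outputs a graph $G^\star$ with $\card{V(G^\star)}\le \sqrt{t}\cdot\poly(N)$ such that $G^\star \to H$ if and only if $G_i\to H$ for some $i\in[t]$.

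For the construction, first pad the instance count to a perfect square $t=s^2$ by adding copies of $K_{\card{V(H)}+1}$ (a no-instance, as $H$ is loopless on $\card{V(H)}$ vertices), and reindex the instances as $(G_{a,b})_{a,b\in[s]}$. The graph $G^\star$ is assembled from three kinds of parts. First, a \emph{master palette} $M$, a disjoint copy of $H$ itself: since $H$ is a core, every homomorphism $\phi\colon G^\star\to H$ restricts on $M$ to an automorphism of $H$, so after post-composing $\phi$ with the inverse of this automorphism we may assume $\phi$ is the identity on $M$ and hence treat the vertices of $M$ as \emph{labelled} by $V(H)$; moreover, since a core admits no neighbourhood-domination between distinct vertices, for every $c\in V(H)$ we have $\bigcap_{c'\in N_H(c)}N_H(c')=\sett{c}$, so attaching a vertex of $G^\star$ to the $M$-vertices labelled by $N_H(c)$ pins its image to $c$ — this yields constant gadgets and, in the spirit of the indicator constructions of Hell and Ne\v{s}et\v{r}il, small gadgets realizing richer local constraints. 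Second, a \emph{grid} of vertex groups $L_1,\dots,L_s$ and $R_1,\dots,R_s$, each a copy of $[N]$ with a little slack, into which the instances are encoded \emph{densely}: the pair $(a,b)$ contributes a bundle of edges between $L_a$ and $R_b$ coding the adjacency of $G_{a,b}$, arranged so that a homomorphism $\phi$ which agrees on $L_a$ and $R_b$ restricts there to a homomorphism $G_{a,b}\to H$. Third, a \emph{selector} gadget on $O(s)$ vertices, wired through the palette $M$, whose homomorphisms to $H$ encode a genuine choice of a single pair $(a^\star,b^\star)\in[s]\times[s]$, force agreement of $\phi$ on $L_{a^\star}$ and $R_{b^\star}$, and \emph{neutralize} every bundle other than the $(a^\star,b^\star)$-bundle. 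Granting such a selector, correctness is routine: if $G_{a,b}\to H$, set the selector to $(a,b)$, copy a homomorphism $G_{a,b}\to H$ onto both $L_a$ and $R_b$, and extend arbitrarily elsewhere; conversely, normalize a homomorphism $\phi\colon G^\star\to H$ using $\phi|_M$, read off the selected $(a^\star,b^\star)$, and observe that the activated bundle forces $\phi|_{L_{a^\star}}$ to be a homomorphism $G_{a^\star,b^\star}\to H$.

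The step I expect to be the crux is the design and verification of the selector/activation gadget under the two tensions that pin the composition degree to exactly $2$. On one hand, the total vertex budget $\sqrt{t}\cdot\poly(N)$ forbids any per-cell machinery — there are $t$ cells but only $O(\sqrt t)$ vertex groups, each shared across a whole row or column — so the instances must reside on the $\Theta(t\cdot\poly(N))$ available \emph{edges} and the neutralization of a non-selected bundle must be triggered ``in bulk'' by the $O(\sqrt t)$-vertex selector rather than edge by edge. On the other hand, the construction must be uniform over the entire class of non-bipartite projective cores $H$, so no special structure of a particular $H$ beyond core-ness and projectivity may be used. Obtaining, for every such $H$, a selector that forces a bona fide single choice — ruling out the degenerate ``all bundles off'' or ``several bundles on'' configurations that would break the equivalence — is exactly where projectivity does the real work: it is what endows the homomorphisms out of the selector, which inevitably involve powers of $H$, with enough rigidity to behave like a choice function, and making this precise is the technical heart of the argument; the cross-composition bookkeeping, the padding, and the two directions of correctness are then standard. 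Finally, the same composition yields Theorem~\ref{thm:IntroLower}: the odd cycles $C_{2m+1}$ and the $3$-chromatic Kneser graphs $K(2r+1,r)$ are non-bipartite projective cores, and a compression lower bound under the parameter $n$ specializes to a kernelization lower bound under the vertex cover number $k\le n$.
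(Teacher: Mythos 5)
There is a genuine gap, and it sits exactly where you place it yourself: the selector/activation gadget. Your plan is to build, from scratch, a degree-$2$ OR-cross-composition of $\HCol$ into itself, encoding $t=s^2$ instances on the edges between $O(s)$ vertex groups and using an $O(s)$-vertex selector to activate one bundle and ``neutralize'' the rest. But every edge you place between $L_a$ and $R_b$ must be mapped to an edge of $H$ by \emph{every} homomorphism, regardless of the selector's state, so non-selected bundles cannot simply be switched off; the standard fix is a per-potential-edge switching gadget, which costs $\Theta(t\cdot N^2)$ extra vertices and destroys the $O(\sqrt{t}\cdot\poly(N))$ budget. You acknowledge that resolving this tension uniformly over all non-bipartite projective cores is ``the technical heart of the argument,'' but you do not construct the gadget, state what property of projective cores it would rely on beyond a vague appeal to ``rigidity,'' or verify either direction of correctness. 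Since this single missing piece is essentially the entire content of the theorem (it is, in substance, the cross-composition that Chen, Jansen, Okrasa, Pieterse, and Rz\k{a}\.{z}ewski carried out for $\ListHCol$), the proposal does not constitute a proof.

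The paper avoids this difficulty entirely. It invokes the known result that $\ListHCol$ parameterized by $n$ admits no $O(n^{2-\eps})$ compression for any non-bi-arc $H$ (Theorem~\ref{thm:ChenJOPR}), notes that a loopless non-bipartite $H$ is not bi-arc, and then gives a \emph{linear-parameter transformation} from $\ListHCol$ to $\HCol$: attach a disjoint copy of $H$ (your ``master palette'' idea does appear here, with the same core-normalization argument), and for each vertex $v$ and each forbidden color $h\notin L(v)$ add a copy of the edge gadget $F$ from Proposition~\ref{prop:projective} forcing $f(v)\neq h$. Projectivity enters only through the existence of this constant-size edge gadget, and the whole composition machinery is used as a black box. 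If you want to salvage your route, the minimal repair is to likewise reduce from a problem whose quadratic compression lower bound is already known, rather than attempting the composition directly.
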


The proof of Theorem~\ref{thm:IntroLowerGen} is based on a compression lower bound, due to Chen, Jansen, Okrasa, Pieterse, and Rz{\k a}\.{z}ewski~\cite{ChenJOPR23}, for a variant of the $\HCol$ problem called $\ListHCol$. In this variant, the input consists of a graph $G$ along with a list of allowed vertices of $H$ for each vertex of $G$, and the goal is to decide whether there exists a homomorphism from $G$ to $H$ that respects these lists (see Definition~\ref{def:list}). This problem generalizes the standard $\HCol$ problem, which corresponds to the special case where all lists are equal to the full vertex set of $H$. Feder, Hell, and Huang~\cite{FederHH99} introduced a class of geometric intersection graphs, called {\em bi-arc graphs}, and used this class to fully characterize the complexity of the $\ListHCol$ problem. They showed that the problem is solvable in polynomial time if $H$ is bi-arc, and $\NP$-hard otherwise. Chen et al.~\cite{ChenJOPR23} proved that, unless $\NP \subseteq \coNPpoly$, the $\ListHCol$ problem parameterized by the number of vertices admits no sub-quadratic compression for any $H$ that is not bi-arc (see Theorem~\ref{thm:ChenJOPR}). They also posed the question of whether such a lower bound holds for the (non-list) $\HCol$ problem, when $H$ is loopless and non-bipartite. As a step toward this challenge, Theorem~\ref{thm:IntroLowerGen} confirms the lower bound for the substantial class of non-bipartite projective core graphs $H$. To this end, we show that for such graphs $H$, instances of $\ListHCol$ can be efficiently reduced into instances of $\HCol$, with the number of vertices preserved up to a multiplicative constant. The proof borrows a gadget construction due to Okrasa and Rz{\k a}\.{z}ewski~\cite{OkrasaR21} and echoes early $\NP$-hardness proofs for $\HCol$ problems~\cite{Nesetril81,MaurerSW81b}, as well as recent lower bounds in fine-grained complexity~\cite{OkrasaR21,PiecykR21}.

When the $\HCol$ problems are parameterized by the number of vertices, they admit a trivial kernel of quadratic size. Therefore, the lower bounds implied by Theorem~\ref{thm:IntroLowerGen} for the vertex cover parameterization are limited to the near-quadratic regime. To surpass this barrier, we establish a lower bound on the kernel size of the $\HCol$ problem parameterized by the vertex cover number for all non-bipartite projective core graphs $H$ with $q(H) \geq 4$. We show that, for each such graph $H$, the degree of any polynomial that bounds the kernel size is unlikely to be smaller than the non-adjacency witness number of $H$ minus one.

\begin{theorem}\label{thm:IntroLowerGenQ(H)}
For every non-bipartite projective core graph $H$ with $q(H) \geq 4$ and for any real number $\eps >0$, the $\HCol$ problem parameterized by the vertex cover number $k$ does not admit a compression of size $O(k^{q(H)-1-\eps})$ unless $\NP \subseteq \coNPpoly$.
\end{theorem}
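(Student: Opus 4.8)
The plan is to establish the bound through a \emph{degree-$(q-1)$ cross-composition}, where $q := q(H) \ge 4$; by the standard kernelization-lower-bound framework, such a composition rules out a compression of $\HCol$ parameterized by the vertex cover number of size $O(k^{q-1-\eps})$ unless $\NP \subseteq \coNPpoly$. Concretely, I would give a polynomial-time algorithm that maps $t$ instances $x_1,\dots,x_t$ of a fixed $\NP$-hard problem $R$, each of size at most $n$, to a single instance of $\HCol$ equipped with a vertex cover of size $O(t^{1/(q-1)}\cdot \poly(n))$, which is a YES-instance of $\HCol$ if and only if some $x_j$ is a YES-instance of $R$. Since $H$ is a non-bipartite projective core, the gadget toolbox of Okrasa and Rz{\k a}\.{z}ewski (the one already used for Theorem~\ref{thm:IntroLowerGen}) is available: one may attach to any bounded set of vertices a bounded-size gadget pinning them to prescribed sets of $H$-colors, at a cost of $O(1)$ extra cover vertices per pinned vertex. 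Equivalently, one may first run the composition for the list variant $\ListHCol$ (with lists only on the eventual cover vertices) and then apply the projective-core reduction $\ListHCol\to\HCol$ from the proof of Theorem~\ref{thm:IntroLowerGen}, which changes the vertex cover number by only a constant factor.

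The combinatorial heart is a \emph{minimal non-Helly configuration}: since $q(H)=q$, there are vertices $w_1,\dots,w_q$ of $H$ with no common neighbor such that for every $i$ the set $\{w_j : j\ne i\}$ \emph{does} have a common neighbor; fix in particular a common neighbor $z$ of $\{w_1,\dots,w_{q-1}\}$. Out of this configuration I would build a \emph{selection gadget}: a small graph attached to $q-1$ ``address'' vertices $p_1,\dots,p_{q-1}$ and one ``data'' vertex $y$ — all in the cover, with $p_\ell$ pinned to a two-element set $\{w_\ell,\star_\ell\}$ and $y$ pinned to an ``accepting'' and a ``rejecting'' value — which admits an $H$-coloring extending a given assignment to these vertices if and only if the addresses do \emph{not} spell out $(w_1,\dots,w_{q-1})$, \emph{or} $y$ carries its accepting value. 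The ``active and rejecting $\Rightarrow$ infeasible'' implication is immediate from the non-Helly property, and ``active and accepting $\Rightarrow$ feasible'' follows from minimality, as a common neighbor of $\{w_1,\dots,w_{q-1}\}$ persists; crucially, the gadget's own vertices carry no list, so their feasibility is controlled entirely by cover vertices.

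The composition then follows the familiar template. After the usual OR-preprocessing we may assume $t=s^{q-1}$ with $s=O(t^{1/(q-1)})$ and that all $x_j$ share a common ``constraint universe'' of size $\poly(n)$, so that an instance is specified by the subset of constraints it activates on a fixed data structure. Index the instances by $\vec a\in[s]^{q-1}$; the cover consists of $q-1$ ``address blocks'' $B_1,\dots,B_{q-1}$ of size $s$ (in a valid coloring, $B_\ell$ carries $w_\ell$ on a single vertex and $\star_\ell$ on the rest), the universal data structure, and $\poly(n)$ auxiliary vertices, for a total of $O(t^{1/(q-1)}\cdot\poly(n))$. For each $\vec a$ and each constraint $c$ of $I_{\vec a}$, attach a copy of the selection gadget to the $q-1$ address vertices picked out by $\vec a$ and to the data vertices of $c$. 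A valid coloring then ``selects'' a unique index $\vec a^\star$, makes every selection gadget of the remaining indices vacuously satisfiable, and forces exactly the constraints of $I_{\vec a^\star}$ to hold on the data structure; hence the composed instance is a YES-instance iff some $I_{\vec a}$ is.

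The step I expect to be the main obstacle is the correctness of the selection gadget in the \emph{non-active} case: one must ensure that whenever the address vertices carry anything other than $(w_1,\dots,w_{q-1})$, the gadget stays $H$-colorable for \emph{both} values of $y$. Since the $w_i$ need not be pairwise adjacent and a core $H$ has no universal vertex (and no twins), this is not formally implied by $q(H)=q$ alone; it calls for a finer use of minimality — that every $(q-1)$-subset of $\{w_1,\dots,w_q\}$ has a common neighbor — together with a judicious choice of the off-values $\star_\ell$ and the accepting/rejecting values, possibly with a constant number of extra pinned helper vertices inside the gadget, and quite possibly with the gadget realized as a small graph rather than a single vertex. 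A secondary, more routine point is choosing the $\NP$-hard source $R$ and its ``universal constraint'' form so that the activated-constraints encoding is faithful with no super-polynomial overhead, and checking that the data structure and helpers contribute only $\poly(n)$ cover vertices, so that the address blocks remain the dominant term $\Theta(t^{1/(q-1)})$. With the gadget in hand, the cross-composition lemma (followed, if one used the list variant, by the projective-core transfer) delivers the claimed bound.
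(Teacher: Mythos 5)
Your high-level framework (a degree-$(q-1)$ cross-composition) is a legitimate alternative to what the paper does, but your proposal has a genuine gap at exactly the point you flag yourself: the selection gadget is never constructed, and its existence does not follow from the properties you have at hand. The minimal non-Helly configuration $\{w_1,\dots,w_q\}$ gives you one clean logical primitive: a vertex whose $q$ neighbors are colored with elements of this set is colorable if and only if the neighbors miss at least one $w_i$. That is a ``not all $q$ values appear'' constraint. Your gadget, by contrast, must realize ``(addresses $\ne (w_1,\dots,w_{q-1})$) OR ($y$ is accepting)'' while remaining feasible for \emph{every} non-active address pattern and \emph{both} values of $y$ --- an asymmetric implication between an address predicate and a data predicate. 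Nothing in $q(H)=q$, projectivity, or the edge gadget of Proposition~\ref{prop:projective} supplies such a gadget for an arbitrary core $H$; the edge gadget only enforces inequality, and the non-Helly set only detects ``all $q$ values present.'' Since the entire composition rests on this gadget, the proof does not go through as written. A secondary unaddressed point is how you force each address block $B_\ell$ to place $w_\ell$ on exactly one vertex; ``exactly one'' constraints also need gadgetry that you have not provided.

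The paper sidesteps this difficulty entirely by not redoing the composition. It invokes the known compression lower bound of $O(n^{q-1-\eps})$ for $\qNAE$ parameterized by the number of variables (Theorem~\ref{thm:NAE}, which already encapsulates the Dell--van Melkebeek composition), and then gives a \emph{linear-parameter transformation} from $\qNAE$ to $\HCol$. The point is that a NAE clause on $q$ literals is precisely a ``not all $q$ values appear'' constraint, so it maps directly onto the non-Helly configuration: each clause becomes a single vertex outside the cover whose $q$ neighbors are forced (via variable gadgets built from the edge gadget and a pinned copy of $H$) into $T=\{a_1,\dots,a_q\}$, and the clause vertex is colorable iff its neighbors occupy a proper subset of $T$. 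The only gadget needed is the inequality gadget from projectivity, which is exactly what Proposition~\ref{prop:projective} provides. If you want to salvage your approach, the most economical fix is to replace your bespoke selection gadget by this observation: reduce from a problem whose constraints are already of ``not-all-equal over $q$ slots'' form, so that the non-Helly set does all the work and no OR-with-addresses gadget is required.
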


We note that the lower bound stated in Theorem~\ref{thm:IntroLowerGenQ(H)} is reasonably close to our upper bounds. Indeed, for every non-bipartite projective core graph $H$ with $q(H) \geq 4$, the upper bound given in Theorem~\ref{thm:Intro_kernel_q} exceeds the lower bound from Theorem~\ref{thm:IntroLowerGenQ(H)} by only a near-linear multiplicative factor. Since almost all graphs are non-bipartite projective cores, our results determine the kernel complexity of the problem up to a near-linear factor for almost every graph $H$. Furthermore, if a non-bipartite projective core graph $H$ admits a faithful independent representation of dimension $q(H) \geq 4$ over a computationally efficient field, then the kernel given in Theorem~\ref{thm:Intro_kernel_d} essentially meets the lower bound (see also Theorem~\ref{thm:kernel_IR}). As an illustrative consequence, we derive that the kernel sizes we achieve for all non-bipartite Kneser graphs are nearly tight (extending Item~\ref{itm:IntroLowerKneser} of Theorem~\ref{thm:IntroLower}; see Theorem~\ref{thm:LowerKneser}). The proof of Theorem~\ref{thm:IntroLowerGenQ(H)} relies on lower bounds on the compression size of certain satisfiability problems, extends the approach developed in~\cite{JansenK13} for coloring problems, and incorporates the aforementioned gadget construction from~\cite{OkrasaR21}.

\subsection{Outline}
The rest of the paper is organized as follows. In Section~\ref{sec:pre}, we collect several definitions and facts that will be used throughout the paper. In Section~\ref{sec:kernels}, we present and analyze our combinatorial and algebraic kernels for the $\HCol$ problem parameterized by the vertex cover number, thereby proving Theorems~\ref{thm:Intro_kernel_q} and~\ref{thm:Intro_kernel_d}. In Section~\ref{sec:lower}, we obtain lower bounds on the compression size of $\HCol$ problems with respect to two parameterizations, the number of vertices and the vertex cover number, confirming Theorems~\ref{thm:IntroLowerGen} and~\ref{thm:IntroLowerGenQ(H)}. In Section~\ref{sec:NAWN}, we conduct a systematic study of the non-adjacency witness number of graphs. In Section~\ref{sec:app}, we apply the results from the previous sections and present concrete upper and lower bounds on the kernel complexity of $\HCol$ problems parameterized by the vertex cover number for various graphs $H$. In particular, we confirm Theorems~\ref{thm:IntroOD} and~\ref{thm:IntroLower} therein. Concluding remarks and open questions appear in the final Section~\ref{sec:conclude}.

\section{Preliminaries}\label{sec:pre}

Throughout the paper, we omit floor and ceiling signs when they are not crucial, and all logarithms are in base $2$. For a positive integer $n$, we write $[n]$ to denote the set $\{1,2,\ldots,n\}$.

\subsection{Graphs}\label{sec:graphs}
The graphs considered in this paper are, unless stated otherwise, finite and simple, meaning they have no loops or parallel edges. When loops are allowed, this is stated explicitly, and we treat them as edges of the form $\{v\}$ for a vertex $v$. For a graph $G=(V,E)$ and a set $X \subseteq V$, we let $G[X]$ denote the subgraph of $G$ induced by $X$. The set $X$ is called a {\em clique} when the graph $G[X]$ is complete, and a {\em vertex cover} when the graph $G[V \setminus X]$ is edgeless. For a vertex $v \in V$, we let $N_G(v)$ denote the set of neighbors of $v$ in $G$. As is customary, we denote by $\Delta(G)$ the maximum degree of a vertex in $G$, and by $\omega(G)$ the maximum size of a clique in $G$. For a positive integer $d$, a graph $G$ is called {\em $d$-degenerate} if every subgraph of $G$ has a vertex of degree at most $d$. A graph is {\em planar} if it can be drawn in the plane without any edges crossing, except at their endpoints. It is well known that every planar graph is $5$-degenerate. The {\em chromatic number} of a graph is the minimum number of colors needed for a vertex coloring in which no two adjacent vertices share a color.

For two graphs $G = (V_G,E_G)$ and $H = (V_H,E_H)$, possibly with loops, a {\em homomorphism} from $G$ to $H$ is a mapping $f:V_G \rightarrow V_H$ such that for all vertices $u,v \in V_G$ with $\{u,v\} \in E_G$, it holds that $\{f(u),f(v)\} \in E_H$. If there exists a homomorphism from $G$ to $H$, the graph $G$ is said to be {\em $H$-colorable}. When $G$ and $H$ are simple, an {\em isomorphism} from $G$ to $H$ is a bijection $f:V_G \rightarrow V_H$ such that for all vertices $u,v \in V_G$, it holds that $\{u,v\} \in E_G$ if and only if $\{f(u),f(v)\} \in E_H$. If such an isomorphism exists, $G$ and $H$ are said to be {\em isomorphic}. An isomorphism from a graph to itself is called an {\em automorphism}. A {\em core} of a graph $G$ is a graph $H$ with a minimum number of vertices, such that there exist homomorphisms from $G$ to $H$ and from $H$ to $G$. It is known that a core of a graph $G$ is isomorphic to an induced subgraph of $G$. A graph is called a {\em core} if it is a core of itself. It is well known that a graph is a core if and only if every homomorphism from the graph to itself is an automorphism. For a detailed exposition of cores, see, e.g.,~\cite{HellN92}.

We introduce here a few important families of graphs. For a positive integer $m$, we denote by $K_m$ and $C_m$ the complete and cycle graphs on $m$ vertices, respectively. For positive integers $m$ and $r$ with $m \geq 2r$, the {\em Kneser graph} $K(m,r)$ is defined as the graph whose vertex set consists of all $r$-subsets of $[m]$, where two vertices are adjacent if and only if the corresponding subsets are disjoint.

\subsection{Linear Algebra}

For a field $\Fset$ and an integer $d$, two vectors $x,y \in \Fset^d$ are said to be {\em orthogonal} if $\langle x,y \rangle = 0$, with respect to the standard inner product defined by $\langle x,y \rangle = \sum_{i=1}^{d}{x_i y_i}$, where all operations are performed over $\Fset$. A vector $x \in \Fset^d$ is {\em self-orthogonal} if $\langle x,x \rangle = 0$, and {\em non-self-orthogonal} otherwise. We denote by $x^t$ the transpose of the column vector $x$, and use the same notation for the transpose of a matrix.

The {\em order} of a finite field is the number of its elements. It is well known that the order of any finite field is a prime power, and that every prime power is the order of some finite field. For two fields $\Fset$ and $\Kset$, we say that $\Kset$ is an {\em extension field} of $\Fset$ if $\Fset \subseteq \Kset$ and the operations of $\Fset$ agree with those of $\Kset$ when restricted to elements in $\Fset$. A finite field of order $p^m$, for a prime $p$ and a positive integer $m$, has an extension field of order $p^{m'}$ for every integer $m' \geq m$.

A multivariate polynomial over a field $\Fset$ is called {\em homogeneous of degree $d$} if each of its monomials has degree $d$, and is called {\em multilinear} if each of them forms a product of distinct variables. The set of all multilinear homogeneous polynomials of degree $d$ in $n$ variables over a field $\Fset$ forms a vector space over $\Fset$ of dimension $\binom{n}{d}$.

\subsection{Parameterized Complexity}

We collect here basic definitions on kernelization from the field of parameterized complexity. For an in-depth introduction to the topic, the reader is referred to, e.g.,~\cite{KernelBook19}.

A {\em parameterized problem} is a set $Q \subseteq \Sigma^* \times \N$ for some finite alphabet $\Sigma$. A {\em compression} (also known as generalized kernel) for a parameterized problem $Q \subseteq \Sigma^* \times \N$ into a parameterized problem $Q' \subseteq \Sigma^* \times \N$ is an algorithm that given an instance $(x,k) \in \Sigma^* \times \N$ returns in time polynomial in $|x|+k$ an instance $(x',k') \in \Sigma^* \times \N$, such that $(x,k) \in Q$ if and only if $(x',k') \in Q'$, and in addition, $|x'|+k' \leq h(k)$ for some computable function $h$. The function $h$ is called the {\em size} of the compression. If $|\Sigma|=2$, the function $h$ is called the {\em bit-size} of the compression. When we say that a parameterized problem $Q$ admits a compression of size $h$, we mean that there exists a compression of size $h$ for $Q$ into {\em some} parameterized problem. A compression for a parameterized problem $Q$ into itself is called a {\em kernelization}, or simply a {\em kernel}, for $Q$.

A {\em transformation} from a parameterized problem $Q \subseteq \Sigma^* \times \N$ into a parameterized problem $Q' \subseteq \Sigma^* \times \N$ is an algorithm that given an instance $(x,k) \in \Sigma^* \times \N$ returns in time polynomial in $|x|+k$ an instance $(x',k') \in \Sigma^* \times \N$, such that $(x,k) \in Q$ if and only if $(x',k') \in Q'$, and in addition, $k' \leq h(k)$ for some computable function $h$. If $h$ is linear, the transformation is called {\em linear-parameter}.

For a fixed graph $H$, the computational {\em $\HCol$} problem asks to decide whether an input graph $G$ is $H$-colorable. When parameterized by the vertex cover number, the input further includes a vertex cover of $G$, whose size is the parameter of the problem. While this definition is standard and convenient, we note that including the vertex cover $X$ in the input is not essential, as $X$ can be replaced by a vertex cover of $G$ computed via an efficient $2$-approximation algorithm.

\section{Kernels for \texorpdfstring{$\HCol$}{H-Coloring}}\label{sec:kernels}

In this section, we present and analyze our kernelization algorithms for the $\HCol$ problem parameterized by the vertex cover number. We begin with a combinatorial kernel, whose size is governed by the non-adjacency witness number of $H$, and then proceed to an algebraic kernel, whose size is bounded in terms of the dimension of a faithful independent representation of $H$.

\subsection{The Combinatorial Kernel}

The non-adjacency witness number of a graph measures the smallest number of vertices needed to certify that a given set of vertices has no common neighbor. We formally define it below and study it in detail in Section~\ref{sec:NAWN}.

\begin{definition}\label{def:q(G)}
The {\em non-adjacency witness number} of a graph $G = (V,E)$, denoted $q(G)$, is the smallest positive integer $q$ such that for every set $T \subseteq V$ of vertices with no common neighbor in $G$, there exists a set $T' \subseteq T$ of size $|T'| \leq q$ with no common neighbor in $G$.
\end{definition}

A main ingredient of our combinatorial kernel is the following lemma.

\begin{lemma}\label{lemma:G'}
Let $H=(V_H,E_H)$ be a graph, and let $q$ be an integer satisfying $q \geq q(H)$.
Consider a graph $G=(V,E)$ and a vertex cover $X \subseteq V$ of $G$ of size $k$.
Define $G'=(V',E')$ as the graph obtained from $G[X]$ by adding, for every non-empty set $S \subseteq X$ of size at most $q$ such that $S \subseteq N_G(v)$ for some $v \in V \setminus X$, a vertex $v_S$ adjacent to the vertices of $S$. Then the following holds.
\begin{enumerate}
  \item\label{itm:G'1} The set $X$ forms a vertex cover of $G'$.
  \item\label{itm:G'2} The number of vertices in $G'$ is at most $k + \sum_{i=1}^{q}{\binom{k}{i}}$.
  \item\label{itm:G'3} The graph $G'$ can be encoded in $\binom{k}{2}+ \sum_{i=1}^{q}{\binom{k}{i}}$ bits.
  \item\label{itm:G'4} The graph $G$ is $H$-colorable if and only if the graph $G'$ is $H$-colorable.
\end{enumerate}
\end{lemma}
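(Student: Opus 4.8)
Items~\ref{itm:G'1}--\ref{itm:G'3} are bookkeeping: Since $G'$ is built from $G[X]$ by adding vertices $v_S$ whose only neighbors lie in $S \subseteq X$, no edge is added inside $V' \setminus X$, so $X$ remains a vertex cover of $G'$, giving~\ref{itm:G'1}. For~\ref{itm:G'2}, the vertex set of $G'$ consists of the $k$ vertices of $X$ together with at most one vertex $v_S$ for each non-empty $S \subseteq X$ with $|S| \le q$, of which there are $\sum_{i=1}^{q}\binom{k}{i}$. For~\ref{itm:G'3}, encode $G[X]$ with $\binom{k}{2}$ bits (one per potential edge among the $k$ cover vertices), and for each of the $\sum_{i=1}^{q}\binom{k}{i}$ candidate sets $S$ record one bit indicating whether the vertex $v_S$ is present; the neighborhood of $v_S$ is then determined by $S$ itself, so no further bits are needed.

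The substantive part is~\ref{itm:G'4}. The forward direction is easy: given a homomorphism $f : G \to H$, define $f'$ on $G'$ by keeping $f$ on $X$, and for each added vertex $v_S$, choosing $f'(v_S)$ to be any common neighbor in $H$ of $\{f(w) : w \in S\}$. Such a common neighbor exists because $S \subseteq N_G(v)$ for some $v \in V\setminus X$, so $f(v)$ is a common neighbor in $H$ of the images of all vertices of $S$; hence $f'$ is a homomorphism $G' \to H$. For the reverse direction, suppose $f' : G' \to H$ is a homomorphism; restricting to $X$ gives a partial $H$-coloring $g$ of $G$, and we must extend it to each $v \in V\setminus X$. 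Fix such a $v$ and let $A = N_G(v) \subseteq X$; we need a common neighbor in $H$ of the set $\{g(w) : w \in A\}$. Suppose for contradiction that $\{g(w) : w \in A\}$ has no common neighbor in $H$. By the definition of $q(H)$ and the hypothesis $q \ge q(H)$, there is a subset of size at most $q$ witnessing this — but some care is needed because the map $w \mapsto g(w)$ need not be injective, so ``a subset of $\{g(w):w\in A\}$ of size $\le q$'' need only pull back to a subset $S \subseteq A$ of size $\le q$ with $\{g(w):w\in S\}$ having no common neighbor in $H$. Since $S \subseteq A = N_G(v)$ and $|S| \le q$, the vertex $v_S$ was added to $G'$, and $f'(v_S)$ is a common neighbor in $H$ of $\{f'(w) : w \in S\} = \{g(w) : w \in S\}$ — contradiction. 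Hence every such $\{g(w):w\in A\}$ has a common neighbor, which we assign to $v$, yielding a homomorphism $G \to H$.

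The one point requiring attention — and the likeliest place to slip — is the non-injectivity of $g$ on $A$ just discussed: one should phrase $q(H)$ and apply Definition~\ref{def:q(G)} to the \emph{set} $T = \{g(w) : w \in A\} \subseteq V_H$, obtain a witnessing subset $T' \subseteq T$ of size $\le q$, and then pick for each element of $T'$ one preimage in $A$ to form $S$; this keeps $|S| = |T'| \le q$ while ensuring $\{g(w):w\in S\} = T'$ has no common neighbor. A second minor subtlety: if $A = \emptyset$ then $v$ has no neighbors and can be mapped to any vertex of $H$ (we may assume $V_H \ne \emptyset$, else $G$ has no vertices either and the statement is vacuous), and if $|A| \le q$ directly then $v_A$ itself is in $G'$ and $f'(v_A)$ already supplies the needed image, so the contradiction argument is only needed when $|A| > q$. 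Everything else is routine.
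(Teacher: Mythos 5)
Your proposal is correct and follows essentially the same route as the paper's proof: the bookkeeping for Items 1--3 is identical, the forward direction of Item 4 assigns to $v_S$ the image $f(v)$ of a witnessing vertex $v$ (which is exactly "a common neighbor of $\{f(w):w\in S\}$"), and the reverse direction uses the same contradiction via Definition~\ref{def:q(G)}, including the pullback of the witnessing subset $T'\subseteq T$ to a set $S\subseteq N_G(v)$ of size at most $q$ to handle non-injectivity. The edge cases you flag ($A=\emptyset$, $|A|\le q$) are harmless and subsumed by the general argument.
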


\begin{proof}
By the definition of the graph $G'$, every edge of $G'$ connects either two vertices of $X$ or a vertex of $X$ to a vertex $v_S$, hence $X$ forms a vertex cover of $G'$.
It also follows from the definition that the number of vertices in $G'$ does not exceed $k + \sum_{i=1}^{q}{\binom{k}{i}}$. Additionally, $G'$ can be represented by a binary string that expresses the adjacencies inside $G[X]$ as well as whether the vertex $v_S$ is included in $G'$ or not, for each non-empty set $S \subseteq X$ of size $|S| \leq q$. Therefore, the graph $G'$ can be encoded in $\binom{k}{2}+ \sum_{i=1}^{q}{\binom{k}{i}}$ bits.

It remains to prove the fourth item of the lemma, namely, that the graph $G$ is $H$-colorable if and only if the graph $G'$ is $H$-colorable.
Suppose first that $G$ is $H$-colorable, and let $f:V \rightarrow V_H$ be a homomorphism from $G$ to $H$.
Consider the function $f':V' \rightarrow V_H$ defined as follows. First, for every vertex $u \in X$, let $f'(u) = f(u)$. Since every edge of $G'[X]$ is also an edge of $G[X]$, its endpoints are mapped by $f'$ to adjacent vertices in $H$. Now, every vertex in $V' \setminus X$ has the form $v_S$ for a set $S \subseteq X$ of size $|S| \leq q$, where there exists a vertex $v \in V \setminus X$ with $S \subseteq N_G(v)$. We define $f'(v_S) = f(v)$ for such a vertex $v$. Since $f$ and $f'$ agree on $X$, it follows that $f'(v_S)$ is adjacent in $H$ to $f'(u)$ for all vertices $u \in S = N_{G'}(v_S)$. We thus obtain that $f'$ is a homomorphism from $G'$ to $H$, so $G'$ is $H$-colorable.

For the converse direction, suppose that $G'$ is $H$-colorable, and let $f':V' \rightarrow V_H$ be a homomorphism from $G'$ to $H$. Consider the function $f:V \rightarrow V_H$ defined as follows. First, for every vertex $u \in X$, let $f(u) = f'(u)$. Since every edge of $G[X]$ is also an edge of $G'[X]$, its endpoints are mapped by $f$ to adjacent vertices in $H$. Next, consider any vertex $u \in V \setminus X$. Since $X$ is a vertex cover of $G$, all the neighbors of $u$ lie in $X$. We claim that there exists a vertex in $H$ that is adjacent to all vertices $f(v)$ with $v \in N_G(u)$. To see this, let $T = \{f(v) \mid v \in N_G(u)\}$, and suppose for contradiction that the vertices of $T$ have no common neighbor in $H$. By the definition of the non-adjacency witness number, there exists a non-empty set $T' \subseteq T$ of size $|T'| \leq q(H) \leq q$ with no common neighbor in $H$. Notice that there also exists a non-empty set $S \subseteq N_G(u)$ of size $|S| \leq q$ such that $T' = \{f(v) \mid v \in S\}$. However, the graph $G'$ includes the corresponding vertex $v_S$, which is mapped by $f'$ to a vertex in $H$ that is adjacent to all vertices of $T'$, a contradiction. This shows that the set $T$ must admit a common neighbor in $H$, allowing us to define $f(u)$ as such a neighbor. It follows that $f$ is a homomorphism from $G$ to $H$, so $G$ is $H$-colorable, as desired.
\end{proof}

We are ready to derive Theorem~\ref{thm:Intro_kernel_q}, which states that for every graph $H$, the $\HCol$ problem parameterized by the vertex cover number $k$ admits a kernel with $O(k^{q(H)})$ vertices and bit-size $O(k^{q(H)})$. Consequences for concrete graph classes appear in Section~\ref{sec:app}.

\begin{proof}[ of Theorem~\ref{thm:Intro_kernel_q}]
Let $H=(V_H,E_H)$ be a fixed graph, and set $q = q(H)$. If $H$ is edgeless, then the $\HCol$ problem is equivalent to determining whether an input graph is edgeless, and is thus solvable in polynomial time. Therefore, we may and will assume that $H$ has at least one edge, which easily implies that $q \geq 2$ (see Lemma~\ref{lemma:q_bounds}).

The input of the $\HCol$ problem parameterized by the vertex cover number $k$ consists of a graph $G=(V,E)$ and a vertex cover $X \subseteq V$ of $G$ of size $|X|=k$.
Consider the kernelization algorithm that, given such an input, produces the graph $G'=(V',E')$ defined as in Lemma~\ref{lemma:G'}, that is, the graph obtained from $G[X]$ by adding, for every non-empty set $S \subseteq X$ of size $|S| \leq q$ such that $S \subseteq N_G(v)$ for some $v \in V \setminus X$, a vertex $v_S$ adjacent to the vertices of $S$. Then, the algorithm returns the graph $G'$ along with the set $X$.

The analysis of the algorithm is based on Lemma~\ref{lemma:G'}. By Item~\ref{itm:G'1} of the lemma, the set $X$ forms a vertex cover of $G'$, so the pair $(G',X)$ produced by the algorithm is a valid output. By Item~\ref{itm:G'2}, the number of vertices in $G'$ satisfies $|V'| \leq k + \sum_{i=1}^{q}{\binom{k}{i}} \leq O(k^q)$, and by Item~\ref{itm:G'3}, the number of bits needed to encode $G'$ is at most $\binom{k}{2}+ \sum_{i=1}^{q}{\binom{k}{i}}$, which is $O(k^q)$ since $q \geq 2$. Item~\ref{itm:G'4} ensures the correctness of the kernelization algorithm. Finally, since $q$ is a fixed constant, one can enumerate all subsets of $X$ of size at most $q$ and construct the graph $G'$ in polynomial time. The proof is now complete.
\end{proof}

\subsection{The Algebraic Kernel}

Our algebraic kernel for the $\HCol$ problem parameterized by the vertex cover number uses the notion of independent representations, a generalization of orthogonal representations of graphs.

\subsubsection{Orthogonal and Independent Graph Representations}

We begin with formal definitions of these two representation types.

\begin{definition}\label{def:OR}
A {\em $d$-dimensional orthogonal representation} of a graph $G=(V,E)$ over a field $\Fset$ is an assignment of a vector $x_v \in \Fset^d$ with $\langle x_v,x_v\rangle \neq 0$ to each vertex $v \in V$, such that for all vertices $u,v \in V$, if $\{u,v\} \in E$ then $\langle x_u,x_v \rangle =0$. The orthogonal representation is called {\em faithful} if for all vertices $u,v \in V$, it holds that $\{u,v\} \in E$ if and only if $\langle x_u,x_v \rangle =0$.
\end{definition}

\begin{definition}\label{def:IR}
A {\em $d$-dimensional independent representation} of a graph $G=(V,E)$ over a field $\Fset$ is an assignment of a vector $x_v \in \Fset^d$ to each vertex $v \in V$, such that for every vertex $v \in V$, it holds that $x_v \notin \linspan(\{x_w \mid w \in N_G(v)\})$. The independent representation is called {\em faithful} if for all vertices $u,v \in V$, it holds that $\{u,v\} \in E$ if and only if $x_u \in \linspan(\{x_w \mid w \in N_G(v)\})$. Notice that the forward implication holds trivially.
\end{definition}
\noindent
We note that all vectors in an independent representation of a graph are nonzero. It can be verified that a graph has a $2$-dimensional independent representation over a field if and only if it is bipartite.

\begin{remark}\label{remark:OD_ID}
Every (faithful) orthogonal representation of a graph $G=(V,E)$ over a field $\Fset$ is also a (faithful) independent representation of $G$ over $\Fset$. To see this, consider an orthogonal representation $(x_v)_{v \in V}$ of $G$ over $\Fset$. For each vertex $v \in V$, the vector $x_v$ is orthogonal to the vectors $x_w$ with $w \in N_G(v)$, and thus to the subspace they span, but not to itself, so $x_v \notin \linspan(\{x_w \mid w \in N_G(v)\})$. This shows that $(x_v)_{v \in V}$ forms an independent representation of $G$.
If the orthogonal representation $(x_v)_{v \in V}$ of $G$ is faithful, then for all vertices $u,v \in V$ with $\{u,v\} \notin E$, the vector $x_v$ is not orthogonal to $x_u$ but is orthogonal to the vectors $x_w$ with $w \in N_G(v)$, implying that $x_u \notin \linspan(\{x_w \mid w \in N_G(v)\})$. This confirms that $(x_v)_{v \in V}$ forms a faithful independent representation of $G$ in this case.
\end{remark}

We proceed by presenting several lemmas that will be used in the analysis of the algebraic kernel. The first lemma shows that the dimension of any faithful independent representation of a graph forms an upper bound on its non-adjacency witness number (recall Definition~\ref{def:q(G)}).

\begin{lemma}\label{lemma:q_faith}
For a graph $G$, a positive integer $d$, and a field $\Fset$, if $G$ has a faithful $d$-dimensional independent representation over $\Fset$, then $q(G)\leq d$.
\end{lemma}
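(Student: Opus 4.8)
The plan is to use the faithful $d$-dimensional independent representation $(x_v)_{v\in V}$ of $G$ over $\Fset$ to extract, from any witness set $T\subseteq V$ with no common neighbor, a small subwitness of size at most $d$. Given such a $T$ — which we may assume to be nonempty, since otherwise there is nothing to prove — consider the subspace $W=\linspan\{x_v\mid v\in T\}\subseteq \Fset^d$ and choose a subset $T'\subseteq T$ such that $\{x_v\mid v\in T'\}$ is a basis of $W$. Then $|T'|=\dim W\leq d$ and $\linspan\{x_v\mid v\in T'\}=W=\linspan\{x_v\mid v\in T\}$. The goal is to show that $T'$ has no common neighbor in $G$; once this is done, $q(G)\leq d$ follows directly from Definition~\ref{def:q(G)} (with $q=d$).

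To prove that $T'$ has no common neighbor, suppose towards a contradiction that some vertex $w$ is adjacent to every vertex of $T'$. For each $v\in T'$ we have $\{v,w\}\in E$, hence $v\in N_G(w)$, and therefore $x_v\in\linspan\{x_u\mid u\in N_G(w)\}$; this step uses only the trivial forward implication of an independent representation. Summing over $v\in T'$ gives $W=\linspan\{x_v\mid v\in T'\}\subseteq\linspan\{x_u\mid u\in N_G(w)\}$. Now take an arbitrary $v\in T$. Then $x_v\in W\subseteq\linspan\{x_u\mid u\in N_G(w)\}$, and at this point I would invoke \emph{faithfulness} of the representation (Definition~\ref{def:IR}), which converts this containment into the adjacency $\{v,w\}\in E$. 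Since $v\in T$ was arbitrary, $w$ is a common neighbor of all of $T$, contradicting the choice of $T$.

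The argument is short, and the one point requiring care — the main conceptual obstacle, such as it is — is the direction in which faithfulness is used. The implication ``$v\in N_G(w)\ \Rightarrow\ x_v\in\linspan\{x_u\mid u\in N_G(w)\}$'' is automatic for every independent representation, while the converse implication ``$x_v\in\linspan\{x_u\mid u\in N_G(w)\}\ \Rightarrow\ \{v,w\}\in E$'' is precisely what faithfulness supplies and is indispensable: without it, passing from $T'$ back to $T$ would fail and $T'$ need not be a witness. A minor additional subtlety is that $v\mapsto x_v$ need not be injective on $T$; this is harmless, since we only need a linearly independent subfamily of $\{x_v\mid v\in T\}$ spanning $W$, and its index set has size $\dim W\leq d$.
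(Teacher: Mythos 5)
Your proof is correct and follows essentially the same route as the paper's: pick $T'\subseteq T$ indexing a basis of $W=\linspan\{x_v\mid v\in T\}$, note $|T'|\leq d$, and derive a contradiction from a hypothetical common neighbor of $T'$ by using the trivial direction to get $W\subseteq\linspan\{x_u\mid u\in N_G(w)\}$ and faithfulness to propagate adjacency back to all of $T$. No gaps.
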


\begin{proof}
Let $G=(V,E)$ be a graph, and let $(x_v)_{v \in V}$ be a faithful $d$-dimensional independent representation of $G$ over a field $\Fset$.
To prove that $q(G) \leq d$, consider a set $T \subseteq V$ of vertices with no common neighbor in $G$, and let $W = \linspan(\{x_u \mid u \in T\})$ be the subspace spanned by their associated vectors. Define $T' \subseteq T$ as a subset of vertices of $T$ whose associated vectors form a basis of $W$, and note that $W = \linspan(\{x_u \mid u \in T'\})$. Since the subspace $W$ lies in $\Fset^d$, we clearly have $|T'| \leq d$. We claim that the vertices of $T'$ have no common neighbor in $G$. Indeed, suppose for contradiction that there exists a vertex $v \in V$ adjacent to all vertices of $T'$. Then, for every vertex $u \in T'$, it holds that $x_u \in \linspan(\{x_w \mid w \in N_G(v)\})$, and thus $W \subseteq \linspan(\{x_w \mid w \in N_G(v)\})$. This implies that for every vertex $u \in T$, it holds that $x_u \in W \subseteq \linspan(\{x_w \mid w \in N_G(v)\})$, so by the faithfulness of the given representation, $u$ must be adjacent to $v$ in $G$. Therefore, $v$ is a common neighbor of the vertices of $T$, contradicting our assumption. To conclude, every set of vertices with no common neighbor in $G$ has a subset of size at most $d$ with no common neighbor, so $q(G) \leq d$, as desired.
\end{proof}

We next show that faithful independent representations are invariant under invertible linear transformations over extension fields.

\begin{lemma}\label{lemma:Ax_v}
Let $G= (V,E)$ be a graph, and let $\Fset \subseteq \Kset$ be fields such that $\Kset$ is an extension field of $\Fset$. For a positive integer $d$, let $(x_v)_{v \in V}$ be a faithful $d$-dimensional independent representation of $G$ over $\Fset$, and let $A \in \Kset^{d \times d}$ be an invertible matrix. Then $(A x_v)_{v \in V}$ is a faithful $d$-dimensional independent representation of $G$ over $\Kset$.
\end{lemma}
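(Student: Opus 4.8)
The plan is to verify directly from Definition~\ref{def:IR} that the assignment $(Ax_v)_{v\in V}$ satisfies both the independence condition and the faithfulness condition over $\Kset$. The conceptual point is that membership of a vector in the span of a collection of vectors is preserved and reflected by any invertible linear map: for vectors $y, y_1,\dots,y_m$ in $\Kset^d$ and invertible $A\in\Kset^{d\times d}$, one has $y\in\linspan(\{y_1,\dots,y_m\})$ if and only if $Ay\in\linspan(\{Ay_1,\dots,Ay_m\})$. The forward direction is immediate by applying $A$ to a linear combination; the reverse follows by the same argument applied to $A^{-1}$, which exists precisely because $A$ is invertible. Once this elementary observation is in place, everything else is bookkeeping.

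Concretely, I would first note that the vectors $x_v$ lie in $\Fset^d\subseteq\Kset^d$, so the products $Ax_v$ are well-defined vectors in $\Kset^d$, and the representation $(Ax_v)_{v\in V}$ is genuinely $d$-dimensional over $\Kset$. Next, I would fix an arbitrary vertex $v\in V$ and apply the span-preservation observation with $y=x_v$ and $\{y_1,\dots,y_m\}=\{x_w\mid w\in N_G(v)\}$: since $x_v\notin\linspan(\{x_w\mid w\in N_G(v)\})$ by hypothesis, we get $Ax_v\notin\linspan(\{Ax_w\mid w\in N_G(v)\})$, which is exactly the independence requirement for the new representation. Then, for the faithfulness condition, I would fix two arbitrary vertices $u,v\in V$ and again invoke the observation with $y=x_u$ against the neighborhood of $v$: this yields $x_u\in\linspan(\{x_w\mid w\in N_G(v)\})$ if and only if $Ax_u\in\linspan(\{Ax_w\mid w\in N_G(v)\})$. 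Combining this equivalence with the faithfulness of the original representation---namely that $\{u,v\}\in E$ if and only if $x_u\in\linspan(\{x_w\mid w\in N_G(v)\})$---gives that $\{u,v\}\in E$ if and only if $Ax_u\in\linspan(\{Ax_w\mid w\in N_G(v)\})$, which is precisely the faithfulness of $(Ax_v)_{v\in V}$ over $\Kset$.

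There is no real obstacle here; the only point requiring a moment's care is to make sure the span equivalence is stated and used over $\Kset$ (not over $\Fset$), since the matrix $A$ has entries in the extension field and the relevant linear combinations may use coefficients from $\Kset$. This is automatic once we regard all the vectors $x_v$ as elements of $\Kset^d$ from the outset, so no subtlety about field extensions actually enters beyond the bare inclusion $\Fset^d\subseteq\Kset^d$. I would therefore present the span equivalence as a one-line sub-claim, prove it in both directions using $A$ and $A^{-1}$, and then apply it twice as described to conclude.
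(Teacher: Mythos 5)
There is a genuine gap, and it sits exactly at the point you dismiss as ``automatic.'' Your span-preservation observation ($y\in\linspan(\{y_1,\dots,y_m\})$ iff $Ay\in\linspan(\{Ay_1,\dots,Ay_m\})$) is correct \emph{when all spans are taken over $\Kset$}. But the hypothesis of the lemma gives you information about spans over $\Fset$: the representation $(x_v)_{v\in V}$ is faithful and independent over $\Fset$, so what you know is that $x_v\notin\linspan_{\Fset}(\{x_w\mid w\in N_G(v)\})$ and, for non-adjacent $u,v$, that $x_u\notin\linspan_{\Fset}(\{x_w\mid w\in N_G(v)\})$. To feed these into your observation you need the corresponding non-membership in the span \emph{over $\Kset$}. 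Since $\linspan_{\Fset}(S)\subseteq\linspan_{\Kset}(S)$, the inclusion of fields gives you the implication in the wrong direction: a vector not in the $\Fset$-span could a priori lie in the larger $\Kset$-span. Your closing remark that ``no subtlety about field extensions actually enters beyond the bare inclusion $\Fset^d\subseteq\Kset^d$'' is therefore false; reinterpreting the given representation as one over $\Kset$ is precisely the step that must be justified, and it is where the paper's proof does its real work.

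The missing ingredient is the standard but non-trivial fact that for vectors with entries in $\Fset$, linear independence over $\Fset$ implies linear independence over $\Kset$ (equivalently, rank is invariant under field extension). The paper proves this by completing a linearly independent set to a basis of $\Fset^d$ and noting that the resulting determinant, being nonzero in $\Fset$, remains nonzero in $\Kset$; it then applies this to the set $\{x_u\}\cup\{x_w\mid w\in S\}$ where $S$ indexes a basis of $\linspan_{\Fset}(\{x_w\mid w\in N_G(v)\})$. Your argument becomes correct once you insert this fact (or any equivalent rank-invariance argument) to upgrade the $\Fset$-span hypotheses to $\Kset$-span statements before invoking your invertible-map equivalence. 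As written, though, the step ``since $x_v\notin\linspan(\{x_w\mid w\in N_G(v)\})$ by hypothesis, we get $Ax_v\notin\linspan(\{Ax_w\mid w\in N_G(v)\})$'' silently conflates the two fields and is not justified.
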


\begin{proof}
Let $A \in \Kset^{d \times d}$ be an invertible matrix.
We first observe that if $\ell$ vectors $x_1, \ldots, x_\ell \in \Fset^d$ are linearly independent over $\Fset$, then the vectors $Ax_1, \ldots, Ax_\ell \in \Kset^d$ are linearly independent over $\Kset$. To see this, complete $x_1, \ldots, x_\ell$ to a basis of $\Fset^d$, and let $B \in \Fset^{d \times d}$ denote the matrix whose columns form this basis. Since the determinant of $B$ over $\Fset$ is nonzero, it remains nonzero when interpreted over $\Kset$. This implies that the first $\ell$ columns of $B$, namely $x_1, \ldots, x_\ell$, are linearly independent over $\Kset$. Now, suppose that $\sum_{i=1}^{\ell}{\alpha_i Ax_i} = 0$ for some coefficients $\alpha_1,\ldots,\alpha_\ell \in \Kset$. It follows that $A \cdot \sum_{i=1}^{\ell}{\alpha_i x_i} = 0$, and by the invertibility of $A$, we get that $\sum_{i=1}^{\ell}{\alpha_i x_i} = 0$. Since $x_1, \ldots, x_\ell$ are linearly independent over $\Kset$, we conclude that $\alpha_i=0$ for all $i \in [\ell]$, as required.

Now, let $(x_v)_{v \in V}$ be a faithful $d$-dimensional independent representation of $G$ over $\Fset$.
To prove that $(A x_v)_{v \in V}$ is a faithful independent representation of $G$ over $\Kset$, consider two non-adjacent vertices $u$ and $v$ in $G$. Define $W_v = \linspan(\{x_w \mid w \in N_G(v)\}) \subseteq \Fset^d$, and let $S \subseteq N_G(v)$ be a set of neighbors of $v$ such that the vectors $x_w$ with $w \in S$ form a basis of $W_v$. Since the given independent representation is faithful, it follows that $x_u \notin W_v$, so the set $\{x_u\} \cup \{x_w \mid w \in S\}$ is linearly independent over $\Fset$. By the observation above, this implies that the set $\{Ax_u\} \cup \{Ax_w \mid w \in S\}$ is linearly independent over $\Kset$. Hence, $Ax_u \notin \linspan(\{Ax_w \mid w \in S\}) = \linspan(\{Ax_w \mid w \in N_G(v)\})$, and we are done.
\end{proof}

We now show that any faithful independent representation over a field can be transformed, over any sufficiently large extension field, into one of the same dimension in which all vectors start with $1$.

\begin{lemma}\label{lemma:IR_with1}
Let $G= (V,E)$ be a graph, let $d$ be a positive integer, and let $\Fset \subseteq \Kset$ be fields such that $\Kset$ is an extension field of $\Fset$ with $|\Kset| > |V|$. If $G$ has a faithful $d$-dimensional independent representation over $\Fset$, then it also has a faithful $d$-dimensional independent representation over $\Kset$, in which all vectors have $1$ as their first entry.
\end{lemma}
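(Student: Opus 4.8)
The plan is to reduce everything to Lemma~\ref{lemma:Ax_v} by first applying a single invertible linear transformation over $\Kset$ that makes every vector acquire a nonzero first entry, and then rescaling the vectors one by one. The only genuinely new ingredient is the existence of the transformation, and this is precisely where the hypothesis $|\Kset|>|V|$ enters.

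First I would record two routine facts. (i)~Applying a fixed invertible matrix $A\in\Kset^{d\times d}$ to all vectors of a faithful $d$-dimensional independent representation of $G$ over $\Fset$ produces a faithful $d$-dimensional independent representation of $G$ over $\Kset$; this is exactly Lemma~\ref{lemma:Ax_v}. (ii)~Multiplying each individual vector $x_v$ by an arbitrary nonzero scalar $\lambda_v\in\Kset\setminus\{0\}$ again yields a faithful independent representation over the same field: for every $v$ one has $\linspan(\{\lambda_w x_w\mid w\in N_G(v)\})=\linspan(\{x_w\mid w\in N_G(v)\})$, and $\lambda_u x_u$ lies in this subspace if and only if $x_u$ does, so the defining equivalence of Definition~\ref{def:IR} is preserved.

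Now let $(x_v)_{v\in V}$ be a faithful $d$-dimensional independent representation of $G$ over $\Fset$, and recall that every $x_v$ is nonzero. I want an invertible $A\in\Kset^{d\times d}$ whose first row $a\in\Kset^d$ satisfies $\langle a,x_v\rangle\neq 0$ for all $v\in V$, since then the first entry of $Ax_v$ equals $\langle a,x_v\rangle$. Any nonzero $a$ can be completed to a basis of $\Kset^d$ and hence placed as the first row of an invertible matrix, so it suffices to exhibit a single $a\in\Kset^d$ with $\langle a,x_v\rangle\neq 0$ for every $v$. For each $v$, the set $\{a\in\Kset^d\mid\langle a,x_v\rangle=0\}$ is a proper linear subspace of $\Kset^d$ because $x_v\neq 0$, so I am asking for a vector lying outside the union of $|V|$ proper subspaces. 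This exists since $|\Kset|>|V|$: for finite $\Kset$ this union has at most $|V|\cdot|\Kset|^{d-1}<|\Kset|^d$ elements, and for infinite $\Kset$ it is classical that a vector space over an infinite field is not a finite union of proper subspaces. Moreover, any such $a$ is automatically nonzero, as the zero vector lies in each of the excluded subspaces.

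Finally, fix such an $a$, let $A$ be any invertible matrix over $\Kset$ with first row $a$, and set $y_v=Ax_v$. By fact~(i), $(y_v)_{v\in V}$ is a faithful $d$-dimensional independent representation of $G$ over $\Kset$, and by the choice of $a$ we have $(y_v)_1=\langle a,x_v\rangle\neq 0$ for every $v$. Rescaling, set $z_v=((y_v)_1)^{-1}\cdot y_v$; by fact~(ii), $(z_v)_{v\in V}$ is still a faithful $d$-dimensional independent representation of $G$ over $\Kset$, and now every $z_v$ has first entry $1$, which completes the argument. I expect the only step requiring real attention to be the existence of $a$ in the third paragraph, i.e.\ the observation that $|\Kset|>|V|$ prevents $\Kset^d$ from being covered by the $|V|$ hyperplanes $\langle a,x_v\rangle=0$; the rest is bookkeeping on top of Lemma~\ref{lemma:Ax_v}.
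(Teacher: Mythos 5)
Your proof is correct and follows essentially the same route as the paper: find a vector $a$ with $\langle a,x_v\rangle\neq 0$ for all $v$, place it as the first row of an invertible matrix, invoke Lemma~\ref{lemma:Ax_v}, and rescale. The only (immaterial) difference is that you establish the existence of $a$ by noting that $|V|$ hyperplanes cannot cover $\Kset^d$ when $|\Kset|>|V|$, whereas the paper phrases the same counting as a probabilistic argument over a subset of $\Kset$ of size $|V|+1$.
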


\begin{proof}
Let $(x_v)_{v \in V}$ be a faithful $d$-dimensional independent representation of $G$ over $\Fset$. We first show that there exists a vector $y \in \Kset^d$ such that $\langle y , x_v \rangle \neq 0$ for all $v \in V$. This is proved using the probabilistic method. By $|\Kset| > |V|$, one can choose a set $K \subseteq \Kset$ such that $|K| = |V|+1$. Let $y \in \Kset^d$ be a random vector whose entries are chosen uniformly and independently from $K$. We observe that for each vertex $v \in V$, the probability that $\langle y, x_v \rangle = 0$ is at most $\frac{1}{|K|}$. To see this, let $j \in [d]$ be an index for which the $j$th entry of $x_v$ is nonzero. Then, for any fixed values of $y_i$ with $i \in [d] \setminus \{j\}$, there is at most one choice for $y_j$ such that $\langle y,x_v \rangle = 0$. By the union bound, the probability that there exists a vertex $v \in V$ such that $\langle y,x_v \rangle = 0$ is at most $\frac{|V|}{|K|} = \frac{|V|}{|V|+1} <1$, so with positive probability, the inequality $\langle y , x_v \rangle \neq 0$ holds for all $v \in V$. This implies the existence of the desired vector $y$.

Now, fix a vector $y \in \Kset^d$ as above, satisfying $\langle y , x_v \rangle \neq 0$ for all $v \in V$. Since $y$ is nonzero, there exists an invertible matrix $A \in \Kset^{d \times d}$ whose first row is $y^t$. By Lemma~\ref{lemma:Ax_v}, the assignment $(A x_v)_{v \in V}$ forms a faithful $d$-dimensional independent representation of $G$ over $\Kset$. Note that for each vertex $v \in V$, the first entry of $Ax_v$ is $\langle y,x_v \rangle$, and is thus nonzero. We may therefore scale each such vector to make its first entry equal to $1$, obtaining the desired faithful independent representation of $G$ over $\Kset$.
\end{proof}

We finally observe that every graph admits a faithful independent representation of dimension one greater than its maximum degree over any sufficiently large field.

\begin{lemma}\label{lemma:faith_Delta+1}
Let $G= (V,E)$ be a graph, let $\Fset$ be a field with $|\Fset| \geq |V|$, and set $d = \Delta(G)+1$.
Then $G$ has a faithful $d$-dimensional independent representation over $\Fset$.
\end{lemma}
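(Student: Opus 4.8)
The plan is to build the representation one vertex at a time, assigning to each vertex a vector in "general position" relative to the small family of subspaces that matter. Fix an ordering $v_1,\ldots,v_n$ of $V$, and let $d=\Delta(G)+1$. I would assign vectors $x_{v_1},\ldots,x_{v_n}\in\Fset^d$ greedily so that the following invariant holds after step $i$: every set of at most $d$ of the vectors $x_{v_1},\ldots,x_{v_i}$ is linearly independent over $\Fset$. Concretely, when it is time to choose $x_{v_i}$, I must avoid the union of all subspaces $\linspan(Y)$ where $Y$ ranges over subsets of $\{x_{v_1},\ldots,x_{v_{i-1}}\}$ of size at most $d-1$; each such subspace is proper (it has dimension at most $d-1<d$), so the bad set is a union of at most $\binom{n}{\le d-1}$ proper subspaces of $\Fset^d$. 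Since a union of $N$ proper subspaces of $\Fset^d$ cannot cover $\Fset^d$ once $|\Fset|\ge N$ — or, more carefully, one can pick a generic vector avoiding all of them whenever the field is large enough — the hypothesis $|\Fset|\ge|V|=n$ lets me find a valid $x_{v_i}$, provided I am slightly more economical: rather than avoiding all $\binom{n}{\le d-1}$ subspaces at once, I only ever need the invariant for sets that could arise as neighborhoods, but I will in fact just verify that $|\Fset|\ge n$ suffices for the crude bound by choosing entries from a line (e.g.\ using a Vandermonde-type construction, $x_{v_i}=(1,t_i,t_i^2,\ldots,t_i^{d-1})$ for distinct $t_i\in\Fset$), which guarantees that any $d$ of these vectors are automatically independent since the corresponding $d\times d$ Vandermonde determinant is nonzero.

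With the invariant in hand, verifying that $(x_v)_{v\in V}$ is a faithful $d$-dimensional independent representation is immediate from two observations. First, for every vertex $v$, the neighborhood $N_G(v)$ has size at most $\Delta(G)=d-1$, so $\{x_w\mid w\in N_G(v)\}$ together with $x_v$ is a set of at most $d$ vectors, hence linearly independent; in particular $x_v\notin\linspan(\{x_w\mid w\in N_G(v)\})$, which is the independence condition. Second, for faithfulness, the forward implication is trivial (Definition~\ref{def:IR}), and for the converse suppose $u$ and $v$ are \emph{non-adjacent}; then $\{u\}\cup N_G(v)$ has size at most $1+(d-1)=d$, so the vectors $x_u$ together with $\{x_w\mid w\in N_G(v)\}$ are linearly independent, giving $x_u\notin\linspan(\{x_w\mid w\in N_G(v)\})$, exactly as required.

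The only real point that needs care is the counting that makes the greedy step go through, and here the Vandermonde trick sidesteps it entirely: assigning $x_{v_i}=(1,t_i,\ldots,t_i^{d-1})$ for $n$ distinct scalars $t_1,\ldots,t_n\in\Fset$ — which exist precisely because $|\Fset|\ge|V|=n$ — makes every $d$-subset of $\{x_{v_1},\ldots,x_{v_n}\}$ a Vandermonde matrix with distinct nodes, hence of nonzero determinant, so the invariant holds automatically for all of $V$ at once. Thus the main obstacle, namely guaranteeing the existence of enough vectors in general position, dissolves once one picks the moment-curve representation, and the field-size hypothesis $|\Fset|\ge|V|$ is exactly what is needed to supply the $|V|$ distinct nodes. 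I would present the proof in this form, remarking that the bound $d=\Delta(G)+1$ is tight in the sense that a vertex of degree $\Delta(G)$ with its neighborhood forces dimension at least $\Delta(G)+1$ for \emph{any} independent representation.
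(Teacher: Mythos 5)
Your proof is correct and is essentially the paper's own argument: both rely on the moment-curve/Vandermonde assignment $x_{v_i}=(1,\alpha_i,\alpha_i^2,\ldots,\alpha_i^{d-1})^t$ for $n$ distinct field elements, so that any $d$ of the vectors are linearly independent, from which independence and faithfulness follow exactly as you describe (the greedy/general-position framing at the start is superfluous once you commit to the Vandermonde construction). One caveat unrelated to the lemma itself: your closing tightness remark is false as stated --- a vertex of degree $\Delta(G)$ does not force dimension $\Delta(G)+1$ for an arbitrary independent representation, as the star $K_{1,\Delta}$ is bipartite and hence admits a $2$-dimensional independent representation.
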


\begin{proof}
For a graph $G=(V,E)$, set $d = \Delta(G)+1$ and $n = |V|$, and let $v_1, \ldots, v_n$ denote the vertices of $G$.
Let $\Fset$ be a field with $|\Fset| \geq n$, and let $\alpha_1, \ldots,\alpha_n \in \Fset$ denote $n$ distinct field elements.
For each $i \in [n]$, assign to the vertex $v_i$ the vector $x_{v_i} = (1,\alpha_i,\alpha_i^2, \ldots, \alpha_i^{d-1})^t \in \Fset^d$. Observe that any $d$ vectors in the assignment $(x_{v})_{v \in V}$ form the columns of an invertible Vandermonde matrix, and are therefore linearly independent over $\Fset$.
We claim that the assignment $(x_{v})_{v \in V}$ forms a faithful $d$-dimensional independent representation of $G$ over $\Fset$.
To verify this, consider indices $i,j \in [n]$ for which the vertices $v_i$ and $v_j$ are not adjacent in $G$, and let $W = \linspan(\{ x_{v} \mid v \in N_G(v_j)\})$. The degree of $v_j$ does not exceed $\Delta(G)$, so the subspace $W$ is spanned by at most $d-1$ vectors $x_v$ with $v \in V \setminus \{v_i\}$. Since any $d$ vectors $x_v$ with $v \in V$ are linearly independent, it follows that $x_{v_i} \notin W$, which completes the argument.
\end{proof}

\begin{remark}\label{remark:IR_def}
We conclude this section with the observation that the existence of a faithful $d$-dimensional independent representation of a graph $G=(V,E)$ over a sufficiently large field $\Fset$ is equivalent to the existence of a matrix $M \in \Fset^{|V| \times |V|}$ of rank at most $d$, with rows and columns indexed by the vertex set $V$, such that for all $u,v \in V$, it holds that $\{u,v\} \in E$ if and only if $M_{u,v} = 0$ (in particular, all diagonal entries of $M$ are nonzero). To see this, consider such a matrix $M$, and write $M = X^t \cdot Y$ for two matrices $X,Y \in \Fset^{d \times |V|}$. For each vertex $v \in V$, let $x_v$ and $y_v$ denote, respectively, the columns of $X$ and $Y$ associated with $v$. If $u$ and $v$ are non-adjacent vertices in $G$, then $\langle x_u,y_v \rangle \neq 0$, whereas for every vertex $w \in N_G(v)$, it holds that $\langle x_w,y_v \rangle =0$, implying that $x_u \notin \linspan(\{x_w \mid w\in N_G(v)\})$. Therefore, the assignment $(x_v)_{v \in V}$ forms a faithful $d$-dimensional independent representation of $G$ over $\Fset$. Conversely, suppose that $G$ has a faithful $d$-dimensional independent representation $(x_v)_{v \in V}$ over $\Fset$. Assuming that $\Fset$ is sufficiently large, say $|\Fset| > |V|$, a simple probabilistic argument shows that for every vertex $v \in V$, there exists a vector $y_v \in \Fset^d$ such that for every vertex $u \in V$, $\langle x_u,y_v \rangle = 0$ if and only if $\{u,v\} \in E$. Let $X,Y \in \Fset^{d \times |V|}$ denote the matrices whose columns corresponding to each vertex $v$ are $x_v$ and $y_v$, respectively. Then the matrix $M=X^t \cdot Y$ has rank at most $d$, and it holds that $\{u,v\} \in E$ if and only if $M_{u,v} = 0$ for all vertices $u,v \in V$, as required.
\end{remark}

\subsubsection{The Algorithm}

We are ready to present our algebraic kernel and confirm Theorem~\ref{thm:Intro_kernel_d}. To state the result in full generality, we borrow the following terminology from~\cite{JansenP19sparse}. A field is said to be {\em efficient} if field operations and Gaussian elimination can be performed in polynomial time in the size of a reasonable input encoding. Note that all finite fields and the real field $\R$ (restricted to rational numbers) are efficient.

\begin{theorem}\label{thm:kernel_IR}
For a graph $H$, a positive integer $d$, and an efficient field $\Fset$, suppose that there exists a faithful $d$-dimensional independent representation of $H$ over $\Fset$. Then the $\HCol$ problem parameterized by the vertex cover number $k$ admits a kernel with $O(k^{d-1})$ vertices and bit-size $O(k^{d-1} \cdot \log k)$.
\end{theorem}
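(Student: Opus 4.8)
The plan is to combine the combinatorial reduction of Lemma~\ref{lemma:G'} with the polynomial-sparsification machinery of~\cite{JansenP19sparse,JansenP19color}, interpreting a faithful independent representation of $H$ as an encoding of its vertices. First, by Lemma~\ref{lemma:q_faith} we have $q(H)\le d$, so we may start by running the combinatorial kernelization of Theorem~\ref{thm:Intro_kernel_q} (equivalently, apply Lemma~\ref{lemma:G'} with $q=d$): this replaces the input $(G,X)$ with an equivalent instance $(G',X)$ in which $X$ is still a vertex cover and every vertex outside $X$ has degree at most $d$. It therefore suffices to further reduce the number of vertices outside $X$ of degree exactly $d$ down to $O(k^{d-1})$, while preserving $H$-colorability.

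Next, I would pass to a suitable extension field. By Lemma~\ref{lemma:IR_with1}, since $\Fset$ is efficient we may work over an extension field $\Kset$ of $\Fset$ with $|\Kset|>|V_H|$ that is still efficient (for finite $\Fset$, take $\Kset=\Fset_{p^{m'}}$ for a large enough $m'$; for $\Fset=\R$ take $\Kset=\R$ itself), and obtain a faithful $d$-dimensional independent representation $(x_w)_{w\in V_H}$ of $H$ over $\Kset$ in which every vector has first entry $1$. Now associate with each vertex $u\in X$ a vector of $d$ formal variables $\mathbf{t}_u=(t_{u,1},\dots,t_{u,d})$ over $\Kset$, intended to stand for the representing vector of the $H$-vertex assigned to $u$; because all representing vectors have first coordinate $1$, we may in fact fix $t_{u,1}=1$ and treat only the remaining $d-1$ variables per vertex of $X$ as free. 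For each vertex $v\notin X$ of degree exactly $d$ with neighborhood $N_{G'}(v)=\{u_1,\dots,u_d\}\subseteq X$, the faithful-independent-representation condition forces: if $v$ can be assigned a compatible $H$-vertex, then the $d$ vectors of its neighbors must be linearly dependent, i.e. the $d\times d$ matrix with columns $\mathbf{t}_{u_1},\dots,\mathbf{t}_{u_d}$ is singular. Expanding the determinant along the first row and using $t_{u_i,1}=1$ for all $i$ shows this determinant is a homogeneous multilinear polynomial of degree $d-1$ in the $(d-1)\cdot k$ variables $\{t_{u,j}: u\in X,\ 2\le j\le d\}$ (this is exactly the ``degree reduction'' trick from~\cite{HavivR24,HavivR25}). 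The space of such polynomials has dimension $\binom{(d-1)k}{d-1}=O(k^{d-1})$, so by Gaussian elimination over $\Kset$ we can pick a subset $R$ of at most $O(k^{d-1})$ of the degree-$d$ vertices outside $X$ whose associated determinant polynomials linearly span all the others. Delete from $G'$ every degree-$d$ vertex outside $X$ that is not in $R$; call the result $G''$. Together with $G'[X]$ and the (at most $O(k^d)$, but in fact we also keep all) vertices of degree $\le d-1$ — wait: to keep the bit-size at $O(k^{d-1}\log k)$ one also re-runs the $q=d-1$ reduction on the low-degree part, or more simply notes that vertices of degree at most $d-1$ are already captured by sets $S$ of size $\le d-1$, of which there are $O(k^{d-1})$; so the final graph has $O(k^{d-1})$ vertices and is encodable in $O(k^{d-1}\log k)$ bits, since each retained vertex outside $X$ is described by a subset of $X$ of size $\le d$, i.e. $O(\log k)$ bits.

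The correctness argument is that $(G',X)$ and $(G'',X)$ are equivalent instances of $\HCol$. One direction is trivial since $G''$ is a subgraph of $G'$. For the converse, suppose $f'':V(G'')\to V_H$ is a homomorphism; set $f=f''$ on $X$ and on the retained low-degree vertices, and we must extend $f$ to the deleted degree-$d$ vertices. For a deleted vertex $v$ with neighborhood $\{u_1,\dots,u_d\}$, plug the actual vectors $x_{f(u_i)}$ into the variables $\mathbf{t}_{u_i}$; the kept vertices in $R$ give relations forcing their determinant polynomials to vanish under this evaluation (because those kept vertices do receive compatible images, their neighbor-vectors are dependent), and since $v$'s polynomial is a $\Kset$-linear combination of kept ones, $v$'s determinant also vanishes, i.e. $x_{f(u_1)},\dots,x_{f(u_d)}$ are linearly dependent. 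Hence they span a subspace $W$ of dimension $\le d-1$; a basis of $W$ is given by the vectors of some subset $S\subsetneq\{u_1,\dots,u_d\}$ with $|S|\le d-1$, and — here is where we use that the combinatorial reduction already inserted, for every such $S$ that is contained in some original neighborhood, a vertex $v_S$ with $N(v_S)=S$ — that vertex $v_S$ survives in $G''$ (it has degree $\le d-1$), so $f(v_S)$ is an $H$-vertex whose representing vector is linearly independent from $\linspan\{x_{f(u)}:u\in S\}=W$; by faithfulness of the independent representation, $f(v_S)$ is non-adjacent to exactly those $H$-vertices whose vectors lie in $W$, which — again by faithfulness — are exactly the vertices not adjacent to all of $S$, and one checks this yields an $H$-vertex adjacent to all of $f(u_1),\dots,f(u_d)$, which we may take as $f(v)$. (The clean way to phrase this last step is via Remark~\ref{remark:IR_def} / Lemma~\ref{lemma:q_faith}: the common neighborhood in $H$ of a set of vertices is determined by the span of their vectors, so a common neighbor of $S$ is automatically a common neighbor of $\{u_1,\dots,u_d\}$ once the latter's vectors lie in that span.) The whole procedure runs in polynomial time since $d$ is constant and $\Kset$ is efficient.

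The main obstacle I expect is the degree-reduction step and the bookkeeping it forces: the naive determinant polynomial has degree $d$, yielding only $O(k^d)$; getting down to degree $d-1$ genuinely requires the normalization ``all representing vectors have first entry $1$'', which in turn is the reason we must move to an extension field (Lemma~\ref{lemma:IR_with1}) and then verify — via Lemma~\ref{lemma:Ax_v} — that faithfulness is preserved under that move. A secondary subtlety is making the converse-direction extension of the homomorphism fully rigorous: one needs that the vertices $v_S$ inserted by the combinatorial kernel are present for exactly the subsets $S$ that can arise as a basis of the span of a deleted vertex's neighbor-vectors, and that faithfulness of the independent representation really does let us convert ``vectors lie in a small span'' into ``there is a common $H$-neighbor''. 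Everything else — counting the number of retained vertices, the bit-size bound, the polynomial running time — is routine.
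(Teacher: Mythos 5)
Your proposal is correct and follows essentially the same route as the paper's proof: apply the combinatorial reduction of Lemma~\ref{lemma:G'} with $q=d$, pass to an extension field via Lemma~\ref{lemma:IR_with1} to normalize all representing vectors to have first entry $1$, encode each degree-$d$ vertex outside $X$ by a degree-$(d-1)$ multilinear determinant polynomial, retain only a spanning basis, and extend a homomorphism back to deleted vertices through the surviving low-degree vertex $v_{S'}$ indexed by a basis of the neighbors' span. The only blemish is your mid-argument phrasing that $f(v_S)$ is ``non-adjacent to exactly those $H$-vertices whose vectors lie in $W$'' (faithfulness gives the opposite correspondence), but your concluding parenthetical states the step correctly, exactly as in the paper's equation~\eqref{eq:f'(u)}.
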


\begin{proof}
Consider a graph $H=(V_H,E_H)$, a positive integer $d$, and an efficient field $\Fset$, such that $H$ has a faithful $d$-dimensional independent representation over $\Fset$. We may assume that $d \geq 3$, as otherwise $H$ is bipartite, and the $\HCol$ problem can be solved in polynomial time. We define a field $\Kset$ as follows. If $\Fset$ is finite, then we choose $\Kset$ to be some finite extension field of $\Fset$ whose order exceeds $|V_H|$ (e.g., a finite field of order $|\Fset|^\ell$ for the smallest positive integer $\ell$ satisfying $|\Fset|^\ell > |V_H|$). If $\Fset$ is infinite, then $\Kset$ is simply taken to be $\Fset$. In both cases, the field $\Kset$ is efficient and satisfies $|\Kset| > |V_H|$.

The input of the $\HCol$ problem parameterized by the vertex cover number $k$ consists of a graph $G=(V,E)$ and a vertex cover $X \subseteq V$ of $G$ of size $|X|=k$. Consider the kernelization algorithm that, given such an input, performs the following steps.
\begin{enumerate}
  \item Produce the graph $G'=(V',E')$ defined as in Lemma~\ref{lemma:G'} for $q=d$. Namely, $G'$ is the graph obtained from $G[X]$ by adding, for every non-empty set $S \subseteq X$ of size $|S| \leq d$ such that $S \subseteq N_G(v)$ for some $v \in V \setminus X$, a vertex $v_S$ adjacent to the vertices of $S$.
  \item Associate with each vertex $u \in X$ a $d$-dimensional symbolic vector $y_u$ over the field $\Kset$, where the first entry of $y_u$ is fixed to $1$ and the remaining $d-1$ entries are free variables. Note that the total number of variables is $k \cdot (d-1)$.
  \item\label{itm:p_S} Consider the collection $Y$ of all sets $S \subseteq X$ of size precisely $d$ for which the vertex $v_S$ is included in the graph $G'$, i.e., $Y = \{ S \subseteq X \mid v_S \in V',~|S|=d \}$. For each such set $S \in Y$, define a polynomial $p_S$ over the variables of $(y_u)_{u \in X}$ as the determinant over the field $\Kset$ of a $d \times d$ matrix whose columns are the vectors $y_u$ with $u \in S$ (ordered arbitrarily).
  \item Consider the subspace of polynomials $W= \linspan(\{p_S \mid S \in Y\})$, and compute a set $Y' \subseteq Y$ such that the polynomials in $\{p_S \mid S \in Y'\}$ form a basis of $W$.
  \item Let $G''=(V'',E'')$ be the graph obtained from $G'$ by removing all vertices $v_S$ with $S \in Y \setminus Y'$, and return it along with the set $X$.
\end{enumerate}
Let us emphasize that the given independent representation of $H$ is not used by the algorithm itself, but it plays a crucial role in its correctness proof. Yet, the algorithm uses the field $\Kset$, over which it computes the polynomials $p_S$ for $S \in Y$ and a basis of the subspace $W$.

We now analyze the algorithm and show that it satisfies the assertion of the theorem.
First, by Item~\ref{itm:G'1} of Lemma~\ref{lemma:G'}, the set $X$ forms a vertex cover of $G'$, and so also forms a vertex cover of its subgraph $G''$. Therefore, the pair $(G'',X)$ produced by the algorithm is a valid output. The vertex set of $G''$ consists of the $k$ vertices of $X$, vertices $v_S$ with $|S| \leq d-1$, whose number is at most $\sum_{i=1}^{d-1}{\binom{k}{i}}$, and the vertices $v_S$ with $S \in Y'$. To bound the size of $Y'$, we observe that each polynomial $p_S$ with $S \in Y$ is multilinear and homogeneous of degree $d-1$. Indeed, the determinant polynomial of a $d \times d$ matrix is a linear combination of monomials of degree $d$, where each monomial is a product of exactly $d$ entries, one from each row. Since $p_S$ is defined as the determinant of a $d \times d$ matrix, whose first-row entries are all ones and all other entries are distinct variables, it follows that $p_S$ has the required form. This implies that $W$ is a subspace of the space of multilinear homogeneous polynomials of degree $d-1$ in $k \cdot (d-1)$ variables, whose dimension is $\binom{k \cdot (d-1)}{d-1}$. Since the size of a basis of such a subspace cannot exceed the dimension of the ambient space, it follows that $|Y'| \leq \binom{k \cdot (d-1)}{d-1}$. We obtain that the number of vertices in $G''$ satisfies
\[|V''| \leq k+ \sum_{i=1}^{d-1}{\binom{k}{i}} + \binom{k \cdot (d-1)}{d-1} \leq O(k^{d-1}),\]
where the last inequality holds because $d$ is a fixed constant. The graph $G''$ can be expressed by a binary string that represents the adjacencies in $G''[X]$ and the sets $S \subseteq X$ with $|S| \leq d$ for which the vertex $v_S$ is included in $G''$. Since each such set $S$ can be encoded in $d \cdot \lceil \log k \rceil$ bits, the total number of bits needed to encode $G''$ does not exceed $\binom{k}{2} + |V'' \setminus X| \cdot d \cdot \lceil \log k \rceil$, which is bounded by $O(k^{d-1} \cdot \log k)$, given that $d \geq 3$. We further notice that $G''$ can be constructed in polynomial time. Indeed, the construction of $G'$ involves enumerating all subsets of $X$ of size at most $d$. Then, to determine the vertices retained in $G''$, we compute the set $Y'$ by applying Gaussian elimination over the efficient field $\Kset$ on a system with $\binom{k \cdot (d-1)}{d-1}$ variables, which can be performed in polynomial time.

We turn to proving the correctness of the algorithm, namely, that $G$ is $H$-colorable if and only if $G''$ is $H$-colorable. By assumption, the graph $H$ has a faithful $d$-dimensional independent representation over $\Fset$. Since $\Kset$ is an extension field of $\Fset$ with $|\Kset| > |V_H|$, Lemma~\ref{lemma:IR_with1} implies that $H$ also has a faithful $d$-dimensional independent representation $(x_v)_{v \in V_H}$ over $\Kset$, in which all vectors have $1$ as their first entry. By Lemma~\ref{lemma:q_faith}, we have $q(H) \leq d$, which allows us to apply Item~\ref{itm:G'4} of Lemma~\ref{lemma:G'} and conclude that the graph $G$ is $H$-colorable if and only if the graph $G'$ is $H$-colorable. It is thus sufficient to prove that $G'$ is $H$-colorable if and only if $G''$ is $H$-colorable. Since $G''$ is a subgraph of $G'$, it is clear that if $G'$ is $H$-colorable then so is $G''$. It remains to show that if $G''$ is $H$-colorable, then so is $G'$.

Suppose that the graph $G''$ is $H$-colorable, and let $f'':V'' \rightarrow V_H$ be a homomorphism from $G''$ to $H$.
We define an assignment $\rho$ of elements from $\Kset$ to the variables of $(y_u)_{u \in X}$ by assigning to each variable vector $y_u$ the vector $x_{f''(u)}$ from the given independent representation over $\Kset$. Note that this is possible because the first entry in $y_u$ is $1$, just as in $x_{f''(u)}$, while the other entries in $y_u$ are free variables. We claim that for every set $S \in Y$, the polynomial $p_S$ defined in Item~\ref{itm:p_S} of the algorithm vanishes at the assignment $\rho$. It suffices to verify this only for sets $S \in Y'$, since the set $\{p_S \mid S \in Y'\}$ forms a basis of the subspace $W= \linspan(\{p_S \mid S \in Y\})$. Consider a set $S \in Y'$, and recall that the vertex $v_S$ is adjacent in $G''$ to the vertices of $S$, hence $f''(v_S)$ is adjacent in $H$ to all vertices $f''(u)$ with $u \in S$. By the definition of an independent representation, the vector $x_{f''(v_S)}$ does not lie in the linear subspace spanned by the vectors $x_{f''(u)}$ with $u \in S$. In particular, the dimension of this subspace is strictly smaller than $d$, so the determinant of a $d \times d$ matrix whose columns are the vectors $x_{f''(u)}$ with $u \in S$ is zero. This implies that $p_S$ evaluates to zero at the assignment $\rho$, as desired.

We now show that there exists a homomorphism $f':V' \rightarrow V_H$ from $G'$ to $H$.
We start by defining $f'(u) = f''(u)$ for each $u \in X$. Since every edge of $G'[X]$ is also an edge of $G''[X]$, its endpoints are mapped by $f'$ to adjacent vertices in $H$. Next, for every vertex $v_S$ in $G'$ with $|S| \leq d-1$, set $f'(v_S)=f''(v_S)$. Since the neighborhood of such a vertex $v_S$ is contained in $X$ and is identical to its neighborhood in $G''$, the vertex $f'(v_S)$ is adjacent in $H$ to all the images under $f'$ of the neighbors of $v_S$ in $G'$. It remains to show that for every set $S \in Y$, the vertex $v_S$ in $G'$ can be assigned a vertex in $H$ that is adjacent to all vertices $f'(u)$ with $u \in S$.

To see this, consider a set $S \in Y$, and recall that $|S|=d$ and that the polynomial $p_S$ vanishes at the assignment $\rho$ described above. Therefore, the vectors of $\{x_{f'(w)} \mid w \in S\}$ span a subspace of dimension smaller than $d$, so there exists a non-empty set $S' \subseteq S$ of size $|S'| \leq d-1$ such that the vectors of $\{x_{f'(w)} \mid w \in S'\}$ span the exact same subspace. Since the vertex $v_S$ is included in $G''$, it follows that $S \subseteq N_G(v)$ for some $v \in V \setminus X$, so by $S' \subseteq S$, the vertex $v_{S'}$ is included in $G''$ as well. We define $f'(v_S) = f''(v_{S'})$, and show that $f''(v_{S'})$ is adjacent in $H$ to all vertices $f'(u)$ with $u \in S$. Indeed, for every vertex $u \in S$, it holds that
\begin{eqnarray}\label{eq:f'(u)}
x_{f'(u)} \in \linspan(\{x_{f'(w)} \mid w \in S\}) = \linspan(\{x_{f'(w)} \mid w \in S'\}) \subseteq \linspan(\{x_{v} \mid v \in N_H(f''(v_{S'}))\}),
\end{eqnarray}
where the containment holds because the vertices of $S'$ are adjacent in $G''$ to $v_{S'}$, so they are mapped by $f''$, and thus by $f'$, to neighbors of $f''(v_{S'})$ in $H$. By the faithfulness of the given independent representation of $H$, it follows from~\eqref{eq:f'(u)} that the vertices $f'(u)$ and $f''(v_{S'})$ are adjacent in $H$. We conclude that $f''(v_{S'})$ is adjacent in $H$ to all vertices $f'(u)$ with $u \in S$, as desired. This gives us a homomorphism from $G'$ to $H$, implying that $G'$ is $H$-colorable, as required.
\end{proof}

We end this section with the observation that Theorem~\ref{thm:kernel_IR} strengthens the kernelization bounds implied by the prior work~\cite{JansenP19color}. Indeed, by Lemma~\ref{lemma:faith_Delta+1}, every graph $H$ has a faithful independent representation of dimension $\Delta(H)+1$ over, for instance, the real field $\R$. Invoking Theorem~\ref{thm:kernel_IR} yields a kernel for the $\HCol$ problem parameterized by the vertex cover number $k$ with $O(k^{\Delta(H)})$ vertices and bit-size $O(k^{\Delta(H)} \cdot \log k)$. As illustrated in Section~\ref{sec:app}, the kernel obtained via Theorem~\ref{thm:kernel_IR} is sometimes markedly more compact.

\section{Lower Bounds}\label{sec:lower}

This section establishes conditional lower bounds on the compressibility of $\HCol$ problems parameterized by the number of vertices and by the vertex cover number.
As previously mentioned, it suffices to consider target graphs $H$ that are non-bipartite cores. We focus here on the case where the graph $H$ is projective. While the definition of projective graphs can be found in~\cite{LaroseT00}, we rely solely on a key property proved by Okrasa and Rz{\k a}\.{z}ewski~\cite{OkrasaR21}, stated as follows.

\begin{proposition}[{\cite[Lemma~4.1]{OkrasaR21}}]\label{prop:projective}
For every projective core graph $H=(V_H,E_H)$ with $|V_H| \geq 3$, there exists an edge gadget, that is, a graph $F=(V_F,E_F)$ with two specified vertices $a,b \in V_F$, such that for all pairs of vertices $u,v \in V_H$, there exists a homomorphism $g:V_F \rightarrow V_H$ from $F$ to $H$ with $g(a)=u$ and $g(b)=v$ if and only if $u \neq v$.
\end{proposition}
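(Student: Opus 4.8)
The plan is to deduce the proposition from the algebraic theory of constraint satisfaction, using both hypotheses on $H$ in tandem: that it is a core and that it is projective. Recall from~\cite{LaroseT00} that $H$ being projective means that for every $k\ge 1$, every \emph{idempotent} polymorphism of $H$ --- that is, every homomorphism $f\colon H^k\to H$ from the $k$-th graph power satisfying $f(v,\dots,v)=v$ for all $v\in V_H$ --- is a coordinate projection. The relation we want the gadget to realize is the disequality relation $D=\{(u,v)\in V_H^2 : u\neq v\}$, since the whole purpose of $F$ is to make $\{(g(a),g(b)) : g\text{ a homomorphism }F\to H\}$ equal to $D$.

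The first step would be to determine the shape of \emph{all} polymorphisms of $H$, not merely the idempotent ones. Given an arbitrary polymorphism $f\colon H^k\to H$, I would look at the diagonal map $d\colon V_H\to V_H$, $d(v)=f(v,\dots,v)$. It is an endomorphism of $H$, hence --- since $H$ is a core, and a core's self-homomorphisms are precisely its automorphisms --- an automorphism of $H$. Then $d^{-1}\circ f$ is again a polymorphism of $H$ and is idempotent, so by projectivity it equals some projection $\pi_i$, whence $f=d\circ\pi_i$. It follows that $D$ is invariant under $\mathrm{Pol}(H)$: applying $f=d\circ\pi_i$ coordinatewise to pairs $(u_1,v_1),\dots,(u_k,v_k)\in D$ produces $(d(u_i),d(v_i))$, which lies in $D$ because $d$ is a bijection. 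By the Galois correspondence between clones and co-clones over a finite set (the theorem of Bodnarchuk, Kalu\v{z}nin, Kotov, and Romov; equivalently Geiger), a relation preserved by every polymorphism of $(V_H;E_H)$ is primitive positive definable from $E_H$, so there exist an integer $m$ and a conjunction $\psi(x,y,z_1,\dots,z_m)$ of atoms over $E_H$ (and equality) with $D(x,y)\iff\exists z_1\cdots\exists z_m\,\psi$.

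Next I would clean up $\psi$: equality atoms are removed by identifying the corresponding variables; an atom $E_H(s,s)$ cannot appear in a satisfiable formula since $H$ is loopless; and $x$ cannot become identified with $y$, because $D$ is a nonempty off-diagonal relation (here $|V_H|\ge 2$). So $\psi$ reduces to a conjunction of atoms $E_H(s_\ell,t_\ell)$, $\ell\in[N]$, with $s_\ell\neq t_\ell$ chosen from $\{x,y,z_1,\dots,z_m\}$ and $x\neq y$. I then define $F$ to be the simple graph on vertex set $\{a,b,z_1,\dots,z_m\}$, with $a,b$ playing the roles of $x,y$, and with edge set $\{\{s_\ell,t_\ell\}:\ell\in[N]\}$. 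For $u,v\in V_H$, a homomorphism $g\colon F\to H$ with $g(a)=u$ and $g(b)=v$ is exactly an assignment of the variables of $\psi$ satisfying all its atoms and extending $x\mapsto u,\ y\mapsto v$; such an assignment exists if and only if $D(u,v)$ holds, that is, if and only if $u\neq v$. Hence $F$ with the distinguished vertices $a,b$ is the sought edge gadget.

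The hard part, I expect, is obtaining a gadget-sized --- i.e., finite and purely edge-based --- certificate of disequality; building one explicitly for a general projective core seems awkward. The clean route is to go through the universal-algebraic dictionary instead: first the structural description $\mathrm{Pol}(H)=\{\text{automorphism}\circ\text{coordinate projection}\}$, which is exactly where projectivity and the core property are both used, and then the Galois correspondence, which upgrades ``$D$ is preserved by every polymorphism of $H$'' into ``$D$ is pp-definable from $E_H$''. After that, converting the pp-definition into a graph is routine, the only delicacies being the elimination of equality atoms and the observation that looplessness of $H$ keeps the resulting $F$ a simple graph.
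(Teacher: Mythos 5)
The paper does not prove this proposition at all: it is imported verbatim as Lemma~4.1 of the cited work of Okrasa and Rz{\k a}\.{z}ewski, so there is no internal proof to compare against. Your argument is a correct, self-contained derivation, and it follows the same universal-algebraic route as the cited source: combine projectivity with the core property to show that every polymorphism of $H$ factors as an automorphism composed with a coordinate projection, deduce that the disequality relation is invariant under all polymorphisms, invoke the finite-domain Galois correspondence (Geiger; Bodnarchuk--Kalu\v{z}nin--Kotov--Romov) to obtain a primitive positive definition of disequality from $E_H$, and read the gadget off the quantified conjunction. The clean-up steps are handled correctly: identification of variables to eliminate equality atoms cannot merge $x$ with $y$ because the disequality relation is nonempty and off-diagonal, and looplessness of $H$ excludes atoms of the form $E_H(s,s)$ from any satisfiable definition, so $F$ is indeed a simple graph. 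The only cosmetic imprecision is that projectivity is standardly required only for arities $k \geq 2$; for $k=1$ your decomposition $f = d\circ \pi_1$ holds anyway, since a unary polymorphism of a core is an automorphism.
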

\noindent
Interestingly, it was further shown in~\cite{OkrasaR21} that, for connected core graphs on at least three vertices, projectivity is a necessary condition for the existence of an edge gadget. To illustrate this notion, one may verify that for every positive integer $m$, the path graph on $2m$ vertices, with its endpoints serving as the specified vertices, forms an edge gadget for the cycle $C_{2m+1}$ (see, e.g.,~\cite{Nesetril81}).

\subsection{Parameterization by Number of Vertices}

Our starting point is the $\ListHCol$ problem, defined as follows.

\begin{definition}\label{def:list}
For a fixed graph $H=(V_H,E_H)$, which may include loops, the $\ListHCol$ problem is defined as follows.
The input consists of a graph $G=(V,E)$ and a list $L(v) \subseteq V_H$ for each vertex $v \in V$, and the goal is to decide whether there exists a homomorphism $f:V \rightarrow V_H$ from $G$ to $H$ such that $f(v) \in L(v)$ for all $v \in V$.
When parameterized by the vertex cover number, the input also includes a vertex cover of $G$, and its size serves as the problem's parameter.
\end{definition}

Feder et al.~\cite{FederHH99} showed that the $\ListHCol$ problem is solvable in polynomial time if $H$ is a bi-arc graph, and $\NP$-hard otherwise. We omit here the definition of bi-arc graphs, as it is not essential to our discussion. We note, however, that all loopless non-bipartite graphs $H$, namely, those for which the (non-list) $\HCol$ problem was shown to be $\NP$-hard in~\cite{HellN90}, are not bi-arc. A result by Chen et al.~\cite{ChenJOPR23}, stated below, shows that for all graphs $H$ that are not bi-arc, the $\ListHCol$ problem is unlikely to admit a sub-quadratic compression.

\begin{theorem}[{\cite[Theorem~1.1]{ChenJOPR23}}]\label{thm:ChenJOPR}
Let $H$ be a graph, possibly with loops, that is not bi-arc.
Then, for any real number $\eps >0$, the $\ListHCol$ problem parameterized by the number of vertices $n$ does not admit a compression of size $O(n^{2-\eps})$ unless $\NP \subseteq \coNPpoly$.
\end{theorem}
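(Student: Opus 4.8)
This is a conditional sparsification lower bound, and the plan is to establish it via the cross-composition framework of Bodlaender, Jansen, and Kratsch (see, e.g.,~\cite{KernelBook19}). Since the parameter is the number of vertices $n$ and the target is to exclude compressions of size $O(n^{2-\eps})$, it suffices to construct an OR-cross-composition of degree $2$ into $\ListHCol$: an algorithm that, given $t$ instances of a fixed $\NP$-hard language $L$, all of the same size $s$ (equal size may be assumed by restricting to inputs equivalent under a polynomial equivalence relation), outputs in time polynomial in $t+s$ a single $\ListHCol$ instance on $O(\sqrt{t}\cdot\poly(s))$ vertices that is a yes-instance if and only if at least one of the $t$ inputs belongs to $L$. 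By the cross-composition theorem, such a composition together with a compression of size $O(n^{2-\eps})$ would imply $\NP\subseteq\coNPpoly$. As the source language one may take a convenient $\NP$-hard restriction of $\ListHCol$ itself, which is available precisely because Feder, Hell, and Huang~\cite{FederHH99} proved $\ListHCol$ to be $\NP$-hard exactly when $H$ is not a bi-arc graph.

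The composition would arrange the $t=r^2$ inputs in an $r\times r$ grid. After normalising the inputs so that all instances in a common row (resp.\ column) share one vertex-and-list template, the composed instance is assembled from three pieces: a \emph{row selector} on $O(r\cdot\poly(s))$ vertices whose list-colourings encode a choice of row index $i^\star$ together with a candidate partial colouring attached to that row; a symmetric \emph{column selector} choosing an index $j^\star$; and a \emph{cross-check} gadget of size polynomial in $s$ that, fed the two selected indices and the two candidate colourings, admits a valid list-colouring if and only if the two candidates agree and jointly constitute a solution of the instance $I_{i^\star,j^\star}$. Summing the three pieces keeps the vertex count at $O(\sqrt{t}\cdot\poly(s))$, which is the bound required for degree $2$, and one verifies that the whole instance is list-colourable if and only if some $I_{i,j}$ is a yes-instance.

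The heart of the proof, and the step I expect to be the main obstacle, is realising the selector and cross-check gadgets using only the structure available in an \emph{arbitrary} non-bi-arc graph $H$: through lists one must simulate a small, essentially Boolean, domain and then express operations such as ``at most one active index'', ``propagate an index along a path'', and ``activate a verification constraint conditioned on a pair of indices''. This is exactly the point at which the structural characterisation of non-bi-arc graphs due to Feder, Hell, and Huang must be invoked, since it guarantees that a non-bi-arc $H$ contains a configuration rich enough to encode these constraints; in practice one would first prove the bound for a canonical obstruction graph and then transport it to a general non-bi-arc $H$ by a list-preserving reduction that blows up the number of vertices by only a constant factor. Checking that every gadget behaves as specified for every non-bi-arc $H$, and that the bookkeeping of the grid together with the equal-size padding genuinely produces degree exactly $2$, is where the bulk of the work lies; this is carried out in full in~\cite{ChenJOPR23}.
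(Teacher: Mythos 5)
This statement is not proved in the paper at all: it is quoted verbatim as Theorem~1.1 of~\cite{ChenJOPR23} and used as a black box (its only role here is as the starting point for the reduction in the proof of Theorem~\ref{thm:IntroLowerGen}). So there is no in-paper argument to compare your proposal against; the honest ``proof'' from the paper's perspective is the citation itself.

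Judged as a standalone proof attempt, your proposal correctly identifies the right framework --- a degree-$2$ cross-composition (equivalently, a weak composition in the sense used for polynomial kernel lower bounds), whose existence together with an $O(n^{2-\eps})$ compression would give $\NP \subseteq \coNPpoly$ --- and correctly locates where the $\NP$-hardness dichotomy of Feder, Hell, and Huang enters. But the proposal has a genuine gap, which you yourself flag: the entire content of the theorem lies in constructing the selector and cross-check gadgets so that they work for \emph{every} non-bi-arc $H$, using only lists and the combinatorial structure guaranteed by non-bi-arc-ness. Writing ``this is carried out in full in~\cite{ChenJOPR23}'' for exactly that step means the argument is an outline of the methodology rather than a proof; without exhibiting the canonical obstruction configurations in non-bi-arc graphs and verifying that each gadget forces the intended behaviour, the claim that a degree-$2$ composition exists is unsubstantiated. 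Since the paper itself treats the theorem as an external result, the appropriate fix is either to keep it as a citation (as the authors do) or to actually reproduce the gadget constructions of~\cite{ChenJOPR23}; the intermediate state your proposal occupies does neither.
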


The authors of~\cite{ChenJOPR23} asked whether an analogous lower bound can be established for the $\NP$-hard cases of the (non-list) $\HCol$ problem. We prove now Theorem~\ref{thm:IntroLowerGen}, which answers this question affirmatively for all non-bipartite projective core graphs $H$.


\begin{proof}[ of Theorem~\ref{thm:IntroLowerGen}]
Fix a non-bipartite projective core graph $H=(V_H,E_H)$ and a real number $\eps >0$.
Since the (simple) graph $H$ is not bipartite, it is not a bi-arc graph, so by Theorem~\ref{thm:ChenJOPR}, the $\ListHCol$ problem parameterized by the number of vertices $n$ does not admit a compression of size $O(n^{2-\eps})$ unless $\NP \subseteq \coNPpoly$. In what follows, we present a linear-parameter transformation from the $\ListHCol$ problem into the $\HCol$ problem, both parameterized by the number of vertices. Therefore, if the $\HCol$ problem parameterized by the number of vertices $n$ admits a compression of size $O(n^{2-\eps})$, then composing this compression with the transformation described below would yield a compression of size $O(n^{2-\eps})$ for the $\ListHCol$ problem, implying that $\NP \subseteq \coNPpoly$.

Since $H$ is not bipartite, we have $|V_H| \geq 3$, so by Proposition~\ref{prop:projective}, $H$ admits an edge gadget $F=(V_F,E_F)$ with two specified vertices $a,b \in V_F$. Consider an instance of the $\ListHCol$ problem, that is, a graph $G=(V,E)$ on $n$ vertices and a list $L(v) \subseteq V_H$ for each vertex $v \in V$. We define a transformation that, given such an input, produces a graph $G'=(V',E')$ constructed as follows. We initialize $G'$ as a disjoint union of the graph $G$ and the graph $H=(V_H,E_H)$. Then, for each pair of vertices $v \in V$ and $h \in V_H \setminus L(v)$, we add to the graph a copy $F_{v,h}$ of the graph $F$ where the specified vertices $a$ and $b$ are identified with $v$ and $h$, respectively. Note that each such copy introduces $|V_F|-2$ new vertices to the graph $G'$. This completes the description of the graph $G'$, which is clearly constructible in polynomial time. Observe that the transformation is linear-parameter with respect to the number of vertices, as the number of vertices in $G'$ satisfies
\[|V'| \leq n + |V_H|+(|V_F|-2) \cdot n \cdot |V_H| = O(n).\]

We now verify the correctness of the transformation.
Suppose first that $(G,L)$ is a $\YES$ instance of $\ListHCol$, that is, there exists a homomorphism $f:V \rightarrow V_H$ from $G$ to $H$ satisfying $f(v) \in L(v)$ for all $v \in V$. Define a mapping $f':V' \rightarrow V_H$ as follows. First, for each $v \in V$, set $f'(v) = f(v)$. Since every edge of $G'[V]$ is also an edge of $G$, its endpoints are mapped by $f'$ to adjacent vertices in $H$. Next, for each $h \in V_H$, set $f'(h)=h$. Clearly, the endpoints of every edge in the copy of $H$ in $G'$ are mapped by $f'$ to adjacent vertices in $H$. Finally, for each vertex $v \in V$, it holds that $f(v) \in L(v)$, so for each $h \in V_H \setminus L(v)$, it holds that $f(v) \neq h$, and thus $f'(v) \neq f'(h)$. Since $F$ is an edge gadget for $H$, this ensures that $f'$ can be extended to all the vertices of the copy $F_{v,h}$ of $F$, so that the endpoints of every edge in this copy are mapped to adjacent vertices in $H$. The obtained mapping $f'$ forms a homomorphism from $G'$ to $H$, so $G'$ is $H$-colorable and is thus a $\YES$ instance of $\HCol$.

For the converse direction, suppose that $G'$ is a $\YES$ instance of $\HCol$, that is, there exists a homomorphism $f':V' \rightarrow V_H$ from $G'$ to $H$. The restriction of $f'$ to the copy of $H$ on the vertex set $V_H$ in $G'$ is a homomorphism from $H$ to itself. Since $H$ is a core, there exists an automorphism $\pi:V_H \rightarrow V_H$ of $H$ such that $f'(h) = \pi(h)$ for all $h \in V_H$. Note that the inverse $\pi^{-1}$ of $\pi$ also forms an automorphism of $H$. Consider the mapping $f: V \rightarrow V_H$ defined by $f(v) = \pi^{-1}(f'(v))$ for all $v \in V$. The mapping $f$ forms a homomorphism from $G$ to $H$. Indeed, if $u$ and $v$ are adjacent vertices in $G$, then they are also adjacent in $G'$, so $f'(u)$ and $f'(v)$ are adjacent in $H$, which implies that $f(u)=\pi^{-1}(f'(u))$ and $f(v)=\pi^{-1}(f'(v))$ are adjacent in $H$ as well. We further claim that for every vertex $v \in V$, it holds that $f(v) \in L(v)$. To see this, recall that the graph $G'$ contains, for every $v \in V$ and $h \in V_H \setminus L(v)$, a copy $F_{v,h}$ of $F$ whose specified vertices $a$ and $b$ are identified with $v$ and $h$. The restriction of $f'$ to this copy of $F$ forms a homomorphism to $H$, so the fact that $F$ is an edge gadget for $H$ guarantees that $f'(v) \neq f'(h)$, and thus $f(v) = \pi^{-1}(f'(v)) \neq \pi^{-1}(f'(h)) = h$. Therefore, $f(v) \neq h$ for all $h \in V_H \setminus L(v)$, implying that $f(v) \in L(v)$. We conclude that $(G,L)$ is a $\YES$ instance of $\ListHCol$, as required.
\end{proof}

\subsection{Parameterization by Vertex Cover Number}

Our lower bound on the kernel size of $\HCol$ problems parameterized by the vertex cover number builds on a lower bound on the compressibility of satisfiability problems. A Boolean CNF formula is said to be {\em NAE-satisfiable}, where NAE stands for Not All Equal, if there exists an assignment to its variables such that each clause includes at least one literal evaluated to $\true$ and at least one literal evaluated to $\false$. For a positive integer $q$, a $q$-CNF formula is a CNF formula in which each clause contains exactly $q$ literals. The $\qNAE$ problem asks whether a given $q$-CNF formula is NAE-satisfiable. The following result provides a lower bound on the compression size of this problem in terms of the number of variables.

\begin{theorem}[\cite{JansenK13}]\label{thm:NAE}
For every integer $q \geq 4$ and any real $\eps >0$, the $\qNAE$ problem parameterized by the number of variables $n$ does not admit a compression of size $O(n^{q-1-\eps})$ unless $\NP \subseteq \coNPpoly$.
\end{theorem}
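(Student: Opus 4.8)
The plan is to derive the bound from the sparsification lower bound for CNF satisfiability due to Dell and van Melkebeek, composed with a linear-parameter transformation. Recall that for every integer $d \geq 3$ and every $\eps > 0$, the $d$-CNF-SAT problem parameterized by the number of variables $n$ admits no compression of size $O(n^{d-\eps})$ unless $\NP \subseteq \coNPpoly$. Since we are given $q \geq 4$, we may instantiate this with $d = q-1 \geq 3$, obtaining that $(q-1)$-CNF-SAT parameterized by the number of variables admits no compression of size $O(n^{q-1-\eps})$ under the same hypothesis. It therefore suffices to exhibit a polynomial-time linear-parameter transformation from $(q-1)$-CNF-SAT to $\qNAE$, both parameterized by the number of variables: composing such a transformation with a hypothetical compression of size $O(m^{q-1-\eps})$ for $\qNAE$ would yield a compression of size $O(n^{q-1-\eps})$ for $(q-1)$-CNF-SAT, a contradiction.

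The transformation is the classical reduction from satisfiability to not-all-equal satisfiability, performed with a single shared fresh variable so that the variable count grows by only one. Given a $(q-1)$-CNF formula $\varphi$ over a variable set $X$ with $|X| = n$ --- which we may assume to be proper, that is, each clause consisting of $q-1$ distinct literals over distinct variables, as the hard instances produced by Dell and van Melkebeek can be taken in this form --- introduce one new variable $z \notin X$ and replace each clause $(\ell_1 \vee \cdots \vee \ell_{q-1})$ of $\varphi$ by the NAE-clause on the $q$ literals $\ell_1, \ldots, \ell_{q-1}, z$. The result is a $q$-CNF formula $\psi$ over $X \cup \{z\}$, hence on $n+1$ variables, computable in polynomial time; since the parameter grows from $n$ to $n+1$, the transformation is linear-parameter.

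For correctness, suppose first that $\varphi$ is satisfied by an assignment $\alpha$, and extend $\alpha$ by setting $z$ to $\false$. In each NAE-clause, some $\ell_i$ is $\true$ because the corresponding clause of $\varphi$ is satisfied, while the literal $z$ is $\false$; thus the NAE-clause contains both a $\true$ and a $\false$ literal, so $\psi$ is NAE-satisfiable. Conversely, suppose $\psi$ is NAE-satisfied by an assignment $\beta$. A NAE-clause is satisfied by an assignment if and only if it is satisfied by the bitwise complement of that assignment, so we may assume $\beta$ assigns $\false$ to $z$. In each NAE-clause the literal $z$ is then $\false$, so some other literal, necessarily one of $\ell_1, \ldots, \ell_{q-1}$, is $\true$; hence the restriction of $\beta$ to $X$ satisfies every clause of $\varphi$. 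Therefore $\varphi$ is satisfiable if and only if $\psi$ is NAE-satisfiable.

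I do not expect a serious obstacle here once the source problem is pinned down; the two load-bearing points are (i) starting from $(q-1)$-CNF-SAT rather than $q$-CNF-SAT, which is precisely what forces the hypothesis $q \geq 4$ (one needs $q-1 \geq 3$ for the Dell--van Melkebeek bound to apply), and (ii) using one global auxiliary variable $z$ across all clauses, so that the clause width rises from $q-1$ to $q$ while the parameter rises only by one, keeping the transformation linear-parameter --- using a fresh variable per clause would have ruined this. The backward direction of the correctness argument relies essentially on the complementation symmetry of NAE-satisfaction to normalize $z$ to $\false$, and the properness assumption on $\varphi$ is what guarantees that all clauses of $\psi$ are genuine $q$-literal clauses.
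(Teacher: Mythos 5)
Your proposal is correct and follows exactly the route the paper attributes to Jansen and Kratsch: start from the Dell--van Melkebeek compression lower bound for $(q-1)$-CNF-SAT (which is why $q \geq 4$ is needed) and append a single shared fresh variable to every clause, using the complementation symmetry of NAE-satisfaction for the backward direction. No gaps.
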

\noindent
As shown in~\cite{JansenK13}, Theorem~\ref{thm:NAE} follows from a lower bound on the compression size of the standard $\qSAT$ problem, proved by Dell and van Melkebeek~\cite{DellM14}, through the simple transformation that appends a single new variable to all clauses. The stated lower bound is known to be essentially optimal~\cite{JansenP17}.

We are ready to prove Theorem~\ref{thm:IntroLowerGenQ(H)}, which asserts that for every non-bipartite projective core graph $H$ with $q(H) \geq 4$ and for any real $\eps >0$, the $\HCol$ problem parameterized by the vertex cover number $k$ does not admit a compression of size $O(k^{q(H)-1-\eps})$ unless $\NP \subseteq \coNPpoly$.

\begin{proof}[ of Theorem~\ref{thm:IntroLowerGenQ(H)}]
For a non-bipartite projective core graph $H=(V_H,E_H)$, set $q = q(H)$, and suppose that $q \geq 4$.
By Theorem~\ref{thm:NAE}, for any real number $\eps >0$, the $\qNAE$ problem parameterized by the number of variables $n$ does not admit a compression of size $O(n^{q-1-\eps})$ unless $\NP \subseteq \coNPpoly$. In what follows, we present a linear-parameter transformation from the $\qNAE$ problem parameterized by the number of variables to the $\HCol$ problem parameterized by the vertex cover number. Therefore, if the $\HCol$ problem parameterized by the vertex cover number $k$ admits a compression of size $O(k^{q-1-\eps})$, then composing this compression with the transformation described below would yield a compression of size $O(n^{q-1-\eps})$ for the $\qNAE$ problem parameterized by the number of variables $n$, implying that $\NP \subseteq \coNPpoly$.

Since $H$ is a non-bipartite projective core, it follows in particular that $|V_H| \geq 3$, so by Proposition~\ref{prop:projective}, $H$ admits an edge gadget $F=(V_F,E_F)$ with two specified vertices $a,b \in V_F$. The definition of the non-adjacency witness number implies that there exists a set $T = \{a_1, \ldots, a_q\} \subseteq V_H$ of size $q$, such that the vertices of $T$ do not share a common neighbor in $H$, while those of each of its $(q-1)$-subsets do. Consider an instance of the $\qNAE$ problem, namely, a $q$-CNF formula $\varphi$ on $n$ variables, denoted $x_1, \ldots, x_n$. We define a transformation that maps such an input to a graph $G=(V,E)$ and a vertex cover $X$ of $G$ as follows. Throughout, indices $j \in [q]$ are interpreted modulo $q$, so $0$ and $1$ are identified with $q$ and $q+1$, respectively. An illustration is given in Figure~\ref{fig:reduction}.
\begin{enumerate}
  \item Initialize the graph $G$ as the disjoint union of the graph $H=(V_H,E_H)$ and a collection of $2qn$ isolated vertices, denoted $t_{i,j}$ and $f_{i,j}$ for $i \in [n]$ and $j \in [q]$.
  \item\label{itm:F} For each of the following pairs of vertices, add to the graph $G$ a copy of the edge gadget $F$ with its specified vertices $a$ and $b$ identified with the vertices of the pair.
      \begin{enumerate}[label=\alph*.]
        \item $(t_{i,j},f_{i,j})$ for all $i \in [n]$ and $j \in [q]$.
        \item $(t_{i,j},t_{i,j+1})$ for all $i \in [n]$ and $j \in [q]$.
        \item $(t_{i,j},h)$ and $(f_{i,j},h)$ for all $i \in [n]$, $j \in [q]$, and $h \in V_H \setminus \{a_j,a_{j+1}\}$.
      \end{enumerate}
      Note that the total number of copies of $F$ added to the graph $G$ is \[2qn+2qn \cdot (|V_H|-2)=2q(|V_H|-1)n,\] and that each of them introduces $|V_F|-2$ new vertices.
  \item For each clause $C = (y_1 \vee \cdots \vee y_q)$ in $\varphi$, add to $G$ a vertex $c$, referred to as a {\em clause vertex}, and connect it to $q$ vertices in $G$ as follows. For each $j \in [q]$, if $y_j = x_i$ for $i \in [n]$, connect $c$ to the vertex $t_{i,j}$, and otherwise, if $y_j = \overline{x_i}$ for $i \in [n]$, connect $c$ to the vertex $f_{i,j}$.
  \item Let $X$ be the set of all vertices of $G$ except the clause vertices, and return the pair $(G,X)$.
\end{enumerate}
Note that the clause vertices are not adjacent in $G$ to one another, so $X$ is a vertex cover of $G$, implying that $(G,X)$ is a valid output. The number of vertices in $X$ satisfies
\[ |X| = |V_H|+2qn+2q(|V_H|-1)n \cdot (|V_F|-2) = O(n),\]
so the transformation is linear-parameter. One can easily check that the transformation can be implemented in polynomial time.

Before the formal correctness proof, let us briefly explain the high-level idea of the construction. The produced graph $G$ includes, for each $i \in [n]$, a gadget that encodes an assignment to the variable $x_i$. This gadget consists of $2q$ vertices, denoted $t_{i,1}, \ldots, t_{i,q}$ and $f_{i,1}, \ldots, f_{i,q}$, together with multiple copies of the edge gadget $F$ that enforce inequality constraints. We consider two mappings from these $2q$ vertices in $G$ to the vertices of $T = \{a_1, \ldots, a_q\}$ in $H$. The first maps each vertex $t_{i,j}$ to $a_j$ and each vertex $f_{i,j}$ to $a_{j+1}$, while the second maps each vertex $t_{i,j}$ to $a_{j+1}$ and each vertex $f_{i,j}$ to $a_j$ (see Figure~\ref{fig:reduction}). We interpret these two mappings as the $\true$ and $\false$ assignments to $x_i$, respectively. A crucial property of the construction, ensured by the copies of the edge gadget $F$, is that every homomorphism from $G$ to $H$ agrees, up to an automorphism of $H$, with one of these two fixed mappings on each gadget. Therefore, the possible assignments to the variables of $\varphi$ are essentially represented by the homomorphisms from the subgraph of $G$ induced by the variable gadgets to $H$. Now, each clause $C$ in $\varphi$ is associated with a vertex $c$ in $G$ connected by $q$ edges to vertices of the variable gadgets, one edge for each literal of $C$. If these literals involve both the $\true$ and $\false$ values under a given assignment, then the corresponding homomorphism from $G$ to $H$ maps the neighbors of $c$ to a proper subset of $T$, allowing $c$ to be assigned a compatible vertex of $H$. Conversely, if all literals of $C$ are evaluated to the same value, the corresponding homomorphism maps the neighbors of $c$ to the entire set $T$, so it cannot be properly extended to $c$. With this intuitive idea in mind, we proceed to the full proof.

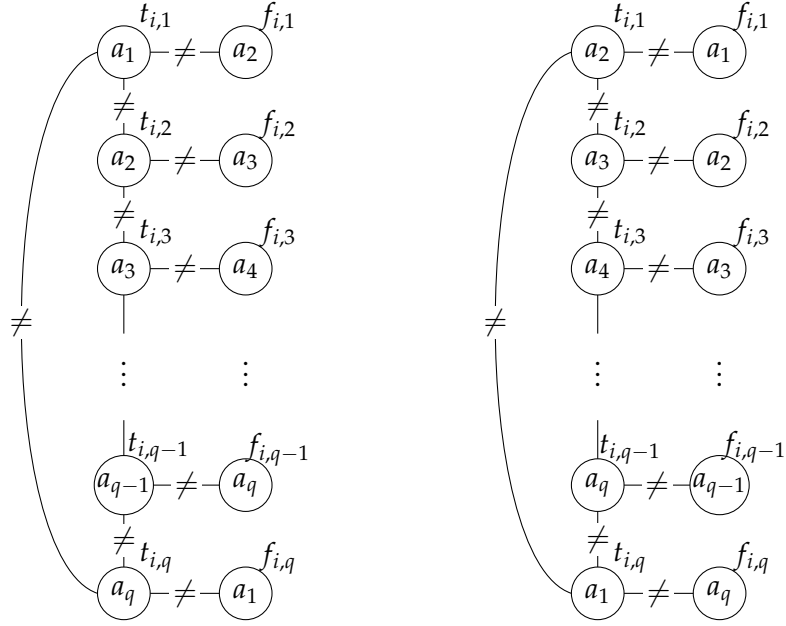
\begin{figure}[!htbp]
\centering 
\begin{tikzpicture}[
    scale=0.90,
    node_number_style/.style={
        circle, draw,
        minimum size=7mm,
        inner sep=0pt,
        outer sep=0pt
    },
    vertical_dots_style/.style={
        rectangle, draw=none, inner sep=0pt,
        minimum height=0.6cm
    }
]
    \pgfmathsetmacro{\vd}{1.6}   
    \pgfmathsetmacro{\hd}{1.8}
    \pgfmathsetmacro{\graphshift}{7}
    \pgfmathsetmacro{\labeloffset}{0.5}
    \pgfmathsetmacro{\xshift}{\labeloffset*cos(45) - 0.15}
    \pgfmathsetmacro{\yshift}{\labeloffset*sin(45) - 0.1}
    \node (L1) at (0,0) [node_number_style] {$a_1$};
    \node at ([shift={(\xshift,\yshift)}]L1.north east) {$t_{i,1}$};
    \node (L2) at (0,-\vd) [node_number_style] {$a_2$};
    \node at ([shift={(\xshift,\yshift)}]L2.north east) {$t_{i,2}$};
    \node (L3) at (0,-2*\vd) [node_number_style] {$a_3$};
    \node at ([shift={(\xshift,\yshift)}]L3.north east) {$t_{i,3}$};
    \node (Lq_minus_1) at (0,-4*\vd) [node_number_style] {$a_{q-1}$};
    \node at ([shift={(\xshift,\yshift)}]Lq_minus_1.north east) {$t_{i,q-1}$};
    \node (Lq) at (0,-5*\vd) [node_number_style] {$a_q$};
    \node at ([shift={(\xshift,\yshift)}]Lq.north east) {$t_{i,q}$};
    \node (R1) at (\hd,0) [node_number_style] {$a_2$};
    \node at ([shift={(\xshift,\yshift)}]R1.north east) {$f_{i,1}$};
    \node (R2) at (\hd,-\vd) [node_number_style] {$a_3$};
    \node at ([shift={(\xshift,\yshift)}]R2.north east) {$f_{i,2}$};
    \node (R3) at (\hd,-2*\vd) [node_number_style] {$a_4$};
    \node at ([shift={(\xshift,\yshift)}]R3.north east) {$f_{i,3}$};
    \node (Rq_minus_1) at (\hd,-4*\vd) [node_number_style] {$a_q$};
    \node at ([shift={(\xshift,\yshift)}]Rq_minus_1.north east) {$f_{i,q-1}$};
    \node (Rq) at (\hd,-5*\vd) [node_number_style] {$a_1$};
    \node at ([shift={(\xshift,\yshift)}]Rq.north east) {$f_{i,q}$};
    \draw (L1) -- (L2) node[midway, fill=white, inner sep=1pt] {$\neq$};
    \draw (L2) -- (L3) node[midway, fill=white, inner sep=1pt] {$\neq$};
    \draw (L3) -- ++(0,-0.6*\vd);  
    \draw (Lq_minus_1) -- ++(0,0.6*\vd); 
    \draw (Lq_minus_1) -- (Lq) node[midway, fill=white, inner sep=1pt] {$\neq$};
    \draw (L1) -- (R1) node[midway, fill=white, inner sep=1pt] {$\neq$};
    \draw (L2) -- (R2) node[midway, fill=white, inner sep=1pt] {$\neq$};
    \draw (L3) -- (R3) node[midway, fill=white, inner sep=1pt] {$\neq$};
    \draw (Lq_minus_1) -- (Rq_minus_1) node[midway, fill=white, inner sep=1pt] {$\neq$};
    \draw (Lq) -- (Rq) node[midway, fill=white, inner sep=1pt] {$\neq$};
    \draw (Lq.west) .. controls +(-1.5cm,0.5cm) and +(-1.5cm,-0.5cm) .. (L1.west)
        node[midway, fill=white, inner sep=1pt] {$\neq$};  
    \node[vertical_dots_style, text depth=0pt] at (0, {-2.9*\vd}) {\vdots};
    \node at (\hd, -2.9*\vd) {\vdots};
    \begin{scope}[xshift=\graphshift cm]
        \node (RL1) at (0,0) [node_number_style] {$a_2$};
        \node at ([shift={(\xshift,\yshift)}]RL1.north east) {$t_{i,1}$};
        \node (RL2) at (0,-\vd) [node_number_style] {$a_3$};
        \node at ([shift={(\xshift,\yshift)}]RL2.north east) {$t_{i,2}$};
        \node (RL3) at (0,-2*\vd) [node_number_style] {$a_4$};
        \node at ([shift={(\xshift,\yshift)}]RL3.north east) {$t_{i,3}$};
        \node (RLq_minus_1) at (0,-4*\vd) [node_number_style] {$a_q$};
        \node at ([shift={(\xshift,\yshift)}]RLq_minus_1.north east) {$t_{i,q-1}$};
        \node (RLq) at (0,-5*\vd) [node_number_style] {$a_1$};
        \node at ([shift={(\xshift,\yshift)}]RLq.north east) {$t_{i,q}$};
        \node (RR1) at (\hd,0) [node_number_style] {$a_1$};
        \node at ([shift={(\xshift,\yshift)}]RR1.north east) {$f_{i,1}$};
        \node (RR2) at (\hd,-\vd) [node_number_style] {$a_2$};
        \node at ([shift={(\xshift,\yshift)}]RR2.north east) {$f_{i,2}$};
        \node (RR3) at (\hd,-2*\vd) [node_number_style] {$a_3$};
        \node at ([shift={(\xshift,\yshift)}]RR3.north east) {$f_{i,3}$};
        \node (RRq_minus_1) at (\hd,-4*\vd) [node_number_style] {$a_{q-1}$};
        \node at ([shift={(\xshift,\yshift)}]RRq_minus_1.north east) {$f_{i,q-1}$};
        \node (RRq) at (\hd,-5*\vd) [node_number_style] {$a_q$};
        \node at ([shift={(\xshift,\yshift)}]RRq.north east) {$f_{i,q}$};
        \draw (RL1) -- (RL2) node[midway, fill=white, inner sep=1pt] {$\neq$};
        \draw (RL2) -- (RL3) node[midway, fill=white, inner sep=1pt] {$\neq$};
        \draw (RL3) -- ++(0,-0.6*\vd); 
        \draw (RLq_minus_1) -- ++(0,0.6*\vd); 
        \draw (RLq_minus_1) -- (RLq) node[midway, fill=white, inner sep=1pt] {$\neq$};
        \draw (RL1) -- (RR1) node[midway, fill=white, inner sep=1pt] {$\neq$};
        \draw (RL2) -- (RR2) node[midway, fill=white, inner sep=1pt] {$\neq$};
        \draw (RL3) -- (RR3) node[midway, fill=white, inner sep=1pt] {$\neq$};
        \draw (RLq_minus_1) -- (RRq_minus_1) node[midway, fill=white, inner sep=1pt] {$\neq$};
        \draw (RLq) -- (RRq) node[midway, fill=white, inner sep=1pt] {$\neq$};
        \draw (RLq.west) .. controls +(-1.5cm,0.5cm) and +(-1.5cm,-0.5cm) .. (RL1.west)
            node[midway, fill=white, inner sep=1pt] {$\neq$}; 
        \node[vertical_dots_style, text depth=0pt] at (0, {-2.9*\vd}) {\vdots};
        \node at (\hd, -2.9*\vd) {\vdots};
    \end{scope}
\end{tikzpicture}
    \caption{The gadget graph for a variable $x_i$. Each $\neq$ sign represents a copy of the edge gadget $F$, which enforces inequality between its endpoints under any homomorphism to $H$. The mapping on the left corresponds to $x_i$ assigned the value $\true$, and the one on the right corresponds to $x_i$ assigned the value $\false$.}
    \label{fig:reduction}
\end{figure}

We show that $\varphi$ is NAE-satisfiable if and only if $G$ is $H$-colorable.
Suppose first that $\varphi$ is NAE-satisfiable, and consider an assignment $\rho:\{x_1, \ldots,x_n\} \rightarrow \{\true,\false\}$ for which both values appear in each clause of $\varphi$. We define a mapping $g:V \rightarrow V_H$ as follows. First, define $g(h) = h$ for all $h \in V_H$. Clearly, the endpoints of every edge of the copy of $H$ in $G$ are mapped by $g$ to adjacent vertices in $H$. Next, for each $i \in [n]$, if $\rho(x_i) = \true$, then set $g(t_{i,j}) = a_j$ and $g(f_{i,j}) = a_{j+1}$ for all $j \in [q]$, and if $\rho(x_i) = \false$, then set $g(t_{i,j}) = a_{j+1}$ and $g(f_{i,j}) = a_{j}$ for all $j \in [q]$. These assignments guarantee that the vertices in each pair from Item~\ref{itm:F} of the construction are mapped by $g$ to distinct vertices of $H$. Since $F$ is an edge gadget for $H$, it follows that $g$ can be extended to all vertices of these copies of $F$, ensuring that the endpoints of each of their edges are mapped to adjacent vertices in $H$.

It remains to extend $g$ to the clause vertices in $G$. Consider any clause $C = (y_1 \vee \cdots \vee y_q)$ in $\varphi$ and its associated vertex $c$. Since the values of the literals in $C$ are not all equal under $\rho$, there exists some $j \in [q]$ for which $y_{j-1}$ is evaluated to $\false$ and $y_j$ is evaluated to $\true$. We claim that two neighbors of $c$ in $G$ are mapped by $g$ to $a_j$. Indeed, if $y_{j-1} = x_i$ for some $i \in [n]$, then $c$ is adjacent to $t_{i,j-1}$ and $\rho(x_i)=\false$, so $g(t_{i,j-1}) =a_j$. Otherwise, $y_{j-1} = \overline{x_i}$ for some $i \in [n]$, so $c$ is adjacent to $f_{i,j-1}$ and $\rho(x_i)=\true$, thus $g(f_{i,j-1}) =a_j$. Similarly, if $y_j = x_i$ for some $i \in [n]$, then $c$ is adjacent to $t_{i,j}$ and $\rho(x_i)=\true$, so $g(t_{i,j}) =a_j$. Otherwise, $y_j = \overline{x_i}$ for some $i \in [n]$, so $c$ is adjacent to $f_{i,j}$ and $\rho(x_i)=\false$, thus $g(f_{i,j}) =a_j$. Since the $q$ neighbors of $c$ are mapped by $g$ to vertices of $T$, we conclude that they are mapped to a proper subset of $T$. The vertices of such a subset share a common neighbor in $H$, so we can define $g(c)$ as a vertex in $H$ that is adjacent to all the images under $g$ of the neighbors of $c$ in $G$. This gives us a homomorphism from $G$ to $H$, implying that $G$ is $H$-colorable.

For the converse direction, suppose that the graph $G$ is $H$-colorable, and consider a homomorphism $g:V \rightarrow V_H$ from $G$ to $H$. The restriction of $g$ to the vertex set $V_H$ of the copy of $H$ in $G$ is a homomorphism from $H$ to itself. Since $H$ is a core, there exists an automorphism $\pi: V_H \rightarrow V_H$ of $H$ such that $g(h)=\pi(h)$ for all $h \in V_H$. Note that its inverse $\pi^{-1}$ is also an automorphism of $H$. Let $g':V \rightarrow V_H$ be the mapping defined by $g'(v) = \pi^{-1}(g(v))$ for all $v \in V$. Note that $g'$ is also a homomorphism from $G$ to $H$, since for every pair of adjacent vertices $u$ and $v$ in $G$, the vertices $g(u)$ and $g(v)$ are adjacent in $H$, and therefore the vertices $g'(u) = \pi^{-1}(g(u))$ and $g'(v) = \pi^{-1}(g(v))$ are adjacent in $H$ as well. Moreover, the mapping $g'$ acts as the identity on $V_H$, because $g'(h) = \pi^{-1}(g(h)) = \pi^{-1}(\pi(h)) = h$ for all $h \in V_H$.

We now define an assignment $\rho:\{x_1, \ldots,x_n\} \rightarrow \{\true,\false\}$ to the variables of $\varphi$. Fix some $i \in [n]$, and consider the copies of the edge gadget $F$ in the graph $G$, as described in Item~\ref{itm:F} of the construction. The restriction of $g'$ to each such copy forms a homomorphism to $H$, hence its specified vertices are mapped by $g'$ to distinct vertices in $H$. This implies that for all $j \in [q]$, the vertices $t_{i,j}$ and $f_{i,j}$ are mapped by $g'$ to distinct vertices, as are $t_{i,j}$ and $t_{i,j+1}$, and since $g'$ acts as the identity on $V_H$, each of $t_{i,j}$ and $f_{i,j}$ is mapped to either $a_j$ or $a_{j+1}$. Therefore, either $g'(t_{i,j})=a_j$ and $g'(f_{i,j})=a_{j+1}$ for all $j \in [q]$, or $g'(t_{i,j})=a_{j+1}$ and $g'(f_{i,j})=a_{j}$ for all $j \in [q]$. We define $\rho(x_i) = \true$ in the former case, and $\rho(x_i) = \false$ in the latter.

We finally show that each clause in $\varphi$ includes two literals with distinct values under $\rho$, implying that $\varphi$ is NAE-satisfiable. Consider a clause $C = (y_1 \vee \cdots \vee y_q)$ in $\varphi$ and its associated vertex $c$. The homomorphism $g'$ maps $c$ to a vertex in $H$ that is adjacent to the images under $g'$ of its $q$ neighbors in $G$. Since each of those neighbors is mapped by $g'$ to a vertex of $T$, and since the vertices of $T$ do not share a neighbor in $H$, it follows that two neighbors of $c$ are mapped by $g'$ to the same vertex $a_j$ with $j \in [q]$. By the above discussion, these neighbors must correspond to the literals $y_{j-1}$ and $y_j$ in $C$. Now, if $y_{j-1} = x_i$ for some $i \in [n]$, then the edge of $c$ corresponding to $y_{j-1}$ connects $c$ and $t_{i,j-1}$, so the fact that $g'(t_{i,j-1})=a_j$ implies that $\rho(x_i)=\false$. Further, if $y_{j-1} = \overline{x_i}$ for some $i \in [n]$, then the edge of $c$ corresponding to $y_{j-1}$ connects $c$ and $f_{i,j-1}$, so the fact that $g'(f_{i,j-1})=a_j$ implies that $\rho(x_{i})=\true$. In both cases, the value of $y_{j-1}$ under $\rho$ is $\false$. Similarly, if $y_j = x_i$ for some $i \in [n]$, then the edge of $c$ corresponding to $y_j$ connects $c$ and $t_{i,j}$, so the fact that $g'(t_{i,j})=a_j$ implies that $\rho(x_i)=\true$. Further, if $y_j = \overline{x_i}$ for some $i \in [n]$, then the edge of $c$ corresponding to $y_j$ connects $c$ and $f_{i,j}$, so the fact that $g'(f_{i,j})=a_j$ implies that $\rho(x_{i  })=\false$. In both cases, the value of $y_j$ under $\rho$ is $\true$. Therefore, the values of the literals in $C$ under $\rho$ are not all equal, as required.
\end{proof}

\section{Non-adjacency Witness Number}\label{sec:NAWN}

In this section, we thoroughly study the non-adjacency witness number of graphs (see Definition~\ref{def:q(G)}). We first provide several useful bounds on this graph quantity, then determine its value for concrete families of graphs, and finally analyze its behavior on random graphs.

Before proceeding, we note that the non-adjacency witness number of a graph can be expressed in terms of the Helly property of its open neighborhoods. A family of sets is said to satisfy the {\em Helly property of order $q$} if every subfamily with an empty intersection that is minimal with respect to containment includes at most $q$ sets. In this language, the non-adjacency witness number $q(G)$ of a graph $G=(V,E)$ is the smallest positive integer $q$ such that the family $\{N_G(v) \mid v \in V\}$ of open neighborhoods in $G$ satisfies the Helly property of order $q$. This notion has been studied in the literature, with particular attention given to graphs $G$ with $q(G)=2$ (see, e.g.,~\cite{DouradoPS06,GroshausLS17}). Such graphs are often called open neighborhood-Helly graphs, and a full characterization of them is given in~\cite{GroshausS07}. An alternative version of this notion replaces open neighborhoods with closed ones. Note that the two notions behave quite differently for certain graph classes. For example, for line graphs, it was proved in~\cite{KanteLMN12} that the family of closed neighborhoods always satisfies the Helly property of order $6$, and yet, their non-adjacency witness number can be arbitrarily large (as demonstrated, for instance, by complete graphs; see Corollary~\ref{cor:q(Km)}).

\subsection{Fundamental Bounds}

The following simple lemma shows that the non-adjacency witness number is sandwiched between the clique number and the maximum degree plus one.

\begin{lemma}\label{lemma:q_bounds}
For every graph $G$, it holds that $\omega(G) \leq q(G) \leq \Delta(G)+1$.
\end{lemma}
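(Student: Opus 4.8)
The plan is to prove the two inequalities independently, in each case by producing an explicit witness set and reading off the conclusion directly from Definition~\ref{def:q(G)}.

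For the lower bound $\omega(G)\le q(G)$, I would take a clique $K\subseteq V$ of size $\omega=\omega(G)$ and argue that (i) $K$ has no common neighbor in $G$, since such a neighbor (which cannot be a vertex of $K$, as $G$ is loopless) would extend $K$ to a clique of size $\omega+1$; and (ii) every proper subset $T'\subsetneq K$ \emph{does} have a common neighbor, namely any vertex of $K\setminus T'$, which is adjacent to all of $T'$ because $K$ is a clique. Applying Definition~\ref{def:q(G)} to $T=K$ then yields a nonempty subset $T'\subseteq K$ with $|T'|\le q(G)$ and no common neighbor; by (ii) this forces $T'=K$, hence $\omega=|K|\le q(G)$.

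For the upper bound $q(G)\le\Delta(G)+1$, I would start from an arbitrary nonempty $T\subseteq V$ with no common neighbor and construct the required $T'$ by hand: fix any $t_0\in T$; for each neighbor $v\in N_G(t_0)$, the set $T$ is not contained in $N_G(v)$, so choose some $t_v\in T$ that is not adjacent to $v$; and set $T'=\{t_0\}\cup\{t_v\mid v\in N_G(t_0)\}$, which has size at most $1+\Delta(G)$. A putative common neighbor $w$ of $T'$ would lie in $N_G(t_0)$, and then $t_w\in T'$ would be non-adjacent to $w$ --- a contradiction --- so $T'$ has no common neighbor, as needed.

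Both arguments are short and largely formal; the only points requiring mild care are the boundary cases (an edgeless $G$, where $\Delta(G)=0$ and the clique is a single vertex; a complete $G$, where the clique is all of $V$ and still has no common neighbor precisely because $G$ is loopless; and the convention that the empty set has every vertex as a common neighbor, which is what guarantees the $T'$ produced by the definition is nonempty). These are all handled automatically by the constructions above, so I do not anticipate a genuine obstacle.
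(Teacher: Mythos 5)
Your proof is correct. The lower bound is exactly the paper's argument: take a maximum clique, observe it has no common neighbor while every proper subset does, and conclude $\omega(G)\leq q(G)$ from Definition~\ref{def:q(G)}. For the upper bound, however, you take a more elaborate route than necessary. You construct a witness set by hand, picking $t_0\in T$ and, for each $v\in N_G(t_0)$, a vertex $t_v\in T$ non-adjacent to $v$; this is valid, and it in fact yields a witness of size at most $1+\deg_G(t_0)$. The paper instead uses the one-line observation that \emph{any} set of $\Delta(G)+1$ vertices automatically has no common neighbor, since such a neighbor would have degree exceeding $\Delta(G)$; so if $|T|\leq\Delta(G)+1$ one takes $T'=T$, and otherwise any $(\Delta(G)+1)$-subset of $T$ works. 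Your construction buys nothing extra here beyond the marginally sharper bound in terms of a single vertex's degree, which the lemma does not need, so the degree-counting shortcut is the cleaner choice; but there is no gap in what you wrote, and your handling of the boundary cases (nonemptiness of $T'$, looplessness) is sound.
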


\begin{proof}
Let $G=(V,E)$ be a graph. For the left-hand inequality, let $T \subseteq V$ be a clique of size $\omega(G)$ in $G$. By the maximality of $T$, its vertices have no common neighbor, whereas those of every proper subset of $T$ do. Thus, $\omega(G) \leq q(G)$. For the right-hand inequality, let $T \subseteq V$ be a set of vertices with no common neighbor in $G$. To show that there exists a subset $T' \subseteq T$ of size at most $\Delta(G)+1$ with no common neighbor, note that if $|T| \leq \Delta(G)+1$, one may take $T'=T$. Otherwise, let $T'$ be any subset of $T$ of size $\Delta(G) + 1$, and notice that its vertices have no common neighbor in $G$. This implies that $q(G) \leq \Delta(G) +1$, and we are done.
\end{proof}
\noindent
Note that the upper bound in Lemma~\ref{lemma:q_bounds} can also be derived by combining Lemmas~\ref{lemma:q_faith} and~\ref{lemma:faith_Delta+1}.

As an immediate consequence, we determine the non-adjacency witness number of complete graphs.

\begin{corollary}\label{cor:q(Km)}
For every integer $m \geq 1$, it holds that $q(K_m)=m$.
\end{corollary}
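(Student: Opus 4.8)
The plan is to derive $q(K_m)=m$ as an immediate corollary of Lemma~\ref{lemma:q_bounds}, by observing that the complete graph $K_m$ makes both bounds in that lemma coincide. First I would note that the clique number of $K_m$ is $\omega(K_m)=m$, since the entire vertex set of $K_m$ forms a clique. Next I would observe that the maximum degree of $K_m$ is $\Delta(K_m)=m-1$, as every vertex is adjacent to all $m-1$ other vertices, and hence $\Delta(K_m)+1=m$. Plugging these two values into the chain of inequalities $\omega(G)\leq q(G)\leq \Delta(G)+1$ from Lemma~\ref{lemma:q_bounds} gives $m \leq q(K_m) \leq m$, forcing $q(K_m)=m$.

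There is essentially no obstacle here; the only point requiring a moment's care is the degenerate case $m=1$, where $K_1$ is a single vertex with no edges, so $\Delta(K_1)=0$ and $\omega(K_1)=1$, and the formula $q(K_1)=1$ still holds (indeed, $q(G)\geq 1$ for every graph by definition). For $m\geq 2$ the argument above applies verbatim. I would therefore write the corollary's proof in one or two sentences: identify $\omega(K_m)=m$ and $\Delta(K_m)=m-1$, then invoke Lemma~\ref{lemma:q_bounds} to sandwich $q(K_m)$ between $m$ and $m$.
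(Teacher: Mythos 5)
Your proposal is correct and matches the paper's proof exactly: both compute $\omega(K_m)=m$ and $\Delta(K_m)=m-1$ and sandwich $q(K_m)$ via Lemma~\ref{lemma:q_bounds}. The separate treatment of $m=1$ is harmless but unnecessary, since the same sandwich $\omega(K_1)=1\leq q(K_1)\leq \Delta(K_1)+1=1$ already applies.
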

\begin{proof}
The complete graph $K_m$ on $m$ vertices clearly satisfies $\omega(K_m)=m$ and $\Delta(K_m)=m-1$, hence it follows from Lemma~\ref{lemma:q_bounds} that $q(K_m) = m$.
\end{proof}

We next show that the non-adjacency witness number of a graph is at least that of its core.

\begin{lemma}\label{lemma:q(core)}
Let $G$ be a graph, and let $G'$ be a core of $G$. Then $q(G') \leq q(G)$.
\end{lemma}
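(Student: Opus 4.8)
The plan is to exhibit, for every set $T'\subseteq V_{G'}$ of vertices of $G'$ with no common neighbor in $G'$, a small subset witnessing this fact, by transferring the question into the larger graph $G$ and using the bound $q(G)$ there. The key point is that $G'$ is (isomorphic to) an induced subgraph of $G$, and that there is a homomorphism $r\colon G\to G'$ which acts as the identity on $V_{G'}$ (a retraction onto the core). I would use $r$ to pull back ``common neighbor'' information from $G$ to $G'$.

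First I would fix a core $G'=(V_{G'},E_{G'})$ of $G=(V,E)$; by the standard facts recalled in the preliminaries, we may assume $V_{G'}\subseteq V$, that $G'=G[V_{G'}]$, and that there is a homomorphism $r\colon G\to G'$ with $r(v)=v$ for all $v\in V_{G'}$ (compose the homomorphism $G\to G'$ with the isomorphism identifying the image with the induced subgraph). Now take any $T\subseteq V_{G'}$ with no common neighbor in $G'$; I want a subset of size at most $q(G)$ with no common neighbor in $G'$. The crucial intermediate claim is that $T$ also has no common neighbor in $G$: if some $v\in V$ were adjacent in $G$ to every vertex of $T$, then $r(v)\in V_{G'}$ would be adjacent in $G'$ to $r(t)=t$ for every $t\in T$ (using that $r$ is a homomorphism and fixes $T$ pointwise), contradicting the choice of $T$. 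Here I should note that $v\notin T$ since no vertex of $T$ is adjacent to all of $T$ in $G'$, and $r(v)$ is genuinely a common neighbor; a loopless graph has no vertex adjacent to itself, so this is clean.

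Once we know $T$ has no common neighbor in $G$, the definition of $q(G)$ gives a subset $T'\subseteq T$ with $|T'|\le q(G)$ and no common neighbor in $G$. Since $G'$ is a subgraph of $G$ (every edge of $G'$ is an edge of $G$), having no common neighbor in $G$ immediately implies having no common neighbor in $G'$ — the set of common neighbors in $G'$ is contained in the set of common neighbors in $G$, and in fact $T'\subseteq V_{G'}$ so any common neighbor of $T'$ lying in $G'$ would be one in $G$. Thus $T'$ is the desired witness, and since $T$ was arbitrary we conclude $q(G')\le q(G)$.

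The only real subtlety — and the step I would be most careful about — is the pull-back argument showing $T$ has no common neighbor in $G$: one must verify that $r(v)$ is a legitimate \emph{common neighbor} of $T$ in $G'$, i.e.\ that $r(v)\ne t$ for each $t\in T$ (so that $\{r(v),t\}$ is a genuine edge and not a loop) and that $\{r(v),t\}\in E_{G'}$. The latter follows since $\{v,t\}\in E$ and $r$ is a homomorphism with $r(t)=t$; the former holds because if $r(v)=t_0\in T$ then $t_0$ would be a common neighbor of $T$ in $G'$, but no element of $T$ is adjacent to every element of $T$ in the loopless graph $G'$ — contradicting that $T$ has no common neighbor in $G'$. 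Everything else is bookkeeping with induced subgraphs.
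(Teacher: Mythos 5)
Your proof is correct and follows essentially the same route as the paper's: realize $G'$ as an induced subgraph, construct a retraction $G\to G'$ fixing $V_{G'}$ pointwise by composing the homomorphism with the inverse of the automorphism it induces on $V_{G'}$, and use it to show that a set $T\subseteq V_{G'}$ with no common neighbor in $G'$ also has none in $G$. Your extra care about verifying $r(v)\neq t$ (so that no loop is invoked) is a valid point that the paper's proof passes over silently, but it does not change the argument.
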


\begin{proof}
Let $G=(V,E)$ be a graph, and let $G'=(V',E')$ be a core of $G$. As a core of a graph is isomorphic to an induced subgraph, we may assume that $G'$ is an induced subgraph of $G$. Let $f: V \rightarrow V'$ be a homomorphism from $G$ to $G'$. The restriction $\pi : V' \rightarrow V'$ of $f$ to $V'$ is a homomorphism from $G'$ to itself, and hence an automorphism of $G'$. Furthermore, the inverse $\pi^{-1}$ of $\pi$ is an automorphism of $G'$ as well. Define the function $g: V \rightarrow V'$ by $g(v) = \pi^{-1}(f(v))$ for all $v \in V$. Observe that $g$ is also a homomorphism from $G$ to $G'$, because for every pair of adjacent vertices $u$ and $v$ in $G$, the vertices $f(u)$ and $f(v)$ are adjacent in $G'$, and so are the vertices $g(u) = \pi^{-1}(f(u))$ and $g(v) = \pi^{-1}(f(v))$. Moreover, the function $g$ maps each vertex of $G'$ to itself, because for every $v \in V'$, we have $g(v) = \pi^{-1}(f(v)) = \pi^{-1}(\pi(v)) = v$.

Now, let $q = q(G)$. To prove that $q(G') \leq q$, consider a set $T \subseteq V'$ of vertices in $G'$ with no common neighbor. We claim that $T$ has no common neighbor in $G$. Indeed, suppose for contradiction that some vertex $v \in V$ is adjacent in $G$ to all vertices of $T$. This implies that $g(v)$ is adjacent in $G'$ to all vertices of $T$, because $g$ is a homomorphism that acts as the identity on $V'$. This contradicts the assumption that $T$ has no common neighbor in $G'$. Finally, given that $T$ has no common neighbor in $G$, there exists a subset $T' \subseteq T$ of size $|T'| \leq q$ with no common neighbor in $G$, and hence with no common neighbor in $G'$. This completes the proof.
\end{proof}

\begin{remark}
The non-adjacency witness number of a graph can be arbitrarily larger than that of its core. For example, for any integer $m \geq 3$, consider the graph $G$ obtained by removing a perfect matching from the complete bipartite graph with $m$ vertices in each part. One can verify that $q(G)=m$, whereas the core $K_2$ of $G$ satisfies $q(K_2)=2$.
\end{remark}

\begin{remark}
Lemma~\ref{lemma:q(core)} shows that replacing a graph with its core does not increase the non-adjacency witness number. This, however, does not hold when replacing it with a general induced subgraph, as demonstrated by the following example. Consider the cube graph $Q_3$ on the vertex set $\{0,1\}^3$, in which two vertices are adjacent if the corresponding vectors differ in exactly one entry. Let $G$ be the graph obtained from $Q_3$ by adding an edge between the vertices $(0,0,0)$ and $(1,1,1)$, and let $G'$ be its induced subgraph on $\{0,1\}^3 \setminus \{(0,0,0)\}$. Observe that $q(G') \geq 3$, as the set $\{(1,0,0),(0,1,0),(0,0,1)\}$ has no common neighbor in $G'$, while each of its $2$-subsets does. Yet, we claim that $q(G) \leq 2$. To see this, consider a set $T$ of vertices in $G$ with no common neighbor. If $T$ includes both $(0,0,0)$ and $(1,1,1)$, then these two already certify that it has no common neighbor. If $T$ includes exactly one of them, say $(0,0,0)$, then the fact that $(1,1,1)$ is not a common neighbor of the vertices in $T$ implies that $T$ includes a vertex whose vector has a single $1$, however, such a vertex shares no neighbor with $(0,0,0)$. Otherwise, $T$ includes neither $(0,0,0)$ nor $(1,1,1)$, so its vertices lie on the cycle of $6$ vertices induced by the remaining vertices of $G$. Since every set of three alternating vertices along this cycle has a common neighbor, it follows that $T$ includes a pair of consecutive vertices in the cycle. Such a pair certifies that $T$ has no common neighbor, so we are done.
\end{remark}

Our next bound on the non-adjacency witness number involves the family of graphs $B_{m,\ell}$, defined as follows.

\begin{definition}\label{def:B_m,l}
For a positive integer $m$ and for an integer $0 \leq \ell \leq m$, let $B_{m,\ell}$ denote the graph on the vertex set $\{u_1, \ldots, u_m\} \cup \{v_1,\ldots, v_\ell\}$, where for every $i \in [\ell]$, the vertex $v_i$ is adjacent to all vertices $u_j$ with $j \in [m] \setminus \{i\}$, and for every $i \in [m] \setminus [\ell]$, the vertex $u_i$ is adjacent to all vertices $u_j$ with $j \in [m] \setminus \{i\}$.
\end{definition}

A few observations are pertinent here.
First, the graph $B_{m,0}$ is isomorphic to the complete graph $K_m$.
Second, for admissible values of $m$ and $\ell$, each of $B_{m+1,\ell}$ and $B_{m+1,\ell+1}$ has a subgraph isomorphic to $B_{m,\ell}$.
And third, in the graph $B_{m,\ell}$, the degree of each vertex $v_i$ with $i \in [\ell]$ is $m-1$, the degree of each vertex $u_i$ with $i \in [\ell]$ is $m-1$, and the degree of each vertex $u_i$ with $i \in [m] \setminus [\ell]$ is $m+\ell-1$. Therefore, the number of edges in $B_{m,\ell}$ is
\begin{eqnarray}\label{eq:E(B_m,l)}
\frac{1}{2} \cdot \Big ( 2\ell \cdot (m-1) +(m-\ell) \cdot (m+\ell-1) \Big ) = \frac{1}{2} \cdot (m^2-\ell^2+2m\ell-m-\ell).
\end{eqnarray}

Equipped with Definition~\ref{def:B_m,l}, we are ready to present the following useful lemma.
\begin{lemma}\label{lemma:q_vs_B}
For a positive integer $q$, let $G$ be a graph that contains no copy (induced or not) of the graph $B_{q,\ell}$ for any $\ell$ with $0 \leq \ell \leq q$. Then $q(G) \leq q-1$.
\end{lemma}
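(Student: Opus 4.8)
The plan is to establish the contrapositive in the sharpened form: if $q(G) \ge q$, then $G$ contains a (not necessarily induced) copy of $B_{q,\ell}$ for some $\ell$ with $0 \le \ell \le q$. The first step is to recall the standard reformulation of Definition~\ref{def:q(G)}: $q(G)$ equals the maximum size of a set $T$ of vertices that has no common neighbor in $G$ while every proper subset of $T$ does have one (a \emph{minimal non-adjacency witness}); indeed, given any $S$ with no common neighbor, an inclusion-minimal subset of $S$ with no common neighbor is such a set. So, assuming $q(G) \ge q$, fix a minimal non-adjacency witness $T$ with $|T| \ge q$ and choose arbitrary distinct vertices $w_1, \ldots, w_q \in T$. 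For each $i \in [q]$, the set $T \setminus \{w_i\}$ is a proper subset of $T$, hence has a common neighbor, which I name $z_i$. Since $T$ itself has no common neighbor, $z_i$ is not adjacent to $w_i$.

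The core of the argument is to analyze the vertices $z_1, \ldots, z_q$ and read off the parameter $\ell$. First, $z_i \notin T \setminus \{w_i\}$, since $G$ is loopless and $z_i$ is adjacent to every vertex of $T \setminus \{w_i\}$; consequently, if $z_i$ lies in $\{w_1, \ldots, w_q\}$ then necessarily $z_i = w_i$. Second, the $z_i$ are pairwise distinct: if $z_i = z_j$ with $i \ne j$, then this vertex is adjacent to every vertex of $(T \setminus \{w_i\}) \cup (T \setminus \{w_j\}) = T$, contradicting that $T$ has no common neighbor. Now set $\ell := |\{i \in [q] : z_i \notin \{w_1, \ldots, w_q\}\}|$ and, after relabelling the indices, assume that $z_1, \ldots, z_\ell$ are these ``new'' vertices and that $z_{\ell+1} = w_{\ell+1}, \ldots, z_q = w_q$. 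I claim the map sending $u_i \mapsto w_i$ for $i \in [q]$ and $v_i \mapsto z_i$ for $i \in [\ell]$ is an embedding of $B_{q,\ell}$ (see Definition~\ref{def:B_m,l}) into $G$. It is injective: the $w_i$ are distinct, and each $z_i$ with $i \in [\ell]$ is distinct from $w_i$ and hence (by the first observation) from all of $w_1, \ldots, w_q$, while the $z_i$ are pairwise distinct by the second observation. It preserves the edges of $B_{q,\ell}$: for $i \in [\ell]$ and $j \ne i$, the edge $v_i u_j$ maps to $z_i w_j$, which is present because $z_i$ is a common neighbor of $T \setminus \{w_i\} \ni w_j$; and for $i \in [q] \setminus [\ell]$ and $j \ne i$, the edge $u_i u_j$ maps to $w_i w_j = z_i w_j$, present for the same reason. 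Since $0 \le \ell \le q$, the existence of this copy contradicts the hypothesis, and therefore $q(G) \le q - 1$.

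The reformulation of $q(G)$ via minimal witnesses and the edge-by-edge verification are routine. The one place that needs care is the dichotomy for each $z_i$ --- whether it equals the corresponding $w_i$ or is a genuinely new vertex --- since this is exactly what the parameter $\ell$ encodes, and one must confirm both that the new $z_i$ are mutually distinct and disjoint from $\{w_1, \ldots, w_q\}$ (so that a true copy of $B_{q,\ell}$, rather than a proper quotient, is produced) and that $B_{q,\ell}$ has no required edge $u_i u_j$ with $i, j \le \ell$ (so that nothing further needs to be checked among $w_1, \ldots, w_\ell$). I expect this bookkeeping to be the main, if mild, obstacle.
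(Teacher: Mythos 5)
Your proof is correct and follows essentially the same strategy as the paper's: pass to an inclusion-minimal witness set, use the common neighbors of its co-singleton subsets as the vertices $v_i$, verify the distinctness dichotomy ($v_i = u_i$ or $v_i$ is new), and read off a copy of $B_{\cdot,\ell}$. The only difference is organizational: the paper constructs $B_{m,\ell}$ on the entire minimal witness of size $m$ and separately shows that every $B_{m,\ell}$ with $m \geq q$ contains some $B_{q,\ell'}$, whereas you restrict to exactly $q$ vertices of the witness from the start and thereby skip that monotonicity step.
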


\begin{proof}
Let $G = (V,E)$ be a graph with no copy of $B_{q,\ell}$ for any $\ell$ with $0 \leq \ell \leq q$. We first observe that $G$ contains no copy of $B_{m,\ell}$ for all integers $m$ and $\ell$ satisfying $m \geq q$ and $0 \leq \ell \leq m$. Suppose, toward a contradiction, that $G$ contains a copy of $B_{m,\ell}$ for such integers $m$ and $\ell$. If $\ell \leq q$, then $B_{m,\ell}$ admits a subgraph isomorphic to $B_{q,\ell}$, as follows by applying $m-q$ times the fact that each graph $B_{m'+1,\ell'}$ contains a copy of $B_{m',\ell'}$. However, this contradicts the assumption that $G$ contains no copy of $B_{q,\ell}$ with $0 \leq \ell \leq q$. Otherwise, it holds that $\ell > q$, so $B_{m,\ell}$ admits a subgraph isomorphic to $B_{m-\ell+q,q}$, as follows by applying $\ell-q$ times the fact that each graph $B_{m'+1,\ell'+1}$ contains a copy of $B_{m',\ell'}$. This reduces to the previous case, contradicting again our assumption on $G$.

We are now ready to prove that $q(G) \leq q-1$. Let $T \subseteq V$ be a set of vertices with no common neighbor in $G$, and let $T'$ be a minimal subset of $T$ (with respect to containment) whose vertices have no common neighbor in $G$. Let $m = |T'|$ denote the size of $T'$, and put $T' = \{u_1, \ldots, u_m\}$. Our goal is to show that $m \leq q-1$. By the minimality of $T'$, for every $i \in [m]$, there exists a vertex $v_i \in V$ that is adjacent to all vertices of $T'$ except $u_i$. Note that $v_i$ may or may not be the vertex $u_i$. Let $\ell$ denote the number of indices $i \in [m]$ for which it holds that $u_i \neq v_i$, and notice that $0 \leq \ell \leq m$. It may be assumed, without loss of generality, that $u_i \neq v_i$ for all $i \in [\ell]$, and thus $u_i = v_i$ for all $i \in [m] \setminus [\ell]$. Observe that the subgraph of $G$ induced by the vertices of $\{u_1, \ldots, u_m\} \cup \{v_1,\ldots, v_\ell\}$ contains the graph $B_{m,\ell}$ as a subgraph. In particular, $G$ contains a copy of $B_{m,\ell}$, hence $m \leq q-1$, and we are done.
\end{proof}

As a simple consequence of Lemma~\ref{lemma:q_vs_B}, we relate the non-adjacency witness number of a graph to its degeneracy.

\begin{lemma}\label{lemma:d-deg}
Let $d$ be a positive integer. For every $d$-degenerate graph $G$, it holds that $q(G) \leq d+1$.
\end{lemma}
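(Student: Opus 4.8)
The plan is to deduce Lemma~\ref{lemma:d-deg} from Lemma~\ref{lemma:q_vs_B} by taking $q = d+2$. Concretely, I would show that a $d$-degenerate graph $G$ contains no copy, induced or not, of $B_{d+2,\ell}$ for any $\ell$ with $0 \le \ell \le d+2$; Lemma~\ref{lemma:q_vs_B} then immediately yields $q(G) \le (d+2)-1 = d+1$.

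The heart of the matter is the degree computation already recorded right after Definition~\ref{def:B_m,l}: in $B_{m,\ell}$ the vertices $v_i$ and $u_i$ with $i \in [\ell]$ have degree $m-1$, while the vertices $u_i$ with $i \in [m]\setminus[\ell]$ have degree $m+\ell-1 \ge m-1$. Hence $B_{m,\ell}$ has minimum degree exactly $m-1$. Taking $m = d+2$, every vertex of $B_{d+2,\ell}$ has degree at least $d+1$, so $B_{d+2,\ell}$ is not $d$-degenerate (being a subgraph of itself with no vertex of degree at most $d$). Since every subgraph of a $d$-degenerate graph is again $d$-degenerate, a $d$-degenerate graph $G$ cannot contain a subgraph isomorphic to $B_{d+2,\ell}$, that is, it contains no copy of $B_{d+2,\ell}$ for any admissible $\ell$, which is precisely the hypothesis of Lemma~\ref{lemma:q_vs_B} with $q = d+2$.

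I do not anticipate a genuine obstacle here; the only points requiring care are bookkeeping: applying Lemma~\ref{lemma:q_vs_B} with the correct value $q = d+2$ (so that the conclusion reads $q(G) \le q-1 = d+1$ rather than $d+2$), and observing that Lemma~\ref{lemma:q_vs_B} forbids copies that need not be induced, which is exactly what subgraph-closedness of degeneracy delivers. One could instead argue directly from a degeneracy ordering of $G$, but routing through the forbidden-subgraph characterization of Lemma~\ref{lemma:q_vs_B} is cleaner and avoids casework.
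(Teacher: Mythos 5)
Your proposal is correct and follows exactly the paper's argument: the paper likewise observes that the minimum degree of $B_{m,\ell}$ is $m-1$, concludes that a $d$-degenerate graph contains no copy of $B_{d+2,\ell}$ for any $0 \leq \ell \leq d+2$, and applies Lemma~\ref{lemma:q_vs_B} with $q = d+2$ to get $q(G) \leq d+1$. The bookkeeping points you flag (the shift $q = d+2$ and the fact that non-induced copies are what must be excluded) are handled the same way in the paper.
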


\begin{proof}
Let $G$ be a $d$-degenerate graph. Observe that the minimum degree of a vertex in the graph $B_{m,\ell}$ is $m-1$. Since every subgraph of $G$ has a vertex of degree at most $d$, it follows that $G$ contains no copy of $B_{d+2,\ell}$ with $0 \leq \ell \leq d+2$. By Lemma~\ref{lemma:q_vs_B}, this implies that $q(G) \leq d+1$, as desired.
\end{proof}

We next consider the non-adjacency witness number of planar graphs. Since every planar graph $G$ is $5$-degenerate, it follows from Lemma~\ref{lemma:d-deg} that it satisfies $q(G) \leq 6$. The following lemma improves on this bound. The proof relies on the well-known consequences of Euler's formula, which state that every (simple) planar graph $G=(V,E)$ on at least three vertices satisfies $|E| \leq 3 |V|-6$, and that if it is triangle-free, then it further satisfies $|E| \leq 2 |V|-4$.

\begin{lemma}\label{lemma:q_planar}
For every planar graph $G$, it holds that $q(G) \leq 4$.
\end{lemma}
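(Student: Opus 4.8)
The plan is to apply Lemma~\ref{lemma:q_vs_B} with $q=5$: since every subgraph of a planar graph is planar, it suffices to prove that for each $\ell$ with $0 \le \ell \le 5$, the graph $B_{5,\ell}$ is itself non-planar, as then no planar $G$ can contain a copy of $B_{5,\ell}$, and the lemma yields $q(G) \le 4$. The graph $B_{5,\ell}$ has $5+\ell$ vertices, and by~\eqref{eq:E(B_m,l)} it has $\frac{1}{2}(20 + 9\ell - \ell^2)$ edges, so the whole proof reduces to a short case analysis on $\ell$, using the two consequences of Euler's formula quoted before the statement (and, in one case, a mild refinement of them).

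First I would dispatch the easy cases. For $\ell \in \{0,1,2,3\}$ one checks by direct substitution that the edge count $\frac{1}{2}(20+9\ell-\ell^2)$ strictly exceeds $3(5+\ell)-6$, so $B_{5,\ell}$ is non-planar (for $\ell=0$ this is simply the non-planarity of $K_5 \cong B_{5,0}$). For $\ell=5$, the graph $B_{5,5}$ has no edge between two $u$-vertices (as $[m]\setminus[\ell]=\emptyset$) and no edge between two $v$-vertices, so it is bipartite, hence triangle-free; it has $10$ vertices and $20 > 2\cdot 10 - 4$ edges, so it is non-planar by the triangle-free bound.

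The hard part will be $\ell=4$: here $B_{5,4}$ has $9$ vertices and $20 \le 3\cdot 9 - 6$ edges, so the crude bound fails, and the graph does contain triangles, so the triangle-free bound does not apply either. I would handle it with a face-length refinement of Euler's formula together with two structural facts about $B_{5,4}$: (i) its unique vertex of maximum degree is $u_5$, which (having degree $m+\ell-1 = 8$) is adjacent to all other $8$ vertices; and (ii) the subgraph on the remaining $8$ vertices is bipartite with parts $\{u_1,\dots,u_4\}$ and $\{v_1,\dots,v_4\}$, hence triangle-free, so \emph{every} triangle of $B_{5,4}$ passes through $u_5$. Now suppose $B_{5,4}$ had a planar embedding. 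By Euler's formula it would have $20 - 9 + 2 = 13$ faces, and since $2\cdot 20 = \sum_{F}|F| \ge 3f_3 + 4(13-f_3)$, where $f_3$ is the number of triangular faces, we would get $f_3 \ge 12$. But each triangular face is a $3$-cycle of $B_{5,4}$, hence a triangle, hence incident to $u_5$, and at most $\deg(u_5)=8$ faces can be incident to a single vertex — a contradiction. Therefore $B_{5,4}$ is non-planar, which completes all six cases and, via Lemma~\ref{lemma:q_vs_B}, establishes $q(G)\le 4$ for every planar graph $G$.

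If one prefers to avoid the face-counting argument for $\ell=4$, an alternative is to note that deleting the universal vertex $u_5$ from $B_{5,4}$ leaves $K_{4,4}$ minus a perfect matching, which is isomorphic to the cube $Q_3$; since a graph with a universal vertex is planar only if the rest is outerplanar, and $Q_3$ is not outerplanar (it has a $K_4$ minor, or by the edge bound for girth-$4$ outerplanar graphs), $B_{5,4}$ is again non-planar. Either way, $\ell=4$ is the only step that is not an immediate substitution into~\eqref{eq:E(B_m,l)} and the two stated edge bounds.
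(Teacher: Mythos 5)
Your proposal is correct and follows the paper's proof almost exactly: reduce via Lemma~\ref{lemma:q_vs_B} to showing that $B_{5,\ell}$ is non-planar for all $0 \le \ell \le 5$, kill $\ell \in \{0,1,2,3\}$ with the bound $|E| \le 3|V|-6$, and kill $\ell=5$ with the triangle-free bound, all of which checks out numerically. The only divergence is the case $\ell=4$, where you correctly identify that both standard bounds fail. Your face-counting argument there is sound (Euler gives $13$ faces, the degree-sum over faces forces at least $12$ triangular faces, every triangle of $B_{5,4}$ passes through the universal vertex $u_5$, and at most $\deg(u_5)=8$ faces can meet $u_5$), as is your alternative via the fact that the cone over a graph is planar only if the graph is outerplanar, with $B_{5,4}-u_5 \cong Q_3$ not outerplanar. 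The paper's treatment of $\ell=4$ is lighter: it simply deletes the four edges $u_5u_i$, $i \in [4]$, observes that the remaining graph is bipartite with $9$ vertices and $16 > 2\cdot 9 - 4$ edges, and concludes non-planarity of this subgraph, hence of $B_{5,4}$. So your argument trades a one-line edge-deletion trick for a slightly longer but self-contained discharging-style count; both are valid, and everything else is identical.
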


\begin{proof}
By Lemma~\ref{lemma:q_vs_B}, it suffices to prove that every planar graph contains no copy of the graph $B_{5,\ell}$ for all $0 \leq \ell \leq 5$. To this end, we show that these graphs are not planar. Let $B_{5,\ell} = (V,E)$, and recall, using~\eqref{eq:E(B_m,l)}, that $|V|=5+\ell$ and $|E|=\frac{1}{2} \cdot (20+9\ell-\ell^2)$. For $0 \leq \ell \leq 3$, it can be easily verified that $|E| > 3|V|-6$, hence the graph $B_{5,\ell}$ is not planar. For $\ell=5$, the graph $B_{5,5}$ is bipartite, and thus triangle-free, and it holds that $|E|>2|V|-4$, so it is not planar as well. Finally, for $\ell=4$, notice that removing from $B_{5,4}$ the four edges that connect the vertex $u_5$ to the vertices $u_i$ with $i \in [4]$ gives a bipartite graph with $9$ vertices and $16$ edges, so it is not planar either. This completes the proof.
\end{proof}
\noindent
Note that the bound provided by Lemma~\ref{lemma:q_planar} is tight, as witnessed by the complete graph $K_4$, which is planar and satisfies $q(K_4)=4$ by Corollary~\ref{cor:q(Km)}.

\subsection{Concrete Families}

We turn to determining the non-adjacency witness number of Kneser graphs (see Section~\ref{sec:graphs}). Lov{\'{a}}sz~\cite{LovaszKneser} proved that, for positive integers $m$ and $r$ with $m \geq 2r$, the chromatic number of the Kneser graph $K(m,r)$ is $m-2r+2$. The latter is known to coincide with the minimum possible dimension of an orthogonal representation of $K(m,r)$ over any field~\cite{Haviv18topo,AlishahiM21}. Here we show that it also coincides with the non-adjacency witness number.

\begin{lemma}\label{lemma:q(K(m,r))}
For all positive integers $m$ and $r$ with $m \geq 2r$, it holds that $q(K(m,r))=m-2r+2$.
\end{lemma}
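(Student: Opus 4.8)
The plan is to reformulate the condition combinatorially and then establish matching upper and lower bounds. The first step is to describe common neighbors in $K(m,r)$ in terms of unions of $r$-sets: for a family $\mathcal{F}$ of $r$-subsets of $[m]$, an $r$-set $B$ is a common neighbor of the vertices corresponding to $\mathcal{F}$ if and only if $B$ is disjoint from $\bigcup_{A \in \mathcal{F}} A$, and such a $B$ exists precisely when $|[m] \setminus \bigcup_{A \in \mathcal{F}} A| \geq r$, that is, $|\bigcup_{A \in \mathcal{F}} A| \leq m - r$. Hence $\mathcal{F}$ has no common neighbor exactly when $|\bigcup_{A \in \mathcal{F}} A| \geq m - r + 1$, and $q(K(m,r))$ becomes the smallest $q$ such that every family of $r$-sets whose union has size at least $m - r + 1$ admits a subfamily of at most $q$ sets whose union still has size at least $m - r + 1$. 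Once this translation is in place, both inequalities are elementary.

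For the upper bound $q(K(m,r)) \leq m - 2r + 2$, I would run a greedy selection. Starting from an arbitrary set $A_1 \in \mathcal{F}$, whose union has size $r$, I would repeatedly add to the current subfamily a set $A_{j+1} \in \mathcal{F}$ not contained in the union $U_j$ accumulated so far; this is possible as long as $|U_j| < m - r + 1 \leq |\bigcup \mathcal{F}|$, and each such addition increases the union size by at least one. Hence after at most $(m - r + 1) - r = m - 2r + 1$ further steps the union reaches size at least $m - r + 1$, so the selected subfamily has at most $1 + (m - 2r + 1) = m - 2r + 2$ sets and still has no common neighbor; note that $m \geq 2r$ keeps all these quantities nonnegative.

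For the lower bound $q(K(m,r)) \geq m - 2r + 2$, I would exhibit an explicit witness family of size $q_0 := m - 2r + 2$. Put $A_i = [r-1] \cup \{r - 1 + i\}$ for $i \in [q_0]$; since $r - 1 + q_0 = m - r + 1 \leq m$, these are $q_0$ distinct $r$-subsets of $[m]$, all sharing the core $[r-1]$ and differing only in a single private element. Their union is $[m - r + 1]$, of size $m - r + 1$, so the family has no common neighbor; on the other hand, deleting any one $A_k$ reduces the union to $[m - r + 1] \setminus \{r - 1 + k\}$, of size $m - r$, which does admit a common neighbor. Therefore no subfamily of fewer than $q_0$ sets can certify the absence of a common neighbor, which yields $q(K(m,r)) \geq q_0$; together with the upper bound this proves $q(K(m,r)) = m - 2r + 2$. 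I do not anticipate a serious obstacle: the only step requiring a small idea is this lower-bound construction --- namely, attaching a distinct private element $r - 1 + i$ to a fixed core of size $r - 1$, which is exactly what makes each member of the family indispensable --- while everything else follows routinely once the reformulation of the first paragraph is established.
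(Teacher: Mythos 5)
Your proposal is correct and follows essentially the same route as the paper: the same reformulation of common neighbors via unions of $r$-sets, the same greedy selection for the upper bound, and the same witness family $[r-1]\cup\{r-1+i\}$ (a fixed core plus distinct private elements) for the lower bound.
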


\begin{proof}
Fix positive integers $m$ and $r$ with $m \geq 2r$. Consider the $m-2r+2$ vertices in $K(m,r)$ of the form $[r-1] \cup \{i\}$ for $i \in \{r, \ldots,m-r+1\}$. Observe that the union of these vertices includes $m-r+1$ elements, implying that no $r$-subset of $[m]$ is disjoint from all of them, hence they do not have a common neighbor. However, if one of these vertices is excluded, say $[r-1] \cup \{i\}$, then all the remaining vertices are adjacent to the vertex $([m] \setminus [m-r+1]) \cup \{i\} $. It follows that $q(K(m,r)) \geq m-2r+2$.

On the other hand, let $T$ be a collection of vertices with no common neighbor in $K(m,r)$. Since no $r$-subset of $[m]$ is disjoint from all the vertices of $T$, it follows that the union of the members of $T$ includes at least $m-r+1$ elements of $[m]$. We define a collection of vertices $T' \subseteq T$ as follows. Start with $T' = \{A\}$ for an arbitrary vertex $A \in T$, and as long as the total number of elements in the vertices of $T'$ is smaller than $m-r+1$, add to $T'$ a set from $T$ that includes an element not yet covered by any set in $T'$. This is possible, because the sets in $T$ include at least $m-r+1$ distinct elements. The first chosen set $A$ covers $r$ elements, and each additional set added to $T'$ contributes at least one new element, hence the collection $T'$ obtained in the process includes at most $m-2r+2$ sets. Since the union of the members of $T'$ includes at least $m-r+1$ elements, it follows that the vertices in $T'$ have no common neighbor. This, in turn, shows that $q(K(m,r)) \leq m-2r+2$, implying that $q(K(m,r))=m-2r+2$, as required.
\end{proof}

We further identify the non-adjacency witness number of cycles.
\begin{lemma}\label{lemma:q(cycle)}
For an integer $m \geq 3$, the value of $q(C_m)$ is $3$ if $m \in \{3,6\}$ and is $2$ otherwise.
\end{lemma}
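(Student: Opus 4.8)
The plan is to work directly from the observation that in $C_m$ every open neighborhood has exactly two vertices: identify the vertex set of $C_m$ with $\Z_m$, where $i$ is adjacent to $i\pm 1$, so that $N_{C_m}(v)=\{v-1,v+1\}$. Consequently a vertex $v$ is a common neighbor of a pair $\{a,b\}$ of distinct vertices if and only if $\{a,b\}=\{v-1,v+1\}$, that is, $a-b\equiv\pm 2\pmod m$; moreover the empty set and every single vertex have a common neighbor (since $m\ge 3$), so every set with no common neighbor has at least two elements. Throughout I will also invoke the bounds $\omega(C_m)\le q(C_m)\le\Delta(C_m)+1=3$ provided by Lemma~\ref{lemma:q_bounds}.

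For the two exceptional moduli the conclusion is immediate. If $m=3$ then $C_3$ is a triangle, so $\omega(C_3)=3$ and hence $q(C_3)=3$. If $m=6$, consider $T=\{0,2,4\}$: since $|N_{C_m}(v)|=2$ for every $v$, the three vertices of $T$ have no common neighbor, whereas the pairs $\{0,2\}$, $\{2,4\}$, $\{0,4\}$ each consist of two vertices at cyclic distance $2$ and therefore do have common neighbors (namely $1$, $3$, and $5$). Thus no $2$-element subset of $T$ certifies that $T$ has no common neighbor, so $q(C_6)\ge 3$, and combined with the upper bound, $q(C_6)=3$.

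It remains to show $q(C_m)\le 2$ for every $m\ge 3$ with $m\notin\{3,6\}$; together with $q(C_m)\ge\omega(C_m)=2$ this yields $q(C_m)=2$. Let $T\subseteq\Z_m$ be a set with no common neighbor that is minimal with respect to containment, and suppose toward a contradiction that $|T|\ge 3$. For every $t\in T$ the set $T\setminus\{t\}$ has a common neighbor $v_t$, so $T\setminus\{t\}\subseteq N_{C_m}(v_t)$, a set of size $2$; hence $|T|\le 3$, so $|T|=3$. Write $T=\{a,b,c\}$ and translate so that $a=0$. Since $\{0,b\}$ and $\{0,c\}$ each have a common neighbor, both $b$ and $c$ lie in $\{2,-2\}\pmod m$. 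If $m\mid 4$ — that is, $m=4$ in our range — then $2\equiv -2\pmod m$ forces $b=c$, contradicting distinctness. Otherwise $\{b,c\}=\{2,-2\}$ as a subset of $\Z_m$, and applying the common-neighbor criterion to the pair $\{b,c\}$ gives $(-2)-2\equiv\pm 2\pmod m$, i.e. $m\mid 6$ or $m\mid 2$; as $m\ge 3$ this forces $m\in\{3,6\}$, again a contradiction. Hence every containment-minimal set without a common neighbor has at most two elements, which is precisely the assertion $q(C_m)\le 2$.

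The argument is in essence a finite case check, so I do not anticipate a serious obstacle; the only delicate point is the behaviour of the residues $\pm 2$ modulo small $m$ — in particular $m=4$, where $2\equiv -2$ and the generic step degenerates and must be treated separately — together with keeping the divisibility conclusions $m\mid 4$, $m\mid 6$, $m\mid 2$ straight so that exactly the moduli $3$ and $6$ survive as the cases with $q(C_m)=3$.
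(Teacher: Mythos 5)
Your proof is correct. The lower bounds ($q(C_3)=3$ via the clique number, $q(C_6)\ge 3$ via the alternating triple $\{0,2,4\}$, and $q(C_m)\ge\omega(C_m)=2$ otherwise) coincide with the paper's. Where you diverge is the upper bound $q(C_m)\le 2$ for $m\notin\{3,6\}$: the paper observes that such a cycle contains no copy of $K_3$ or $C_6$, hence no copy of any $B_{3,\ell}$, and invokes Lemma~\ref{lemma:q_vs_B}; you instead unfold that lemma's argument concretely. Your steps --- take a containment-minimal bad set $T$, note that each $T\setminus\{t\}$ sits inside some open neighborhood $N_{C_m}(v_t)$ of size two (forcing $|T|=3$), and then derive a contradiction from the residue conditions $b,c\equiv\pm2$ and $b-c\equiv\pm2\pmod m$ --- are exactly the specialization of the $B_{m,\ell}$ machinery to cycles, with the forbidden-subgraph conclusion replaced by modular arithmetic. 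The paper's route buys reuse of a general lemma that also handles degenerate and planar graphs; yours buys a self-contained, fully elementary verification (modulo Lemma~\ref{lemma:q_bounds}) that makes the exceptional moduli $3,4,6$ visible as the divisors arising from $2\equiv-2$ and $4\equiv\pm2$. All the delicate points you flagged ($m=4$ where $2\equiv-2$, and the divisibility conclusions $m\mid 4$, $m\mid 6$, $m\mid 2$) are handled correctly.
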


\begin{proof}
For an integer $m \geq 3$, consider the cycle $C_m$ on $m$ vertices.
Note that $C_3$ is the complete graph on three vertices, so Corollary~\ref{cor:q(Km)} implies that $q(C_3)=3$.
For any $m \geq 4$, we have $\Delta(C_m)=2$ and $\omega(C_m)=2$, so Lemma~\ref{lemma:q_bounds} implies that $2 \leq q(C_m) \leq 3$. For $m = 6$, any set of three alternating vertices along the cycle has no common neighbor, while each of its $2$-subsets does. This shows that $q(C_6) \geq 3$, and thus $q(C_6)=3$. For $m \notin \{3,6\}$, the cycle $C_m$ contains no copy of $K_3$ nor $C_6$, hence it contains no copy of the graph $B_{3,\ell}$ for any $\ell$ with $0 \leq \ell \leq 3$. Lemma~\ref{lemma:q_vs_B} then gives $q(C_m) \leq 2$, implying that $q(C_m) = 2$.
\end{proof}

\subsection{Behavior on Random Graphs}

For a positive integer $n$, let $G(n,1/2)$ denote the random graph on a fixed set of $n$ labeled vertices, where each possible edge is included independently with probability $1/2$. If the graph $G(n,1/2)$ satisfies a given property with probability that tends to $1$ as $n$ tends to infinity, we say that the property holds {\em asymptotically almost surely}. We now turn to determining the typical non-adjacency witness number of $G(n,1/2)$.

\begin{theorem}\label{thm:G(n,1/2)}
The random graph $G = G(n,1/2)$ satisfies $q(G) \leq 2 \cdot \log n$ asymptotically almost surely.
\end{theorem}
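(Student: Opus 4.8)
The plan is a first-moment argument funneled through Lemma~\ref{lemma:q_vs_B}. Put $q := \lfloor 2\log n\rfloor + 1$. By Lemma~\ref{lemma:q_vs_B}, it is enough to show that, asymptotically almost surely, $G = G(n,1/2)$ contains no copy (induced or not) of $B_{q,\ell}$ for any integer $\ell$ with $0 \le \ell \le q$; indeed, this yields $q(G) \le q-1 = \lfloor 2\log n\rfloor \le 2\log n$. So I would fix $\ell$ and bound the probability that such a copy exists. The crucial observation is that $B_{q,\ell}$ has many automorphisms: the symmetric group $S_{q-\ell}$ acts by permuting the ``central'' vertices $u_{\ell+1},\dots,u_q$ (each of which is adjacent to every other vertex of $B_{q,\ell}$), while $S_\ell$ acts by simultaneously permuting the $\ell$ pendant pairs $(u_i,v_i)$ with $i \in [\ell]$; hence $|\mathrm{Aut}(B_{q,\ell})| \ge \ell!\,(q-\ell)!$. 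Consequently the number of subgraphs of $K_n$ isomorphic to $B_{q,\ell}$ is at most $n^{q+\ell}/(\ell!\,(q-\ell)!)$, each present in $G$ with probability $2^{-|E(B_{q,\ell})|}$, so a union bound gives $\Pr[\,G \supseteq B_{q,\ell}\,] \le n^{q+\ell}\,2^{-|E(B_{q,\ell})|}/(\ell!\,(q-\ell)!)$.

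Next I would substitute $|E(B_{q,\ell})| = \tfrac12(q^2-\ell^2+2q\ell-q-\ell)$ from~\eqref{eq:E(B_m,l)}, together with $\ell!\,(q-\ell)! = q!/\binom{q}{\ell} \ge q!/2^{q} \ge (q/2e)^{q}$ (using $q!\ge(q/e)^q$), and take base-$2$ logarithms, obtaining
\[
\log_2 \Pr[\,G \supseteq B_{q,\ell}\,] \;\le\; \Bigl((q+\ell)\log n - |E(B_{q,\ell})|\Bigr) \;-\; q\log_2\frac{q}{2e}.
\]
A short calculation---writing $a = q-\ell$, the first bracket becomes a convex quadratic in $a$ on $[0,q]$, so its maximum over $\ell$ is attained at $a=0$ or $a=q$, that is, at $B_{q,q}$ or at $K_q$---shows that the first bracket is at most $q$ for every $0 \le \ell \le q$. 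Hence $\log_2 \Pr[\,G\supseteq B_{q,\ell}\,] \le q - q\log_2\tfrac{q}{2e} = -q\bigl(\log_2\tfrac{q}{2e}-1\bigr)$, and since $q \ge 2\log n$ we have $\log_2\tfrac{q}{2e} \to \infty$, so this is at most $-\tfrac{q}{2}\log_2\log_2 n$ for all large $n$. Summing over the $q+1 = O(\log n)$ admissible values of $\ell$, the probability that $G$ contains some $B_{q,\ell}$ is at most $(q+1)\,2^{-(q/2)\log_2\log_2 n} = 2^{-\Theta(\log n\,\log\log n)} = o(1)$, which is exactly what is needed.

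The point where care is required is the use of the automorphism (equivalently, factorial) saving in the first-moment count. If one instead counted ordered embeddings of $B_{q,\ell}$, the bound would be $n^{q+\ell}\,2^{-|E(B_{q,\ell})|}$, whose logarithm is precisely the first bracket above and can be as large as $q = \Theta(\log n)$; so the expected number of ordered copies is polynomially large and a naive union bound fails. This is not an artifact: $B_{q,q}$ is the complete bipartite graph $K_{q,q}$ minus a perfect matching, and $G(n,1/2)$ genuinely contains such a subgraph for $q$ up to roughly $2\log n - 2\log\log n$. What rescues the argument is that $B_{q,\ell}$ has at least $\ell!\,(q-\ell)! \ge (q/2e)^q$ automorphisms, a superpolynomial factor that dominates the polynomially-large expected count of ordered copies; equivalently, it is the combination of the dense ``central clique'' of $B_{q,\ell}$ with its near-complete bipartite part, counted together in a single edge budget, that forces $B_{q,\ell}$ to disappear from $G(n,1/2)$ once $q$ exceeds $2\log n$.
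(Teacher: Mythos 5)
Your proposal is correct and follows essentially the same route as the paper's proof: a first-moment bound on copies of $B_{q,\ell}$ fed into Lemma~\ref{lemma:q_vs_B}, with the exponent handled by convexity in $\ell$ (maximum at the endpoints $K_q$ and $B_{q,q}$) and the polynomially large count of labeled copies killed by the factor $\ell!\,(q-\ell)!$. Your automorphism-based count $n^{q+\ell}/(\ell!\,(q-\ell)!)$ is numerically identical to the paper's product of binomials, and your slack of $+q$ in the exponent (versus the paper's exact vanishing at the endpoints) is harmlessly absorbed by the $(q/2e)^{-q}$ factor.
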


\begin{proof}
Let $G = G(n,1/2)$ be a random graph on $n$ vertices, and set $m = 2 \cdot \log n + 1$.
For an integer $0 \leq \ell \leq m$, let $X_{\ell}$ denote the random variable representing the number of copies of the graph $B_{m,\ell}$ in $G$ (see Definition~\ref{def:B_m,l}). We claim that its expectation satisfies
\[ \Expec{}{X_{\ell}} \leq  \binom{n}{m+\ell} \cdot \binom{m+\ell}{\ell} \cdot \binom{m}{\ell} \cdot \ell ! \cdot 2^{-(m^2-\ell^2+2m\ell-m-\ell)/2}.\]
Indeed, there are $\binom{n}{m+\ell}$ possible ways to choose $m+\ell$ vertices that participate in a copy of $B_{m,\ell}$, $\binom{m+\ell}{\ell}$ ways to divide them into two sets $\{u_1,\ldots, u_m\}$ and $\{v_1,\ldots, v_\ell\}$, and $\binom{m}{\ell} \cdot \ell !$ ways to choose from the first set the $\ell$ vertices identified as $u_1, \ldots, u_\ell$ and order them in a row. Then, the probability that this choice corresponds to a copy of $B_{m,\ell}$ is $1/2$ raised to the power of the number of edges in $B_{m,\ell}$, which is given in~\eqref{eq:E(B_m,l)}. This yields the desired bound on $\Expec{}{X_{\ell}}$. Using the inequality $\binom{n}{k} \leq \frac{n^k}{k!}$, we further obtain that
\begin{eqnarray*}
\Expec{}{X_{\ell}} & \leq & \frac{n^{m+\ell}}{(m+\ell)!} \cdot \frac{(m+\ell)!}{\ell ! \cdot m !} \cdot \frac{m!}{\ell ! \cdot (m-\ell)!} \cdot \ell ! \cdot 2^{-(m^2-\ell^2+2m\ell-m-\ell)/2} \\
& = & \frac{1}{\ell ! \cdot (m-\ell)!} \cdot 2^{(m+\ell)\cdot \log n -(m^2-\ell^2+2m\ell-m-\ell)/2}.
\end{eqnarray*}
Note that the exponent of $2$ in the above expression is a convex quadratic function of $\ell$, so it attains its maximum over the interval $[0,m]$ at one of the endpoints. For our choice of $m$, it vanishes at both endpoints, hence for every $0 \leq \ell \leq m$, it holds that
\[ \Expec{}{X_{\ell}} \leq \frac{1}{\ell ! \cdot (m-\ell)!} \leq \frac{1}{(\log n)!},\]
where the second inequality follows from the fact that either $\ell$ or $m-\ell$ exceeds $\log n$.

Finally, let $X = \sum_{\ell=0}^{m}{X_\ell}$ denote the random variable representing the total number of copies of the graphs $B_{m,\ell}$ in $G$ for $0 \leq \ell \leq m$. By linearity of expectation, it follows that $\Expec{}{X} \leq \frac{m+1}{(\log n) !}$, so $\Expec{}{X}$ tends to $0$ as $n$ tends to infinity. By Markov's inequality, the probability of the event $X \geq 1$ also tends to $0$ as $n$ tends to infinity. Therefore, the probability that there is no copy of $B_{m,\ell}$ in $G$ for any $\ell$ with $0 \leq \ell\leq m$ tends to $1$. By Lemma~\ref{lemma:q_vs_B}, this implies that $q(G) \leq m-1 = 2 \cdot \log n$ asymptotically almost surely, as desired.
\end{proof}

\begin{remark}
The bound provided by Theorem~\ref{thm:G(n,1/2)} is essentially tight.
Indeed, it is well known that the random graph $G=G(n,1/2)$ satisfies $\omega(G) \geq (2-o(1)) \cdot \log n$ asymptotically almost surely, with the $o(1)$ term tending to $0$ as $n$ tends to infinity (see, e.g.,~\cite[Chapter~7.2]{FriezeKBook}). By Lemma~\ref{lemma:q_bounds}, this implies that it satisfies $q(G) \geq (2-o(1)) \cdot \log n$ asymptotically almost surely as well.
\end{remark}

\section{Kernel Bounds for Concrete Target Graphs}\label{sec:app}

In this section, we present illustrative applications of our upper and lower bounds on the kernel size of $\HCol$ problems parameterized by the vertex cover number for various target graphs $H$.

\subsection{The Combinatorial Kernel}

We begin with upper bounds that follow from the combinatorial kernel given in Theorem~\ref{thm:Intro_kernel_q}, combined with our analysis of the non-adjacency witness number from Section~\ref{sec:NAWN}.

\begin{theorem}\label{thm:applications_q}
For each of the following cases, the $\HCol$ problem parameterized by the vertex cover number $k$ admits a kernel with $O(k^q)$ vertices and bit-size $O(k^q)$.
\begin{enumerate}
  \item $H$ is $d$-degenerate for an integer $d$, and $q=d+1$.
  \item $H$ is planar, and $q=4$.
  \item\label{itm:cycle} $H = C_{2m+1}$ for an integer $m \geq 2$, and $q=2$.
\end{enumerate}
\end{theorem}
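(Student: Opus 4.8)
The plan is to obtain each item of Theorem~\ref{thm:applications_q} as a direct consequence of the combinatorial kernel (Theorem~\ref{thm:Intro_kernel_q}) together with the relevant upper bound on the non-adjacency witness number proved in Section~\ref{sec:NAWN}. Recall that Theorem~\ref{thm:Intro_kernel_q} yields, for every graph $H$, a kernel for the $\HCol$ problem parameterized by the vertex cover number $k$ with $O(k^{q(H)})$ vertices and bit-size $O(k^{q(H)})$. Hence, once we verify that the value $q$ stated in a given case satisfies $q \ge q(H)$, the claimed kernel with $O(k^q)$ vertices and bit-size $O(k^q)$ follows immediately, since a kernel of size $O(k^{q(H)})$ is in particular of size $O(k^q)$. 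Equivalently, one may invoke Lemma~\ref{lemma:G'} directly with the stated $q$ (a legitimate choice because $q \ge q(H)$): the produced graph $G'$ has at most $k + \sum_{i=1}^{q}\binom{k}{i} = O(k^q)$ vertices, is encodable in $O(k^q)$ bits, and is constructible in polynomial time because $q$ is a fixed constant for fixed $H$.

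It then remains only to supply the three bounds on $q(H)$, each of which is already at hand. For the first item, if $H$ is $d$-degenerate then Lemma~\ref{lemma:d-deg} gives $q(H) \le d+1$, which is exactly the exponent $q = d+1$. For the second item, if $H$ is planar then Lemma~\ref{lemma:q_planar} gives $q(H) \le 4$; here I would remark that the weaker route through degeneracy (planar graphs are $5$-degenerate, so $q(H) \le 6$) only gives $q = 6$, and that it is the refined analysis of Lemma~\ref{lemma:q_planar} — ruling out the subgraphs $B_{5,\ell}$ via Euler's formula and applying Lemma~\ref{lemma:q_vs_B} — that yields $q = 4$. For the third item, if $H = C_{2m+1}$ with $m \ge 2$, then the cycle length $2m+1$ is odd and at least $5$, so in particular $2m+1 \notin \{3,6\}$; Lemma~\ref{lemma:q(cycle)} then gives $q(C_{2m+1}) = 2$, matching $q = 2$.

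I do not expect a genuine obstacle here: the statement is essentially a bookkeeping corollary, with all the real content residing in Theorem~\ref{thm:Intro_kernel_q} (whose correctness rests on Lemma~\ref{lemma:G'}) and in the structural bounds of Section~\ref{sec:NAWN}. The only point worth making explicit for clarity is that the stated $q$ may in some instances strictly exceed $q(H)$ — for example a bipartite planar $H$ has $q(H) \le 2$ — and that this does not affect the conclusion, since the kernel guarantee is monotone in the exponent. Accordingly, the proof will be short: cite Theorem~\ref{thm:Intro_kernel_q}, and then invoke Lemma~\ref{lemma:d-deg}, Lemma~\ref{lemma:q_planar}, and Lemma~\ref{lemma:q(cycle)} for the three cases respectively.
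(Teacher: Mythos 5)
Your proposal is correct and matches the paper's proof exactly: the paper derives the three items directly from Theorem~\ref{thm:Intro_kernel_q} combined, respectively, with Lemmas~\ref{lemma:d-deg},~\ref{lemma:q_planar}, and~\ref{lemma:q(cycle)}. Your additional remarks on monotonicity in the exponent and on invoking Lemma~\ref{lemma:G'} with any $q \ge q(H)$ are accurate but not needed beyond what the paper states.
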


\begin{proof}
The three items of the theorem immediately follow from Theorem~\ref{thm:Intro_kernel_q}, when combined, respectively, with Lemmas~\ref{lemma:d-deg},~\ref{lemma:q_planar}, and~\ref{lemma:q(cycle)}.
\end{proof}

\noindent
Recall that the kernel implied by~\cite{JansenP19color} for the $\HCol$ problem parameterized by the vertex cover number $k$ has $O(k^q)$ vertices and bit-size $O(k^q \cdot \log k)$, where $q = \Delta(H)$. Yet, Theorem~\ref{thm:applications_q} provides kernel sizes bounded by a fixed polynomial for the class of $d$-degenerate graphs with a fixed integer $d$ and for the class of planar graphs, despite their unbounded maximum degree. The third item of Theorem~\ref{thm:applications_q} essentially follows from~\cite{JansenP19color}, though with an additional logarithmic factor in the bit-size, and is included here primarily for completeness.

We next state a result concerning a typical target graph $H$. It follows by combining Theorem~\ref{thm:Intro_kernel_q} with Theorem~\ref{thm:G(n,1/2)}.

\begin{theorem}\label{thm:kernel_random}
Let $H = G(h,1/2)$ be a random graph on $h$ vertices. Then, with probability tending to $1$ as $h$ tends to infinity, the $\HCol$ problem parameterized by the vertex cover number $k$ admits a kernel with $O(k^q)$ vertices and bit-size $O(k^q)$ for $q = 2 \cdot \log h$.
\end{theorem}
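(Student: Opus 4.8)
The plan is to obtain Theorem~\ref{thm:kernel_random} as an immediate consequence of Theorem~\ref{thm:Intro_kernel_q} combined with the random-graph estimate of Theorem~\ref{thm:G(n,1/2)}. First I would fix $h$ and invoke Theorem~\ref{thm:G(n,1/2)} with $n = h$: it tells us that, with probability tending to $1$ as $h \to \infty$, the sampled graph $H = G(h,1/2)$ satisfies $q(H) \le 2 \cdot \log h$. From now on I would condition on this event, so that $H$ is a fixed graph whose non-adjacency witness number is at most $2 \cdot \log h$.

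For such a fixed $H$, Theorem~\ref{thm:Intro_kernel_q} directly supplies a kernel for the $\HCol$ problem parameterized by the vertex cover number $k$ with $O(k^{q(H)})$ vertices and bit-size $O(k^{q(H)})$. Since $q(H) \le 2 \cdot \log h$, a bound of $O(k^{q(H)})$ is in particular a bound of $O(k^{q})$ for $q = 2 \cdot \log h$, so the very same kernelization algorithm yields the claimed kernel. This completes the argument modulo two routine verifications, described next.

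The two points that deserve a moment's care are the uniformity of the hidden constants and the running time. For the first, one should observe that the $O(\cdot)$-notation in Theorem~\ref{thm:Intro_kernel_q} hides an absolute constant that does not grow with $H$: by Lemma~\ref{lemma:G'}, the output graph has at most $k + \sum_{i=1}^{q(H)}\binom{k}{i}$ vertices and is encoded in $\binom{k}{2} + \sum_{i=1}^{q(H)}\binom{k}{i}$ bits, and both expressions are $O(k^{q(H)})$ with an absolute constant (using, e.g., $\sum_{i=1}^{q}\binom{k}{i} \le 2k^{q}$ for $k \ge 2$). For the second, although $q(H)$ depends on $h$, for each realized graph $H$ it is simply a constant, so the algorithm of Theorem~\ref{thm:Intro_kernel_q} — which enumerates all subsets of the given vertex cover of size at most $q(H)$ — runs in polynomial time. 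There is thus no genuine obstacle here; the statement is a clean corollary of the two cited theorems. I would finally remark, via the remark following Theorem~\ref{thm:G(n,1/2)} together with Lemma~\ref{lemma:q_bounds}, that the exponent $2 \cdot \log h$ cannot in general be lowered to $(2 - \Omega(1)) \cdot \log h$, since $q(H) \ge \omega(H) \ge (2 - o(1)) \cdot \log h$ asymptotically almost surely.
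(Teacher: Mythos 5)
Your proposal is correct and follows exactly the paper's route: the paper derives Theorem~\ref{thm:kernel_random} precisely by combining Theorem~\ref{thm:Intro_kernel_q} with the asymptotically-almost-sure bound $q(H) \leq 2 \cdot \log h$ from Theorem~\ref{thm:G(n,1/2)}. Your additional checks on the uniformity of the hidden constants and on the running time, as well as the tightness remark via the clique number, are sound and consistent with the paper's discussion.
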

\noindent
It is well known that the maximum degree in a random graph $H = G(h,1/2)$ on $h$ vertices is, asymptotically almost surely, at least $h \cdot (1/2-o(1))$ (see, e.g.,~\cite[Chapter ~3.2]{FriezeKBook}). Therefore, for a typical graph $H$, Theorem~\ref{thm:kernel_random} provides a kernel for the $\HCol$ problem parameterized by the vertex cover number, where the degree of the polynomial that bounds the kernel size is exponentially smaller than that of the kernel from~\cite{JansenP19color}.

\subsection{The Algebraic Kernel}

We proceed with applications of our algebraic kernel from Theorem~\ref{thm:kernel_IR}. The first concerns the kernel complexity of problems associated with the dimension of orthogonal graph representations over finite fields. Recall that for an integer $d$ and a field $\Fset$, the $d\dODP_\Fset$ problem asks, given an input graph $G$, whether $G$ admits a $d$-dimensional orthogonal representation over $\Fset$ (see Definition~\ref{def:OR}). When parameterized by the vertex cover number, the input also includes a vertex cover of $G$, whose size $k$ is the parameter of the problem. We prove Theorem~\ref{thm:IntroOD}, which guarantees a kernel with $O(k^{d-1})$ vertices and bit-size $O(k^{d-1} \cdot \log k)$ for every $d \geq 3$ and every finite field $\Fset$. This nearly matches a lower bound from~\cite{HavivR24}.

\begin{proof}[ of Theorem~\ref{thm:IntroOD}]
Fix an integer $d \geq 3$ and a finite field $\Fset$. Let $H = H(\Fset,d)$ denote the graph whose vertices are all the non-self-orthogonal vectors in $\Fset^d$, where two such vectors $x,y \in \Fset^d$ are adjacent if and only if they satisfy $\langle x,y \rangle = 0$ over $\Fset$. Observe that a graph is $H$-colorable precisely when it admits a $d$-dimensional orthogonal representation over $\Fset$. Therefore, the $\HCol$ problem coincides with the $d\dODP_\Fset$ problem. The assignment that maps each vertex of $H$ to its associated vector in $\Fset^d$ forms a faithful $d$-dimensional orthogonal representation over $\Fset$, so, in particular, by Remark~\ref{remark:OD_ID}, the graph $H$ has a faithful $d$-dimensional independent representation over $\Fset$. Note that the field $\Fset$ is efficient, as are all finite fields, enabling us to apply Theorem~\ref{thm:kernel_IR} with the graph $H$ and the field $\Fset$, and thereby obtain a kernel for the $\HCol$ problem parameterized by the vertex cover number $k$ with $O(k^{d-1})$ vertices and bit-size $O(k^{d-1} \cdot \log k)$. The proof is now complete.
\end{proof}

The next application deals with the $\HCol$ problem parameterized by the vertex cover number $k$, where $H$ is a Kneser graph $K(m,r)$. Since the maximum degree in $K(m,r)$ is $\binom{m-r}{r}$, it follows from~\cite{JansenP19color} that the problem admits a kernel with $O(k^{\binom{m-r}{r}})$ vertices. A notable improvement is obtained by combining Theorem~\ref{thm:Intro_kernel_q} with Lemma~\ref{lemma:q(K(m,r))}, yielding a kernel with $O(k^{m-2r+2})$ vertices. The following theorem improves this by another linear factor.

\begin{theorem}\label{thm:kernel_K(m,r)}
For all positive integers $m$ and $r$ with $m > 2r$, the $K(m,r)$-$\Col$ problem parameterized by the vertex cover number $k$ admits a kernel with $O(k^{m-2r+1})$ vertices and bit-size $O(k^{m-2r+1} \cdot \log k)$.
\end{theorem}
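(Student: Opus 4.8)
The plan is to deduce this from the algebraic kernel of Theorem~\ref{thm:kernel_IR}. Put $d = m-2r+2$; since $m>2r$ we have $d\geq 3$, and by Lemma~\ref{lemma:q(K(m,r))} together with Lemma~\ref{lemma:q_faith}, any faithful independent representation of $K(m,r)$ has dimension at least $q(K(m,r)) = d$. The crux is that dimension exactly $d$ is attainable, which is the content of the companion Theorem~\ref{thm:IR_kneser}: for every sufficiently large field $\Fset$, the Kneser graph $K(m,r)$ has a faithful $d$-dimensional independent representation over $\Fset$. Granting this, we take $\Fset$ to be a sufficiently large finite field (all finite fields are efficient) and apply Theorem~\ref{thm:kernel_IR} with $H=K(m,r)$ and this value of $d$; it yields a kernel for the $K(m,r)$-$\Col$ problem parameterized by the vertex cover number $k$ with $O(k^{d-1}) = O(k^{m-2r+1})$ vertices and bit-size $O(k^{d-1}\cdot\log k) = O(k^{m-2r+1}\cdot\log k)$. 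Thus everything reduces to Theorem~\ref{thm:IR_kneser}.

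To prove Theorem~\ref{thm:IR_kneser} I would pass to the matrix reformulation in Remark~\ref{remark:IR_def}: over a large field it is equivalent to produce a matrix $M$, with rows and columns indexed by the $r$-subsets of $[m]$, of rank at most $d=m-2r+2$, such that $M_{A,B}=0$ if and only if $A\cap B=\emptyset$ (so, in particular, $M_{A,A}\neq 0$). The naive symmetric attempts do not reach this rank: choosing $M_{A,B}$ to be a function of $|A\cap B|$ that vanishes exactly at $0$ (such as $|A\cap B|$, or any $\Fset$-linear combination of the functions $\binom{|A\cap B|}{t}$) produces a matrix of the form $\sum_{i\in[m]}\beta_i u_i u_i^t$, where $u_i$ is the indicator vector of $\{A : i\in A\}$; to detect every singleton intersection one is forced to keep all $\beta_i\neq 0$, and the resulting rank is at most $m$, which for $r\geq 2$ is far above $m-2r+2$. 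The representation must therefore genuinely use the labels of the ground set and the chromatic structure of $K(m,r)$. My starting point is the homomorphism $K(m,r)\to K_{m-2r+2}$ witnessing $\chi(K(m,r))=m-2r+2$ (Lov\'{a}sz~\cite{LovaszKneser}); composing it with the standard faithful $d$-dimensional independent representation of $K_d$ (the standard basis of $\Fset^d$) gives a $d$-dimensional but non-faithful representation of $K(m,r)$, and the remaining task is to modify it---without raising the dimension---so as to destroy all false dependencies while keeping the genuine ones. The key combinatorial leverage is that every intersecting pair $A_0,B_0$ of $r$-subsets satisfies $|A_0\cup B_0| = 2r-|A_0\cap B_0| \leq 2r-1 = m-d+1$, so after relabeling $[m]$ the pair can be placed inside the last $2r-1$ ground elements; this lets one tailor, for each intersecting pair, a proper $d$-coloring of $K(m,r)$---hence a rank-$d$ matrix vanishing on all edges but nonzero at that pair---and then a genericity argument over a large field combines these into a single rank-$d$ matrix with exactly the prescribed zero pattern. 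The case $r=2$, $m=5$ is instructive: there $d=3$ and the desired matrix is the incidence matrix of a $(10_3)$-configuration in the projective plane with ``point $A$ on line $B$ iff $A\cap B=\emptyset$'', which is the Desargues configuration, classically realizable over $\Q$.

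I expect the main obstacle to be exactly this construction---exhibiting a faithful independent representation of $K(m,r)$ of dimension precisely $m-2r+2$, i.e.\ matching the non-adjacency witness number rather than the far larger $\Delta(K(m,r))+1 = \binom{m-r}{r}+1$ guaranteed by Lemma~\ref{lemma:faith_Delta+1}. The delicate point is that one must simultaneously satisfy all $\binom{m}{r}^2$ non-adjacency constraints while keeping the rank at $m-2r+2$; since linear combinations of matrices add ranks, the genericity step has to be organized so as not to inflate the rank (for instance, by fixing a single representation on one side and perturbing only the other, or by working inside an irreducible variety of bounded-rank matrices with all edge entries forced to zero). Once Theorem~\ref{thm:IR_kneser} is in hand, the rest---the passage through Remark~\ref{remark:IR_def}, choosing a large enough finite field, and invoking Theorem~\ref{thm:kernel_IR}---is routine.
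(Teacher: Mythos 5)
Your top-level reduction is exactly the paper's: set $d=m-2r+2$, invoke Theorem~\ref{thm:IR_kneser} to get a faithful $d$-dimensional independent representation of $K(m,r)$ over a large (efficient) field, and feed it to Theorem~\ref{thm:kernel_IR}. The problem is that you have not actually proved Theorem~\ref{thm:IR_kneser}, and as you yourself acknowledge, that construction is the entire content of the result. Your proposed route --- build, for each intersecting pair $(A_0,B_0)$, a rank-$d$ ``coloring'' matrix vanishing on all edges and nonzero at $(A_0,B_0)$, then combine them generically --- stalls precisely where you flag it: a generic linear combination of these matrices has rank up to $d$ times the number of summands, and the set of rank-$\le d$ matrices with zeros forced on the edge positions is not a linear space, so genericity buys you nothing without first establishing that this determinantal variety (or one of its irreducible components) contains a point that is nonzero at \emph{every} non-edge simultaneously. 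Your fallback of ``fixing one side and perturbing the other'' is circular: writing $M=X^tY$ and asking for a suitable $Y$ given $X$ reduces, via Remark~\ref{remark:IR_def}, to $x_A\notin\linspan(\{x_C \mid C\in N(B)\})$ for all intersecting $A,B$ --- which is exactly the faithful independent representation you are trying to construct.

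For comparison, the paper closes this gap with a concrete two-step construction. First, take a Vandermonde-type matrix $M\in\Fset^{(r-1)\times m}$ in which every $r-1$ columns are independent, and assign to each $r$-set $A$ a vector $x_A\in\Fset^m$ with $Mx_A=0$ and support exactly $A$. Faithfulness in $\Fset^m$ is then immediate from supports: if $A\cap B\neq\emptyset$, then $x_A$ has a nonzero coordinate inside $B$ while every vector in $U_B=\linspan(\{x_C\mid C\in N(B)\})$ vanishes on $B$, so $x_A\notin U_B$. Second, the parity-check constraints force $\dim(U_B)\le (m-r)-(r-1)=m-2r+1$, hence $\dim(U_B^A)\le m-2r+2$, and a general-position linear projection $\Fset^m\to\Fset^{m-2r+2}$ (Proposition~\ref{prop:general_pos}, Babai--Frankl) preserves all these dimensions simultaneously over a sufficiently large field, yielding the $d$-dimensional faithful representation. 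This is the step your proposal is missing; without it, the theorem is not established.
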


Theorem~\ref{thm:kernel_K(m,r)} follows directly from Theorem~\ref{thm:kernel_IR} combined with the following result.

\begin{theorem}\label{thm:IR_kneser}
For all positive integers $m$ and $r$ with $m \geq 2r$, there exists an integer $c=c(m,r)$ such that for every field $\Fset$ with $|\Fset| \geq c$, the graph $K(m,r)$ has a faithful $(m-2r+2)$-dimensional independent representation over $\Fset$.
\end{theorem}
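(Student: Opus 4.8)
The plan is to use the matrix reformulation of faithful independent representations from Remark~\ref{remark:IR_def}: it suffices to produce, over every field $\Fset$ with $|\Fset|\ge c$ for a suitable constant $c=c(m,r)$, a matrix $M$ with rows and columns indexed by the $r$-subsets of $[m]$, of rank at most $m-2r+2$, such that $M_{A,B}=0$ if and only if $A\cap B=\emptyset$. I would build $M$ from a univariate polynomial gadget. Fix distinct scalars $t_1,\dots,t_m\in\Fset$ and, for $S\subseteq[m]$, write $f_S(X)=\prod_{i\in S}(X-t_i)$. For an $r$-subset $A$ let $\hat f_A:=f_{[m]\setminus A}$, a monic polynomial of degree $m-r$, and for an $r$-subset $B$ let $g_B:=f_B$, of degree $r$. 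Fix also a scalar $\beta$ different from all the $t_i$, and set
\[
 M_{A,B}\;:=\;\bigl(\hat f_A \bmod g_B\bigr)(\beta),
\]
the remainder of $\hat f_A$ upon division by $g_B$, evaluated at $\beta$.

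The zero pattern is immediate in one direction: if $A\cap B=\emptyset$ then $B\subseteq[m]\setminus A$, so $g_B\mid\hat f_A$ and the remainder vanishes. If $A\cap B\neq\emptyset$, then $\hat f_A\bmod g_B$ is the polynomial of degree $<r$ interpolating $\hat f_A$ on the nodes $\{t_i:i\in B\}$; since $\hat f_A(t_i)=0$ for $i\in B\setminus A$ but $\hat f_A(t_i)\neq0$ for $i\in A\cap B$, this interpolant is a nonzero polynomial of degree $<r$, hence has fewer than $r$ roots. I would then pick $\beta$ avoiding the $t_i$ together with all roots of these interpolants over the finitely many pairs with $A\cap B\neq\emptyset$; choosing $c=c(m,r)$ larger than the resulting (purely combinatorial) bound, and also large enough for Remark~\ref{remark:IR_def}, guarantees such a $\beta$ exists and that $M_{A,B}=0$ exactly when $A\cap B=\emptyset$.

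The crux is the rank bound. Let $W:=\linspan\{\hat f_A : A\in\binom{[m]}{r}\}\subseteq\Fset[X]$. Differences $\hat f_A-\hat f_{A'}$ for complements differing by a single swap are nonzero scalar multiples of products $f_S$ with $|S|=m-r-1$, and downward induction then shows $f_S\in W$ for every $S$ with $|S|\le m-r$; these include a monic polynomial of each degree $\le m-r$, so $W=\Fset[X]_{\le m-r}$ has dimension $m-r+1$ (here $m\ge 2r$ ensures $r-1\le m-r$, which is used below). For each $B$ the assignment $\lambda_B\colon h\mapsto(h\bmod g_B)(\beta)$ is a linear functional, and the $A$-th row of $M$ is $(\lambda_B(\hat f_A))_B$; hence the row space of $M$ equals the image of the linear map $\Lambda\colon W\to\Fset^{\binom{[m]}{r}}$, $w\mapsto(\lambda_B(w))_B$, so $\rank M=\dim W-\dim\ker(\Lambda|_W)$. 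A polynomial $\psi$ with $\deg\psi\le r-1$ is its own degree-$<r$ interpolant on any $r$ nodes, so $\lambda_B(\psi)=\psi(\beta)$ for all $B$; conversely, if $w\in W$ lies in $\ker\lambda_B$ for every $B$, then the degree-$<r$ interpolants of $w$ on the various node sets of size $r$ pairwise agree — on their $r-1$ common nodes, plus they all vanish at $\beta$ — so they glue into one polynomial $\psi$ with $\deg\psi\le r-1$, $\psi(\beta)=0$ and $\psi(t_j)=w(t_j)$ for all $j$, forcing $w=\psi$ since $w-\psi$ has degree less than $m$ but $m$ roots. Thus $\ker(\Lambda|_W)=\{\psi:\deg\psi\le r-1,\ \psi(\beta)=0\}$ has dimension $r-1$, and $\rank M=(m-r+1)-(r-1)=m-2r+2$. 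Factoring $M=X^{t}Y$ with $X,Y\in\Fset^{(m-2r+2)\times\binom{[m]}{r}}$ and reading off their columns yields, as in Remark~\ref{remark:IR_def}, the desired faithful $(m-2r+2)$-dimensional independent representation of $K(m,r)$ over $\Fset$.

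I expect the rank computation to be the main obstacle: the right move is to recognize that the natural bilinear object here is ``reduce modulo $g_B$, then evaluate at $\beta$'', after which the two dimension counts ($\dim W=m-r+1$ and $\dim\ker(\Lambda|_W)=r-1$, the latter via the gluing argument) deliver the bound. The verification of the zero pattern, the avoidance of a finite exceptional set of $\beta$'s, and the final passage from $M$ to the representation are all routine given Remark~\ref{remark:IR_def}.
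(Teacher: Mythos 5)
Your proof is correct, but it takes a genuinely different route from the paper's. The paper works directly with vectors: it assigns to each $r$-set $A$ a vector $x_A \in \Fset^m$ supported exactly on $A$ and lying in the kernel of an $(r-1)\times m$ Vandermonde-type matrix, shows via a support argument that $x_A \notin U_B$ for non-adjacent $A,B$ while $\dim(U_B) \leq m-2r+1$, and then compresses from dimension $m$ down to $m-2r+2$ by invoking a general-position projection (Proposition~\ref{prop:general_pos}, due to Babai and Frankl). You instead build the $\binom{m}{r}\times\binom{m}{r}$ matrix $M_{A,B} = (\hat f_A \bmod g_B)(\beta)$ with the correct zero pattern and compute its rank \emph{exactly} as $\dim W - \dim\ker(\Lambda|_W) = (m-r+1)-(r-1) = m-2r+2$, then pass to the representation via Remark~\ref{remark:IR_def}. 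Your route is self-contained (no external general-position lemma) and pins down the rank precisely; the paper's route produces the representing vectors explicitly before projecting. Both require $|\Fset|$ large only as a function of $m$ and $r$ --- you need $m$ distinct nodes $t_i$ and a $\beta$ dodging the at most $(r-1)\binom{m}{r}^2$ roots of the nonzero interpolants, the paper needs the Vandermonde matrix and the general-position lemma. Two cosmetic remarks: the ``pairwise agree'' step should be read as agreement for pairs $B,B'$ with $|B\cap B'|=r-1$ (where the $r-1$ shared nodes together with $\beta$ give $r$ points of agreement for polynomials of degree below $r$), followed by connectivity of the Johnson graph to glue all interpolants into one $\psi$; and the matrix-to-representation direction of Remark~\ref{remark:IR_def} does not actually need the field to be large --- only the converse direction does --- so that part of your choice of $c$ is harmless but unnecessary.
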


The proof of Theorem~\ref{thm:IR_kneser} uses the following `general position' result.

\begin{proposition}[{\cite[Theorem~3.13]{BabaiF92}}]\label{prop:general_pos}
For positive integers $n,m,t$ with $m \geq t$, let $\Fset$ be a field with $|\Fset| > (m-t) \cdot (n+1)$, and let $U_1, \ldots, U_n$ be a collection of $n$ subspaces of $\Fset^m$. Then there exists a linear transformation $\varphi:\Fset^m \rightarrow \Fset^t$, such that for all $i \in [n]$, it holds that $\dim(\varphi(U_i)) = \min(\dim(U_i),t)$.
\end{proposition}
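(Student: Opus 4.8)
The plan is to reduce this general-position statement to one clean geometric assertion about a single subspace, and then to build that subspace greedily. First I would dispose of the degenerate cases: if $m=t$ the identity map already satisfies $\dim\varphi(U_i)=\dim U_i=\min(\dim U_i,t)$, and if $n=0$ there is nothing to prove, so from now on assume $m>t$ and $n\geq 1$. For each $i\in[n]$ I would set $s_i=\min(\dim U_i,t)$ and fix a subspace $W_i\subseteq U_i$ of dimension exactly $s_i$ (taking $W_i=U_i$ when $\dim U_i\leq t$). The point of this step is that it is enough to find a linear map $\varphi\colon\Fset^m\to\Fset^t$ that is \emph{injective on each $W_i$}: injectivity gives $\dim\varphi(U_i)\geq\dim\varphi(W_i)=s_i$, while $\dim\varphi(U_i)\leq\min(\dim U_i,t)=s_i$ holds automatically because $\varphi(U_i)$ is at once a quotient of $U_i$ and a subspace of $\Fset^t$; hence $\dim\varphi(U_i)=s_i$, as required.

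Next I would reduce the construction of $\varphi$ to the construction of its kernel. If I can produce a subspace $K\subseteq\Fset^m$ with $\dim K=m-t$ and $K\cap W_i=\{0\}$ for every $i$, then I pick a complement $C$ of $K$ (so $\dim C=t$), fix an isomorphism $C\cong\Fset^t$, and let $\varphi$ be the composition of the projection $\Fset^m\to C$ along $K$ with this isomorphism; then $\ker\varphi=K$, so for $w\in W_i$ we have $\varphi(w)=0$ iff $w\in K\cap W_i=\{0\}$, i.e.\ $\varphi$ is injective on each $W_i$. Note such a $K$ is not ruled out dimensionally, since $\dim K+\dim W_i\leq(m-t)+t=m$.

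The heart of the argument is the greedy construction of $K=\linspan(v_1,\dots,v_{m-t})$, one vector at a time, keeping the invariant that after $j$ steps the $v_1,\dots,v_j$ are linearly independent and $W_i\cap V_j=\{0\}$ for all $i$, where $V_j=\linspan(v_1,\dots,v_j)$; equivalently $\dim(W_i+V_j)=s_i+j$. This holds vacuously for $j=0$. To pass from $j-1$ to $j\leq m-t$, each subspace $W_i+V_{j-1}$ has dimension $s_i+(j-1)\leq t+(m-t)-1=m-1$, hence is a \emph{proper} subspace of $\Fset^m$, and I need a vector $v_j$ outside the union $\bigcup_{i=1}^n(W_i+V_{j-1})$ of these $n$ proper subspaces. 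This is where the field size enters: I would invoke the classical fact that a vector space over a field with more than $n$ elements is never a union of $n$ proper subspaces (the usual one-line pigeonhole: along an affine line disjoint from one of the subspaces, two of its points lying in a common remaining subspace would force the line's direction vector into that subspace). Since $m>t$ gives $(m-t)(n+1)\geq n+1>n$, the hypothesis $|\Fset|>(m-t)(n+1)$ comfortably yields $|\Fset|>n$, so such a $v_j$ exists; it automatically also lies outside $V_{j-1}\subseteq W_i+V_{j-1}$, so $v_1,\dots,v_j$ stay independent, and $v_j\notin W_i+V_{j-1}$ gives $\dim(W_i+V_j)=\dim(W_i+V_{j-1})+1=s_i+j$. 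After $m-t$ steps, $K=V_{m-t}$ has the required dimension $m-t$ and meets every $W_i$ trivially, and composing with the kernel-to-map construction finishes the proof.

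I do not anticipate a real obstacle here. The whole difficulty is front-loaded into the reduction, namely recognizing that all the simultaneous rank requirements ``$\dim\varphi(U_i)=\min(\dim U_i,t)$'' collapse into the single symmetric condition ``$K$ meets each $W_i$ trivially'', after which the only ingredient beyond elementary linear algebra is the subspace-covering lemma, and the greedy construction is routine. It is worth noting that the bound $|\Fset|>(m-t)(n+1)$ is far more than the greedy step consumes: the cleaner threshold $|\Fset|>n$ already suffices for everything above.
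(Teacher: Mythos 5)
The paper offers no proof of this proposition to compare against: it is quoted directly from Babai and Frankl (their Theorem~3.13) and used as a black box in the proof of Theorem~\ref{thm:IR_kneser}. Judged on its own, your argument is correct and complete. The reduction is sound: picking $W_i \subseteq U_i$ with $\dim W_i = s_i = \min(\dim U_i, t)$, it suffices that $\varphi$ be injective on each $W_i$, since $\dim \varphi(U_i) \le \min(\dim U_i, t)$ always holds; and a surjection $\varphi$ with $\ker \varphi = K$, where $\dim K = m-t$ and $K \cap W_i = \{0\}$ for all $i$, delivers exactly that. The greedy construction of $K$ is also fine: at step $j \le m-t$ each $W_i + V_{j-1}$ has dimension $s_i + j - 1 \le t + (m-t) - 1 = m-1$, so it is a proper subspace, and the classical fact that a vector space over a field with more than $n$ elements is not a union of $n$ proper subspaces (which the hypothesis supplies, since $|\Fset| > (m-t)(n+1) \ge n+1$ when $m > t$) provides the next vector while preserving the invariant $\dim(W_i + V_j) = s_i + j$. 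Your closing remark is also accurate: this route proves the statement under the much weaker hypothesis $|\Fset| > n$ (in fact $|\Fset| \ge n$ suffices, the sharp threshold for the covering lemma), so the bound $(m-t)(n+1)$ in the cited statement is an artifact of the original proof method rather than a necessity; this is harmless for the paper, which only invokes the proposition for sufficiently large fields. The one place where you lean on an external ingredient is the subspace-covering lemma, but your parenthetical sketch of it is essentially the standard pigeonhole argument (one should add the standard choice of a direction vector lying in the avoided subspace but in none of the others, via irredundancy of a putative cover), so nothing substantive is missing.
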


We are ready to prove Theorem~\ref{thm:IR_kneser}.

\begin{proof}[ of Theorem~\ref{thm:IR_kneser}]
Fix positive integers $m$ and $r$ with $m \geq 2r$ and a sufficiently large field $\Fset$.
Let $M \in \Fset^{(r-1) \times m}$ be a matrix, such that every set of $r-1$ of its columns is linearly independent. Such a matrix can be obtained, for instance, via a Vandermonde-type construction, namely, choose $m$ distinct field elements $\alpha_1, \ldots, \alpha_m \in \Fset$, and for each $i \in [m]$, set the $i$th column of $M$ to be $(1,\alpha_i,\ldots,\alpha_i^{r-2})^t$.

For every vertex $A$ in $K(m,r)$, let $x_A \in \Fset^m$ be a vector satisfying $M x_A = 0$, such that the support of $x_A$ is exactly $A$, meaning that the entries of $x_A$ corresponding to indices in $A$ are nonzero, and all others are zeros. To see that such a vector exists, fix the entries of $x_A$ with indices outside $A$ to be zero. Then, to determine the remaining $r$ entries of $x_A$, consider the $(r-1) \times r$ submatrix $M_A$ of $M$ formed by the columns indexed by $A$, and solve the system $M_Ay = 0$ for $y \in \Fset^{r}$. Since every set of $r-1$ columns of $M$ is linearly independent, fixing one entry of $y$ to a nonzero value yields a unique solution for $y$. Moreover, all entries of $y$ are nonzero, as otherwise, we would obtain $r-1$ columns of $M$ that are linearly dependent.

For every vertex $B$ in $K(m,r)$, let $U_B$ denote the subspace of $\Fset^m$ spanned by the vectors corresponding to the neighbors of $B$, that is, $U_B = \linspan(\{x_C \mid C \in N_{K(m,r)}(B)\})$. Further, for every pair of vertices $A$ and $B$, let $U^A_B$ denote the subspace of $\Fset^m$ spanned by the vector $x_A$ together with the vectors associated with the neighbors of $B$, that is, $U^A_B = \linspan(\{x_A\} \cup \{x_C \mid C \in N_{K(m,r)}(B)\})$. We observe that for every pair of non-adjacent vertices $A$ and $B$ in the graph $K(m,r)$, it holds that $\dim(U_B^A) = \dim(U_B)+1$. Indeed, if $A$ and $B$ are not adjacent, then they are not disjoint, so there exists some element $i \in A \cap B$. It follows that the $i$th entry of $x_A$ is nonzero, whereas the $i$th entry in every vector of $U_B$ is zero. This implies that $x_A \notin U_B$, and consequently, $\dim(U_B^A) = \dim(U_B)+1$, as desired.

We further observe that for every vertex $B$ in $K(m,r)$, it holds that $\dim(U_B) \leq m-2r+1$. To see this, note that the subspace $U_B$ is supported on the $m-r$ entries with indices in $[m] \setminus B$. Moreover, every vector $x \in U_B$ satisfies $Mx=0$. By $m \geq 2r$, the restriction of $M$ to the $m-r$ columns with indices in $[m] \setminus B$ contains an $(r-1) \times (r-1)$ submatrix of $M$. Since the columns of such a submatrix are linearly independent, so are its rows. It follows that the vectors of $U_B$ are supported on $m-r$ entries and satisfy a system of $r-1$ linearly independent equations on this support, implying that $\dim(U_B) \leq (m-r)-(r-1) = m-2r+1$, as required. This in turn yields that for every pair of non-adjacent vertices $A$ and $B$, $\dim(U^A_B) = \dim(U_B)+1 \leq m-2r+2$.

We finally apply Proposition~\ref{prop:general_pos} with $t=m-2r+2$ and with the subspaces $U_B$ for all vertices $B$ as well as the subspaces $U^A_B$ for all pairs of non-adjacent vertices $A,B$. Assuming that $\Fset$ is sufficiently large, and noting that the dimensions of all these subspaces do not exceed $t$, this yields a linear transformation $\varphi:\Fset^m \rightarrow \Fset^{m-2r+2}$ that preserves their dimensions. In particular, for every pair of non-adjacent vertices $A,B$, it holds that $\dim(\varphi(U^A_B)) = \dim(\varphi(U_B))+1$, which implies that $\varphi(x_A) \notin \varphi(U_B)$. Therefore, the assignment of the vector $\varphi(x_A) \in \Fset^{m-2r+2}$ to each vertex $A$ in $K(m,r)$ forms a faithful $(m-2r+2)$-dimensional independent representation over $\Fset$, thus completing the proof.
\end{proof}

\begin{remark}
The faithful independent representation of $K(m,r)$ provided by Theorem~\ref{thm:IR_kneser} attains the minimum possible dimension. Indeed, by Lemma~\ref{lemma:q_faith}, the dimension of any such representation is bounded from below by the non-adjacency witness number of $K(m,r)$, which, by Lemma~\ref{lemma:q(K(m,r))}, equals $m-2r+2$.
\end{remark}

Note that when $H=K(2r+1,r)$ for an integer $r \geq 1$,  Theorem~\ref{thm:kernel_K(m,r)} provides a kernel with $O(k^2)$ vertices and bit-size $O(k^2 \cdot \log k)$. This in particular holds for the Petersen graph $K(5,2)$, which is presented in Figure~\ref{fig:P} along with a faithful $3$-dimensional orthogonal (and thus independent) representation over $\R$.

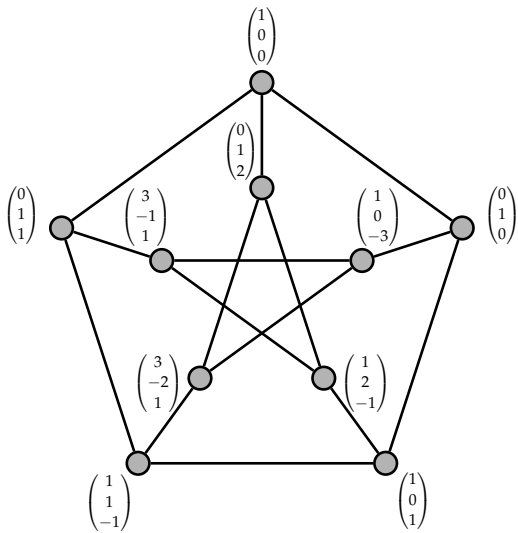
\begin{figure}[htbp]
    \centering
    \begin{tikzpicture}[
      vertex/.style={circle, draw=black, fill=gray!60, line width=1pt, minimum size=3mm, inner sep=0pt},
      vector/.style={draw=none, fill=none, font=\tiny},
      edge/.style={line width=1pt},
      scale=2.8
    ]

    \foreach \i in {0,...,4} {
      \node[vertex] (O\i) at ({90-72*\i}:1) {};
    }

    \foreach \i/\j in {0/5,1/6,2/7,3/8,4/9} {
      \node[vertex] (I\j) at ({90-72*\i}:0.5) {};
    }

    \foreach \i/\j in {0/1,1/2,2/3,3/4,4/0} {
      \draw[edge] (O\i) -- (O\j);
    }
    \foreach \i/\j in {5/7,7/9,9/6,6/8,8/5} {
      \draw[edge] (I\i) -- (I\j);
    }
    \foreach \i/\j in {0/5,1/6,2/7,3/8,4/9} {
      \draw[edge] (O\i) -- (I\j);
    }

    \node[vector] at ($(O0)+(90:0.22)$)   {$\begin{pmatrix} 1 \\ 0 \\ 0 \end{pmatrix}$};
    \node[vector] at ($(O1)+(18:0.2)$)   {$\begin{pmatrix} 0 \\ 1 \\ 0 \end{pmatrix}$};
    \node[vector] at ($(O2)+(-54:0.22)$)  {$\begin{pmatrix} 1 \\ 0 \\ 1 \end{pmatrix}$};
    \node[vector] at ($(O3)+(-126:0.23)$) {$\begin{pmatrix} 1 \\ 1 \\ -1 \end{pmatrix}$};
    \node[vector] at ($(O4)+(162:0.2)$)  {$\begin{pmatrix} 0 \\ 1 \\ 1 \end{pmatrix}$};

    \node[vector] at ($(I5)+(120:0.2)$)  {$\begin{pmatrix} 0 \\ 1 \\ 2 \end{pmatrix}$};
    \node[vector] at ($(I6)+(70:0.22)$)   {$\begin{pmatrix} 1 \\ 0 \\ -3 \end{pmatrix}$};
    \node[vector] at ($(I7)+(-7:0.2)$)   {$\begin{pmatrix} 1 \\ 2 \\ -1 \end{pmatrix}$};
    \node[vector] at ($(I8)+(187:0.2)$)  {$\begin{pmatrix} 3 \\ -2 \\ 1 \end{pmatrix}$};
    \node[vector] at ($(I9)+(110:0.22)$)  {$\begin{pmatrix} 3 \\ -1 \\ 1 \end{pmatrix}$};

    \end{tikzpicture}
    \caption{The Petersen graph --- A faithful $3$-dimensional orthogonal representation over $\R$.}
    \label{fig:P}
\end{figure}

\subsection{Lower Bounds}\label{sec:lowerConcrete}

We begin with applications of Theorem~\ref{thm:IntroLowerGen}, which supplies an almost quadratic lower bound on the compressibility of $\HCol$ problems parameterized by the number of vertices, whenever $H$ is a non-bipartite projective core graph. In~\cite{LaroseT02}, many classical graphs were shown to be projective, including complete graphs on at least three vertices, odd cycles, and Kneser graphs (see also~\cite{Larose02}). It is straightforward to verify that complete graphs on at least three vertices, as well as odd cycles, are non-bipartite cores. It is also known that for all positive integers $m$ and $r$ with $m > 2r$, the Kneser graph $K(m,r)$ is a core (see, e.g.,~\cite[Theorem~7.9.1]{GodsilRbook}), and its chromatic number is $m-2r+2 >2$, so it is not bipartite. This allows us to deduce the following result from Theorem~\ref{thm:IntroLowerGen}. 

\begin{theorem}\label{thm:LowerCycleKneser}
For each of the following cases, for any real $\eps >0$, the $\HCol$ problem parameterized by the number of vertices $n$ does not admit a compression of size $O(n^{2-\eps})$ unless $\NP \subseteq \coNPpoly$.
\begin{enumerate}
  \item\label{itm:complete} $H = K_{m}$ for an integer $m \geq 3$.
  \item\label{itm:cycleL} $H = C_{2m+1}$ for an integer $m \geq 1$.
  \item\label{itm:Kneser} $H = K(m,r)$ for integers $m$ and $r$ with $m > 2r$ and $r \geq 1$.
\end{enumerate}
\end{theorem}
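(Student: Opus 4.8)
The plan is to obtain all three items as immediate instances of Theorem~\ref{thm:IntroLowerGen}, whose single hypothesis is that the target graph $H$ is a non-bipartite projective core. Thus the whole task reduces to checking, for each of the three families, the three properties: being a \emph{core}, being \emph{non-bipartite}, and being \emph{projective}.

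The projectivity of each of these graphs is classical and is not something I would reprove: complete graphs on at least three vertices, odd cycles, and Kneser graphs $K(m,r)$ are all known to be projective by~\cite{LaroseT02} (see also~\cite{Larose02}), so I would simply invoke that result. The remaining two properties are then checked case by case. For $H = K_m$ with $m \geq 3$: it contains a triangle, hence is non-bipartite, and any homomorphism $K_m \to K_m$ is injective (adjacent vertices must receive distinct images and there are only $m$ of them), hence an automorphism, so $K_m$ is a core. For $H = C_{2m+1}$: it is an odd cycle, hence non-bipartite, and it is a core since every proper subgraph of a cycle is a disjoint union of paths, which is bipartite and therefore cannot be the homomorphic image of an odd cycle; equivalently, $C_{2m+1}$ admits no homomorphism to a proper subgraph of itself. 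For $H = K(m,r)$ with $m > 2r$ and $r \geq 1$: it is a core by~\cite[Theorem~7.9.1]{GodsilRbook}, and it is non-bipartite because, by Lov{\'{a}}sz's theorem~\cite{LovaszKneser}, its chromatic number equals $m-2r+2$, which is at least $3$ whenever $m > 2r$.

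Once these verifications are in place, each graph $H$ appearing in the statement is a non-bipartite projective core, so Theorem~\ref{thm:IntroLowerGen} applies directly and yields the claimed conclusion that the $\HCol$ problem parameterized by the number of vertices $n$ admits no compression of size $O(n^{2-\eps})$ unless $\NP \subseteq \coNPpoly$. I do not expect a genuine obstacle here beyond bookkeeping: the only point requiring a little care is ensuring that the cited projectivity statements cover exactly the parameter ranges in the theorem (in particular all $m > 2r$ for the Kneser family), and that the core property invoked for $K(m,r)$ is available in the form we need; both are standard and can be cited verbatim.
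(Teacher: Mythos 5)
Your proposal is correct and follows exactly the route the paper takes: it deduces all three items from Theorem~\ref{thm:IntroLowerGen} by citing the projectivity of these graphs from~\cite{LaroseT02} and verifying that each is a non-bipartite core (using~\cite[Theorem~7.9.1]{GodsilRbook} and the chromatic number $m-2r+2>2$ for the Kneser case). The only difference is that you spell out the ``straightforward'' core verifications for $K_m$ and $C_{2m+1}$ that the paper leaves to the reader.
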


We turn to a lower bound on the compressibility of the $\HCol$ problem for a typical graph $H$. It follows from~\cite{HellN92,LuczakN04} that a random graph $H=G(h,1/2)$ on $h$ vertices is asymptotically almost surely a projective core (see also~\cite{OkrasaR21,PiecykR21}). Moreover, such a graph is asymptotically almost surely non-bipartite, as indicated, for example, by its typical clique number. Combining these facts with Theorem~\ref{thm:IntroLowerGen}, we obtain the following result.

\begin{theorem}
Let $H = G(h,1/2)$ be a random graph on $h$ vertices. Then, with probability tending to $1$ as $h$ tends to infinity, for any real $\eps >0$, the $\HCol$ problem parameterized by the number of vertices $n$ does not admit a compression of size $O(n^{2-\eps})$ unless $\NP \subseteq \coNPpoly$.
\end{theorem}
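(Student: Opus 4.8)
The plan is to deduce the statement directly from Theorem~\ref{thm:IntroLowerGen} by verifying that a random graph $H=G(h,1/2)$ satisfies, with probability tending to $1$ as $h\to\infty$, all three hypotheses of that theorem: being non-bipartite, being a core, and being projective. Since each of these properties holds asymptotically almost surely, so does their conjunction (a finite intersection of high-probability events), and on the corresponding event the conclusion is immediate.

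First I would handle non-bipartiteness. As recalled in the remark following Theorem~\ref{thm:G(n,1/2)} (see also~\cite[Chapter~7.2]{FriezeKBook}), the random graph $G(h,1/2)$ satisfies $\omega(G(h,1/2))\geq(2-o(1))\cdot\log h$ asymptotically almost surely; in particular, for all sufficiently large $h$ it contains a triangle and is therefore non-bipartite with probability tending to $1$.

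Next, the core and projectivity requirements are exactly the content of the cited results: it follows from~\cite{HellN92} that $G(h,1/2)$ is asymptotically almost surely rigid, and hence a core (a finite graph is a core precisely when every endomorphism is an automorphism), while~\cite{LuczakN04} shows that $G(h,1/2)$ is asymptotically almost surely projective (see also~\cite{OkrasaR21,PiecykR21} for related discussions). Intersecting the three asymptotically almost sure events, we conclude that with probability tending to $1$ the graph $H=G(h,1/2)$ is a non-bipartite projective core, and then Theorem~\ref{thm:IntroLowerGen} yields that, for every real $\eps>0$, the $\HCol$ problem parameterized by the number of vertices $n$ admits no compression of size $O(n^{2-\eps})$ unless $\NP\subseteq\coNPpoly$, as desired.

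The argument presents essentially no obstacle of its own, since all the heavy lifting lies either in the reduction-based Theorem~\ref{thm:IntroLowerGen} or in the cited structural facts about $G(h,1/2)$. The only subtlety worth flagging is that one must invoke the random-graph literature in the right form: while non-bipartiteness and rigidity are elementary, the projectivity of $G(h,1/2)$ is the least trivial ingredient and relies crucially on~\cite{LuczakN04}.
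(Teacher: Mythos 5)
Your proposal is correct and follows essentially the same route as the paper: the paper likewise derives the statement by combining Theorem~\ref{thm:IntroLowerGen} with the facts that $G(h,1/2)$ is asymptotically almost surely a projective core (citing~\cite{HellN92,LuczakN04}) and asymptotically almost surely non-bipartite via its typical clique number. Your extra care in separating the rigidity/core and projectivity citations and in intersecting the three high-probability events is a faithful elaboration of the same argument.
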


To wrap up this section, we recall that a lower bound on the compression size of a graph problem when parameterized by the number of vertices also yields a lower bound on its kernel complexity under the vertex cover number parameterization, as the entire vertex set of a graph always constitutes a vertex cover. For certain graphs $H$, this gives a near-optimal lower bound on the kernel complexity of the $\HCol$ problem parameterized by the vertex cover number. One such case is when $H$ is an odd cycle, for which the upper bound from Item~\ref{itm:cycle} of Theorem~\ref{thm:applications_q} closely matches the lower bound implied by Item~\ref{itm:cycleL} of Theorem~\ref{thm:LowerCycleKneser}. Another case occurs when $H = K(2r+1,r)$ for an integer $r \geq 1$, as established by Theorem~\ref{thm:kernel_K(m,r)} and by Item~\ref{itm:Kneser} of Theorem~\ref{thm:LowerCycleKneser}. This completes the proof of Theorem~\ref{thm:IntroLower}. The following theorem extends this lower bound to all non-bipartite Kneser graphs, essentially matching the upper bound given in Theorem~\ref{thm:kernel_K(m,r)}.

\begin{theorem}\label{thm:LowerKneser}
For all positive integers $m$ and $r$ with $m>2r$ and for any real $\eps >0$, the $K(m,r)$-$\Col$ problem parameterized by the vertex cover number $k$ does not admit a compression of size $O(k^{m-2r+1-\eps})$ unless $\NP \subseteq \coNPpoly$.
\end{theorem}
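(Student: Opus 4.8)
The plan is to obtain Theorem~\ref{thm:LowerKneser} as a direct consequence of the general compression lower bound for projective core graphs established in Theorem~\ref{thm:IntroLowerGenQ(H)}, combined with our determination $q(K(m,r)) = m-2r+2$ from Lemma~\ref{lemma:q(K(m,r))}, while treating the smallest admissible case $m = 2r+1$ separately. The first step is to record the structural facts needed to invoke Theorem~\ref{thm:IntroLowerGenQ(H)} for $H = K(m,r)$: for every pair of positive integers $m > 2r$, the graph $K(m,r)$ is a core (see, e.g.,~\cite[Theorem~7.9.1]{GodsilRbook}), it is projective by~\cite{LaroseT02}, and it is non-bipartite since its chromatic number equals $m-2r+2 \geq 3$. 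Hence $K(m,r)$ is a non-bipartite projective core graph for all $m > 2r$.

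In the main case $m \geq 2r+2$, Lemma~\ref{lemma:q(K(m,r))} gives $q(K(m,r)) = m-2r+2 \geq 4$, so Theorem~\ref{thm:IntroLowerGenQ(H)} applies to $H = K(m,r)$. It yields that for any real $\eps > 0$, the $K(m,r)$-$\Col$ problem parameterized by the vertex cover number $k$ admits no compression of size $O(k^{q(K(m,r))-1-\eps}) = O(k^{m-2r+1-\eps})$ unless $\NP \subseteq \coNPpoly$, which is exactly the asserted bound.

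It remains to handle the boundary case $m = 2r+1$, where $m-2r+1 = 2$ and the hypothesis $q(H) \geq 4$ of Theorem~\ref{thm:IntroLowerGenQ(H)} fails (indeed $q(K(2r+1,r)) = 3$). Here the claim reduces to ruling out a compression of size $O(k^{2-\eps})$ for $K(2r+1,r)$-$\Col$ parameterized by the vertex cover number, which follows from Item~\ref{itm:Kneser} of Theorem~\ref{thm:LowerCycleKneser}: that result forbids, under $\NP \not\subseteq \coNPpoly$, a compression of size $O(n^{2-\eps})$ for the parameterization by the number of vertices $n$, and since the whole vertex set of a graph is a vertex cover, any compression of size $O(k^{2-\eps})$ for the vertex cover parameterization would in particular give one of size $O(n^{2-\eps})$ for the $n$-parameterization, a contradiction. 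The proof carries no real difficulty; the only point that needs attention is this case split, since Theorem~\ref{thm:IntroLowerGenQ(H)} is precisely vacuous when $q(H) = m-2r+2 = 3$, and that single residual case is absorbed by the already established near-quadratic lower bound.
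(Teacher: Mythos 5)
Your proposal is correct and follows exactly the same route as the paper's proof: establish that $K(m,r)$ is a non-bipartite projective core, apply Theorem~\ref{thm:IntroLowerGenQ(H)} together with Lemma~\ref{lemma:q(K(m,r))} for $m \geq 2r+2$, and absorb the case $m=2r+1$ via Item~\ref{itm:Kneser} of Theorem~\ref{thm:LowerCycleKneser} using the fact that the full vertex set is a vertex cover. Your explicit remark that $q(K(2r+1,r))=3$ is precisely why the case split is needed is a helpful clarification, but the argument is the same.
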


\begin{proof}
As noted at the beginning of the section, for positive integers $m$ and $r$ with $m>2r$, the graph $K(m,r)$ is a non-bipartite projective core. By Lemma~\ref{lemma:q(K(m,r))}, we have $q(K(m,r))=m-2r+2$. The statement of the theorem thus follows for $m \geq 2r+2$ from Theorem~\ref{thm:IntroLowerGenQ(H)}. The remaining case of $m=2r+1$ already follows from Item~\ref{itm:Kneser} of Theorem~\ref{thm:LowerCycleKneser}, so we are done.
\end{proof}

\section{Concluding Remarks}\label{sec:conclude}

This paper studies the kernelization complexity of $\HCol$ problems parameterized by the vertex cover number and presents two kernels for them. The first is purely combinatorial, with size governed by the non-adjacency witness number of $H$, while the second is more algebraic in nature, relying on the existence of a low-dimensional faithful independent representation of $H$ over some field. We have shown that the combinatorial kernel, despite its simplicity, achieves significant improvements over prior work for certain graphs $H$, and that the algebraic kernel can sometimes save a further linear factor in the kernel size. Moreover, we have provided lower bounds on the kernel size for projective graphs $H$ and have demonstrated that in certain cases our kernels attain near-optimal bounds.

The natural challenge raised by this work is to determine, for general graphs $H$, the exact polynomial behavior of the kernel complexity of the $\HCol$ problem parameterized by the vertex cover number. In particular, it would be valuable to obtain lower bounds on the kernel complexity of the problem for non-bipartite core graphs $H$ that are not projective. For such graphs $H$, it would also be interesting to decide whether the $\HCol$ problem admits no sub-quadratic compression when parameterized by the number of vertices. These are the remaining unresolved instances of a question raised in~\cite{ChenJOPR23}. Finally, it would be of interest to explore the kernel complexity of $\HCol$ problems under alternative parameterizations, a direction that has already seen several developments, e.g., in~\cite{JansenK13,JansenP19color,HavivR24}.

\bibliographystyle{abbrv}
\bibliography{Hcol}

\end{document}